\newcommand{\BibTeX}{\rm B\kern-.05em{\sc i\kern-.025em b}\kern-.08em\TeX}
\newcommand{\cmark}{\ding{51}}
\tikzset{snake it/.style={decorate, decoration=snake}}
\colorlet{mygray}{gray!40}
\let\oldnl\nl
\newcommand{\nonl}{\renewcommand{\nl}{\let\nl\oldnl}}
\newtheorem{dfn}{Definition}
\newtheorem{definition}{Definition}
\newtheorem{example}{Example}
\newtheorem{lemma}{Lemma}
\newtheorem{claim}{Claim}
\Crefname{claim}{Claim}{Claims}
\Crefname{claim}{Claim}{Claims}
\Crefname{corollary}{Corollary}{Corollaries}
\Crefname{definition}{Definition}{Definitions}
\Crefname{example}{Example}{Examples}
\Crefname{lemma}{Lemma}{Lemmas}
\Crefname{property}{Property}{Properties}
\Crefname{proposition}{Proposition}{Propositions}
\Crefname{remark}{Remark}{Remarks}
\Crefname{theorem}{Theorem}{Theorems}
\def\mrk{\mathsf{Mark}}
\def\supp{\mathsf{Sup}}
\def\usupp{\mathsf{Un}\supp}
\newcommand{\A}{\mathcal{A}}
\newcommand{\C}{\mathcal{C}}
\newcommand{\CISP}{\textup{CISP}}
\newcommand{\EF}[1]{\ifstrempty{#1}{\textrm{\textup{EF}}}{\textrm{\textup{EF{$#1$}}}}}
\newcommand{\EFX}{\textrm{\textup{EFX}}}
\newcommand{\F}{\mathcal{F}}
\newcommand{\MMS}{\textrm{\textup{MMS}}}
\newcommand{\PO}{\textup{PO}}
\definecolor{DarkGreen}{rgb}{0.1,0.5,0.1}
\newcommand{\T}{\mathcal{T}}
\newcommand{\V}{\mathcal{V}}
\newcommand{\X}{\mathbf{X}}
\newcommand{\Y}{\mathbf{Y}}
\newcommand{\abs}[1]{\lvert #1 \rvert}
\theoremstyle{remark}
\title{Fair Interval Scheduling of Indivisible Chores}
\author{
	\begin{tabular}{m{0.12\textwidth}m{0.12\textwidth}m{0.12\textwidth}m{0.12\textwidth}m{0.12\textwidth}m{0.12\textwidth}}
		\multicolumn{2}{c}{\textbf{Sarfaraz Equbal}} & \multicolumn{2}{c}{\textbf{Rohit Gurjar}} & \multicolumn{2}{c}{\textbf{Yatharth Kumar}}\\
		\multicolumn{2}{c}{\small{IIT Bombay}} & \multicolumn{2}{c}{\small{IIT Bombay}} & \multicolumn{2}{c}{\small{IIT Delhi}}\\
		\multicolumn{2}{c}{\href{mailto:sequbal@cse.iitb.ac.in}{\small{\texttt{sequbal@cse.iitb.ac.in}}}} & \multicolumn{2}{c}{\href{mailto:rgurjar@cse.iitb.ac.in}{\small{\texttt{rgurjar@cse.iitb.ac.in}}}}
		& \multicolumn{2}{c}{\href{mailto:cs1200413@iitd.ac.in}{\small{\texttt{cs1200413@iitd.ac.in}}}}\\
		&&&&&\\
		\multicolumn{2}{c}{\textbf{Swaprava Nath}} & \multicolumn{2}{c}{\textbf{Raghuvansh R. Saxena}} & \multicolumn{2}{c}{\textbf{Rohit Vaish}}\\
		\multicolumn{2}{c}{\small{IIT Bombay}} & \multicolumn{2}{c}{\small{Tata Institute of Fundamental Research}} & \multicolumn{2}{c}{\small{IIT Delhi}}\\
		\multicolumn{2}{c}{\href{mailto:swaprava@cse.iitb.ac.in}{\small{\texttt{swaprava@cse.iitb.ac.in}}}} &\multicolumn{2}{c}{\href{mailto:raghuvansh.saxena@gmail.com}{\small{\texttt{raghuvansh.saxena@gmail.com}}}} & \multicolumn{2}{c}{\href{mailto:rvaish@iitd.ac.in}{\small{\texttt{rvaish@iitd.ac.in}}}}\\
	\end{tabular}
}
\date{}
\begin{document}

\maketitle

\begin{abstract}
We study the problem of fairly assigning a set of discrete tasks (or chores) among a set of agents with additive valuations. Each chore is associated with a start and finish time, and each agent can perform at most one chore at any given time. The goal is to find a fair and efficient schedule of the chores, where fairness pertains to satisfying \emph{envy-freeness up to one chore} (\EF{1}) and efficiency pertains to \emph{maximality} (i.e., no unallocated chore can be feasibly assigned to any agent). Our main result is a polynomial-time algorithm for computing an \EF{1} and maximal schedule for two agents under \emph{monotone} valuations when the conflict constraints constitute an \emph{arbitrary} interval graph. The algorithm uses a coloring technique in interval graphs that may be of independent interest. For an arbitrary number of agents with identical additive valuations, we show the existence of an EF1 and maximal schedule when the constraints constitute a path graph. This result uses a reduction to the ``cycle-plus-triangles'' theorem. Using different techniques, we provide an efficient algorithm for finding such a schedule when there are four or more agents and the valuations are further assumed to be dichotomous. We also show that stronger fairness and efficiency properties, including envy-freeness up to any chore (\EFX{}) along with maximality and \EF{1} along with Pareto optimality, cannot be achieved.
\end{abstract}

\newpage
\tableofcontents
\newpage

\section{Introduction}
\label{sec:Introduction}

Fair allocation of indivisible resources has become a significant area of study within economics, operations research, and computer science~\citep{BT96fair,BCE+16handbook,M19fair}. The main goal here is to distribute a set of discrete resources among agents with differing preferences in a way that provably ensures fairness and economic efficiency. This field has generated extensive theoretical and practical interest in recent years. On the theoretical side, various fairness notions have been developed and a wide range of algorithmic techniques have emerged~\citep{AAB+23fair}. In terms of practical applications, there are various areas of use, such as course allocation~\citep{BCK+17course}, public housing allocation~\citep{BCH+20price}, and inheritance division~\citep{GP15spliddit}.

It is worth noting that while the aforementioned settings involve resources that are considered \emph{desirable} (also known as \emph{goods}), there are many real-world situations where a fair distribution of \emph{undesirable} resources (also known as \emph{chores}) is needed~\citep{G78aha}. Common examples of such situations include the division of household tasks like cooking or cleaning~\citep{IY23kajibuntan}, as well as the distribution of responsibilities for tackling global issues such as climate change among countries~\citep{T02fair}.

The problem of fair division of indivisible chores involves a set of discrete resources for which agents have non-positive values. The goal is to assign each chore to exactly one agent such that the final allocation is fair. A well-studied notion of fairness is \emph{envy-freeness}~\citep{GS58puzzle,F67resource} which requires that each agent weakly prefers its bundle over any other agent's. However, due to the discrete nature of the tasks, an envy-free allocation may not always exist. This has led to the study of approximations such as \emph{envy-freeness up to one chore} (\EF{1}) which bounds the pairwise envy by the removal of some chore in the envious agent's bundle~\citep{B11combinatorial,ACI+19fair}. Unlike exact envy-freeness, an \EF{1} allocation of chores is guaranteed to exist even under general monotone valuations~\citep{LMM+04approximately,BSV21approximate}.

A common assumption in the fair division literature is that any item can be feasibly assigned to any agent. This assumption may not hold in many settings of interest. For example, in course allocation, a student can only attend at most one course at any given time. Similarly, in assigning volunteers to conference sessions, temporal overlaps may need to be taken into account. In such settings, it is more natural to model \emph{conflicts} among the items and allow only feasible (or non-conflicting) allocations.

We formalize the problem of fair and efficient scheduling of indivisible chores under conflict constraints. Each chore is associated with a start time and a finish time. Indivisibility dictates that a chore can be assigned to at most one agent. An agent can perform at most one chore at a time; furthermore,  a chore once started must be performed until its completion. By modeling the chores as vertices of a graph and capturing temporal conflicts with edges, we obtain the problem of dividing the vertices of an \emph{interval} graph among agents such that each agent gets an independent subset. Note that due to conflicts, it may not be possible to allocate all chores. Thus, we ask for schedules to be \emph{maximal}, i.e., it should not be possible to assign any agent an unallocated chore without creating a conflict.
 
\subsection*{Our Contributions}

We initiate the study of fair 
and efficient 
interval scheduling of indivisible chores under conflict constraints and make the following contributions:

\paragraph{Non-existence results:} In \Cref{sec:Non-Existence-Results}, we show that the strongest approximation notion---envy-freeness up to \emph{any} chore (\EFX{})---may not be compatible with maximality. By weakening the fairness requirement to envy-freeness up to one chore (\EF{1}) but strengthening the efficiency requirement to Pareto optimality, we again obtain a non-existence result (see \Cref{fig:relations}). Notably, our negative results hold even for two agents with identical valuations and even when the conflict graph is a path graph. Thus, we focus on \EF{1} and maximality in search of positive results. In \Cref{sec:Limitations}, we highlight the limitations of algorithms from the unconstrained setting---specifically, round robin and envy-cycle elimination---in computing an \EF{1} and maximal schedule.
\paragraph{Results for two agents:} In \Cref{sec:Results_Two_Agents}, we prove our main result: A polynomial-time algorithm for finding an \EF{1} and maximal schedule for two agents under general \emph{monotone} valuations and \emph{any} interval graph. Our analysis develops a novel notion of \emph{adjacent} schedules and uses a \emph{coloring} technique that may be of independent interest.
\paragraph{Results for an arbitrary number of agents:} In \Cref{sec:Results_n_Agents}, we consider the case of an arbitrary number of agents. We show that under \emph{identical} additive valuations, an \EF{1} and maximal schedule always exists for three or more agents for a \emph{path} graph~(\Cref{thm:n_Agents_Identical}). 
Thus, together with the aforementioned result for two agents in \Cref{sec:Results_Two_Agents}, we resolve the existence question for any number of identical agents on a path. The question of finding such schedules in polynomial time remains open (except for the case of four agents). Notably, our proof uses the ``cycle plus $n$-cliques'' theorem, which generalizes a well-known conjecture of Paul Erd\H{o}s. It is the first instance of this theorem being applied to fair division that we are aware of. 

If the valuations are additionally assumed to be dichotomous and there are four or more agents, such a schedule can be found in polynomial time. Finally, we present an efficient algorithm for computing an \EF{1} and maximal schedule for $n$ identical additive agents for graphs where each connected component is of size at most $n$~(\Cref{thm:n_Agents_Identical_Bounded_Components}).

\begin{figure}[t]
	\begin{center}
	    \scalebox{1}{
	       \begin{tikzpicture}[scale=0.9, every node/.style={scale=0.9}]
                \centering
	            \tikzstyle{onlytext}=[]
	            \tikzset{venn circle/.style={circle,minimum width=0mm,fill=#1,opacity=0.1}}
                    \node[onlytext] (Result) at (6.5,-0.75) {{Results}};
                    %
                    \draw [line width=28pt,opacity=0.4,color=red!55,line cap=round,rounded corners] (-0.5,-2) -- (0.5,-2) -- (2.5,-4.5) -- (3.5,-4.5) -- (5,-2) -- (7.75,-2);
                    \draw [line width=28pt,opacity=0.4,color=red!55,line cap=round,rounded corners] (-0.5,-4.5) -- (0.5,-4.5) -- (2.5,-2) -- (3.5,-2) -- (5,-3.25) -- (7.75,-3.25);
                    \draw [line width=28pt,opacity=0.4,color=green!55,line cap=round,rounded corners] (-0.5,-4.5) -- (8.1,-4.5);
	            \node[onlytext] () at (6.5,-2) {\begin{tabular}{c}{Can fail to exist}\\{(\Cref{eg:EFX_Maximal_Counterexample})}\end{tabular}};
                    %
                    %
	            \node[onlytext] () at (6.5,-3.25) {\begin{tabular}{c}{Can fail to exist}\\{(\Cref{eg:EF1_PO_line_graph_counterexample})}\end{tabular}};
                    %
                    %
	            \node[onlytext] () at (6.5,-4.5) {\begin{tabular}{c}{Exists in certain settings}\\{(See \Cref{tab:Summary})}\end{tabular}};
                    \draw[-, line width=1pt] (-1,-1.25) -- (7.5,-1.25);
                    \node[onlytext] (Efficiency) at (3,-0.75) {\begin{tabular}{c}{Efficiency}\\{Notions}\end{tabular}};
                    \node[onlytext] (PO) at (3,-2) {\PO{}};
                    \node[onlytext] (Maximal) at (3,-4.5) {Maximal};
                    \node[onlytext] (Fairness) at (0,-0.75) {\begin{tabular}{c}{Fairnesss}\\{Notions}\end{tabular}};
                    \node[onlytext] (EFX) at (0,-2) {\EFX{}};
                    \node[onlytext] (EFone) at (0,-4.5) {\EF{1}};
                    \draw[->, line width=1pt] (EFX) -- (EFone);
                    \draw[->, line width=1pt] (PO) -- (Maximal);
	       \end{tikzpicture}
            }
	\end{center}
	\caption{Summary of our results. The arrows denote logical implications between fairness and efficiency notions. The positive and negative results are shown in green and red, respectively.}
	\label{fig:relations}
\end{figure}

\begin{table}[t]
\centering
\begin{tabular}{ccc}
No. of agents & Path graph & Interval graph \\
\midrule
\multirow{2}{*}{$n=2$} & \cmark{} for monotone valuations & \cmark{} for monotone valuations \\ 
& (\Cref{{thm:Two_Agents_Line_Graph}}) & (\Cref{thm:Two_Agents_EF1_Interval})\\
\midrule
\multirow{3}{*}{arbitrary $n$} & \cmark{} for identical additive & \cmark{} for identical additive valuations\\
& valuations and $n \geq 3$ & and bounded components\\
& (\Cref{thm:n_Agents_Identical_Dichotomous}) & (\Cref{thm:n_Agents_Identical_Bounded_Components})\\
\end{tabular}
\vspace{0.1in}
\caption{Summary of our results for \EF{1} and maximality. In each cell, a \cmark{} denotes that an \EF{1} and maximal schedule always exists under the assumptions on the number of agents (rows) and the conflict graph (column). With the exception of the case of arbitrary $n$ and path graph, our existence results also provide efficient algorithms. In this case, we get a polynomial-time algorithm for four or more agents with the additional restriction of dichotomous valuations.}
\label{tab:Summary}
\end{table}

\paragraph{Follow-up work.} After the publication of a preliminary version of this work in AAMAS 2024~\citep{EGK+24fair}, a beautiful work of \cite{IMY25dividing} further considered this problem and significantly extended our results. In particular, they show a much more general version of our main result (\cref{thm:Two_Agents_Line_Graph}), extending the theorem from path graphs to any interval or bipartite graph. Additionally, they extend the existence guarantee to \emph{all} graphs, at the cost of changing the running time guarantee from polynomial to pseudo-polynomial. We recommend the reader to refer to their results and proof as it further refines our notion of adjacency (\cref{def:adjacent}) to what they call a ``gapless chain'', maintaining the properties that we require and at the same time, making the argument more intuitive and much simpler.

\subsection{Related Work}
\label{subsec:Related_Work}

Fair allocation of indivisible resources has been predominantly studied in the \emph{unconstrained} setting wherein there are no feasibility constraints on the resources. This problem is a special case of our model when the conflict graph contains no edges. For indivisible \emph{goods}, it is known that an allocation that is envy-free to up one good~(\EF{1}) always exists. Furthermore, the round robin algorithm returns such an allocation in polynomial time under additive valuations, while the envy-cycle elimination algorithm finds an \EF{1} allocation for the larger class of monotone valuations~\citep{LMM+04approximately}. Additionally, under additive valuations, an allocation that is 
\EF{1} and Pareto optimal $(\PO{})$ is known to always exist~\citep{CKM+19unreasonable} and such an allocation can be computed in pseudopolynomial time~\citep{BKV18Finding}. The existence of an allocation satisfying envy-freeness up to any good~(\EFX{}) remains unresolved for four or more agents~\citep{CGM20efx}; though, it is known that \EFX{} is incompatible with Pareto optimality~\citep{PR20almost}.

For indivisible \emph{chores}, an allocation satisfying envy-freeness up to one chore (\EF{1}) is known to exist under monotone valuations~\citep{BSV21approximate}. However, for additive valuations, it is not known if \EF{1} and Pareto optimality can be simultaneously achieved for four or more agents~\citep{GMQ23new} or if envy-freeness up to any chore (\EFX{}) can be satisfied for three or more agents. \footnote{Though, positive results are known for dichotomous valuations~\citep{GMQ22fair,EPS22how,ZW22approximately}.} 
Just as with goods, \EFX{} and Pareto optimality are known to be incompatible for chores~\citep{PR20almost}.

A growing line of research in fair division has studied \emph{feasibility constraints} on the resources~\citep{S21constraints}. For example, several works model the items as vertices of a graph and require each agent's bundle to constitute a connected subgraph~\citep{BCE+17fair,BCL19chore,BCF+22almost}. By contrast, our work requires the bundles to be independent sets.

\citet{HH22fair} study fair allocation of items on a graph when each agent gets an independent subset of the vertices. Their model differs from ours in two ways: First, they require \emph{complete} allocations whereas we focus on maximal allocations. (In our model, a complete allocation may fail to be \EF{1} even for a star graph.) Secondly, the items in their model are goods whereas we consider chores. \citet{CKM+23fair} study a similar model as \citet{HH22fair} but allow for partial allocations. However, they focus on maximizing the egalitarian welfare (i.e., maximizing the minimum utility) as opposed to satisfying approximate envy-freeness. \citet{BKK+23algorithmic} generalize the model of \citet{HH22fair} by incorporating capacity constraints for agents and items.

Assigning independent subsets to agents naturally corresponds to a partial coloring of the conflict graph. A seminal result in this line of work is the \emph{Hajnal-Szemer\'{e}di theorem}~\citep{HS87proof}, which states that any graph with maximum degree $\Delta$ admits an equitable coloring when there are at least $\Delta + 1$ colors.\footnote{An \emph{equitable} coloring is a proper coloring in which the color classes are almost balanced. That is, each vertex is assigned a color such that no pair of adjacent vertices have the same color and any two color classes differ in size by at most one.} The distinguishing point with our work is that this result only focuses on equalizing the \emph{number} of vertices in each color class, whereas our model accounts for the (possibly differing) \emph{valuations} assigned by the agents to the vertices.

Our work is closely related to that of~\citet{LLZ21fair}, who studied fair scheduling in the context of indivisible \emph{goods} (as opposed to chores). Their framework is somewhat more general than ours because they allow for \emph{flexible} intervals; that is, the processing time of a job can be strictly less than its finish time minus its start time. They study approximate maximin fair share (\MMS{}) and \EF{1} notions. Importantly, they show that a schedule that maximizes Nash social welfare (i.e., the geometric mean of agents' utilities) satisfies $\nicefrac{1}{4}$-\EF{1} and Pareto optimality. For chores, however, maximizing or minimizing the product of (absolute value of) agents' utilities can violate \EF{1} or Pareto optimality.

In the literature on scheduling problems~\citep{L04handbook}, various other fairness criteria have been studied such as minimizing the maximum deviation from a desired load~\citep{AAN+98fairness}, minimizing the $\ell_p$ norm of flow times~\citep{IM20fair}, and analyzing the welfare degradation due to imposition of fairness constraints~\citep{BFF+16price}. These notions differ from the ``up to one item'' style approximations studied in our work.

\section{Preliminaries}
\label{sec:Preliminaries}

Given any $r \in \mathbb{N}$, let $[r] \coloneqq \{1,2,\dots,r\}$. 

\paragraph{Problem instance.}
An instance of the 
\emph{chore interval scheduling problem} (\CISP{})
is given by a tuple $\langle \A, \C, \T, \V \rangle$, where $\A \coloneqq \{a_1,\dots,a_n\}$ is the set of $n$ \emph{agents} (or machines), $\C \coloneqq \{c_1,\dots,c_m\}$ is the set of $m$ indivisible \emph{chores} (or jobs), $\T$ is the set containing the \emph{timing information} of all chores, and $\V$ is the \emph{valuation profile}. The sets $\T$ and $\V$ are formally defined below.

\paragraph{Timing information.}
Each chore $c_j \in \C$ is associated with a \emph{start time} $s_j \in \mathbb{N} \cup \{0\}$ 
and a \emph{finish time} $f_j \in \mathbb{N}$ with $f_j > s_j$. 
The set $\T$ contains the tuple $(s_j,f_j)$ for every chore $c_j \in \C$, i.e., $\T \coloneqq \{ (s_j,f_j)_{c_j \in \C} \}$. 
Two chores $c_i, c_j \in \C$ overlap if and only if their intervals have a non-empty intersection i.e., $s_i < f_j $ and $ s_j < f_i$. 

\paragraph{Feasibility.}
An agent can perform at most one chore at any time instant. 
A chore $c_j$ is performed successfully if it is performed 
by an agent during the interval $[s_j,f_j)$. A set of chores $C \subseteq \C$ is said to be \emph{feasible} for an agent $a_i$ if all chores in $C$ can be performed successfully by $a_i$ without overlap. We will write $\F$ to denote the \emph{feasibility set} of any agent, which is the set of all feasible subsets of chores, i.e.,
\begin{equation}
\label{eq:F}
\F \coloneqq \left\{ C \subseteq \C \mid \forall c_j \neq c_{ j' } \in C : [ s_j, f_j ) \cap [ s_{ j' }, f_{ j' } ) = \emptyset \right\} .
\end{equation}
Observe that any subset of pairwise non-overlapping chores is feasible for any agent; thus, the feasibility set is the same for all agents.

\paragraph{Valuation functions.}
The valuation function $v_i : \F \rightarrow \mathbb{Z}_{\leq 0}$ specifies the value (or utility) derived by agent $a_i$ for every feasible set of chores. Notice that the valuations are non-positive integers. A valuation function $v_i$ is \emph{monotone} if for any two feasible subsets of chores $C,C' \in \F$ such that $C \subseteq C'$, we have $v_i(C) \geq v_i(C')$. We say that the valuations are \emph{additive} if for any feasible set of chores $C \in \F$, $v_i(C) \coloneqq \sum_{c_j \in C} v_i(\{c_j\})$. For simplicity, we will write $v_{i,j}$ to denote $v_i(\{c_j\})$. The valuation profile $\V \coloneqq \{v_1,v_2,\dots,v_n\}$ specifies the valuation functions of all agents. We say that agents have \emph{identical} valuations if the valuation functions $v_1(\cdot),v_2(\cdot),\dots,v_n(\cdot)$ are the same.

\paragraph{Schedule.}
A \emph{schedule} (or \emph{allocation}) $\X \coloneqq (X_1,X_2,\dots,X_n)$ is a sequence of pairwise disjoint subsets of~$\C$, where $X_i$ denotes the set of chores (or \emph{bundle}) assigned to agent $a_i$. Thus, for every pair of agents $a_i,a_k \in \A$, we have $X_i \cap X_k = \emptyset$ and $X_1 \cup \dots \cup X_n \subseteq \C$. A schedule $\X$ is said to be \emph{feasible} if for every agent $a_i$, we have $X_i \in \F$. 

\paragraph{Complete and maximal schedule.} A schedule $\X \coloneqq (X_1,X_2,\dots,X_n)$ is called \emph{complete} if no chore is left unassigned, i.e., $X_1 \cup \dots \cup X_n = \C$. Note that a complete schedule may not be feasible in general; indeed, for two agents and three chores that have identical start times and finish times, no complete schedule is feasible. A \emph{maximal} schedule is one that is feasible and has the additional property that assigning any unallocated chore to any agent makes it infeasible. Unless explicitly stated otherwise, we will use the term `schedule' to refer to a feasible (but possibly incomplete and possibly non-maximal) schedule.

\paragraph{Conflict graph.}
Given any \CISP{} instance $\langle \A, \C, \T, \V \rangle$, we will find it convenient to define its \emph{conflict graph} $G = (\C,E)$ as follows~\citep{HH22fair,CKM+23fair}: The set of vertices is the set of chores, and for any pair of vertices $c_i,c_j \in \C$, there is an undirected edge $\{c_i,c_j\}$ if and only if $c_i$ and $c_j$ overlap; see \Cref{fig:Conflict_graph_example} for an illustration. 
\begin{figure}[t]
\centering
\begin{subfigure}[b]{0.45\linewidth}
	\centering
	\begin{tikzpicture}[line width=0.8pt]
		\tikzmath{\choreHeight=0.4;}
		\tikzmath{\gap=0.75;}
			\draw[help lines, dashed, step = 1cm] (0, -0.3) grid[ystep=0] (3, 1.5);
			\draw (0,0) rectangle node{$c_1$} (3,\choreHeight);
			\draw (0,\gap) rectangle node{$c_2$} (1,\gap+\choreHeight);
			\draw (1,\gap) rectangle node{$c_3$} (2,\gap+\choreHeight);
			\draw (2,\gap) rectangle node{$c_4$} (3,\gap+\choreHeight);
			\node[] at (0,-0.4){\scriptsize{$t=0$}};
			\node[] at (3,-0.4){\scriptsize{$t=3$}};		
	\end{tikzpicture}
	\caption{}
\end{subfigure}
\begin{subfigure}[b]{0.45\linewidth}
	\centering
	\begin{tikzpicture}
		\tikzset{mynode/.style = {shape=circle,draw,inner sep=1pt}} 
		\node[mynode] (1) at (1,1) {$c_1$};
		\node[mynode] (2) at (0,0) {$c_2$};
		\node[mynode] (3) at (1,0) {$c_3$};
		\node[mynode] (4) at (2,0) {$c_4$};
		\draw (1) -- (2);
		\draw (1) -- (3);
		\draw (1) -- (4);
	\end{tikzpicture} 
	\caption{}
\end{subfigure}
\caption{(a) A scheduling instance and (b) its conflict graph.}
\label{fig:Conflict_graph_example}
\end{figure}
Note that a maximal schedule corresponds to a subpartition into independent sets in the conflict graph. Also, observe that the conflict graphs correspond to the class of \emph{interval} graphs~\citep[Sec 5.1.15]{W01introduction}.

\paragraph{Envy-freeness.} A schedule $\X \coloneqq (X_1,X_2,\dots,X_n)$ is said to be (a) \emph{envy-free} (\EF{})~\citep{GS58puzzle,F67resource} if for every pair of agents $a_i,a_k \in \A$, we have $v_i(X_i) \geq v_i(X_k)$; (b) \emph{envy-free up to any chore} (\EFX{})~\citep{CKM+19unreasonable,ACI+19fair} if for every pair of agents $a_i,a_k \in \A$ 
and for every chore $c \in X_i$, we have $v_i(X_i \setminus \{c\}) \geq v_i(X_k)$, and (c) \emph{envy-free up to one chore} (\EF{1})~\citep{B11combinatorial,ACI+19fair} if for every pair of agents $a_i,a_k \in \A$ such that $X_i \neq \emptyset$, we have $v_i(X_i \setminus \{c\}) \geq v_i(X_k)$ for some chore $c \in X_i$. It is easy to verify that $\EF{} \Rightarrow \EFX{} \Rightarrow \EF{1}$ and that all implications are strict.

\paragraph{Pareto optimality.}
A schedule $\X \coloneqq (X_1,X_2,\dots,X_n)$ is said to Pareto dominate another schedule $\Y \coloneqq \{Y_1,Y_2,\dots,Y_n\}$ if $v_i(X_i) \geq v_i(Y_i)$ for every agent $a_i$ and $v_k(X_k) > v_k(Y_k)$ for some agent $a_k$. A \emph{Pareto optimal} schedule is one that is maximal and is not Pareto dominated by any other maximal schedule. The maximality requirement is helpful in ruling out trivial solutions; in particular, an empty schedule that leaves all chores unassigned cannot be Pareto optimal according to this definition.

\section{Non-Existence Results}
\label{sec:Non-Existence-Results}

In this section, we will show that various combinations of fairness and efficiency notions can fail to exist in the chore interval scheduling problem. Interestingly, all of our counterexamples involve two agents with identical valuations and the conflict graph is a path graph.

Let us start with a negative result for \EFX{} and maximality.

\begin{example}[\EFX{} and maximal schedule may not exist]
Consider an instance with two agents $a_1$ and $a_2$ and four chores $c_1$, $c_2$, $c_3$, and $c_4$ that are identically valued by the agents at $-1$, $-1$, $-1$, and $-4$, respectively. The conflict graph is shown below.

\begin{figure}[h]
\centering
\begin{tikzpicture}
    \tikzset{mynode/.style = {shape=circle,draw,inner sep=1pt}} 
    \node[mynode] (1) at (0,0) {$c_1$};
    \node[mynode] (2) at (1,0) {$c_2$};
    \node[mynode] (3) at (2,0) {$c_3$};
    \node[mynode] (4) at (3,0) {$c_4$};
    \draw (1) -- (2);
    \draw (2) -- (3);
    \draw (3) -- (4);
\end{tikzpicture}
\label{fig:EFX_maximal_counterexample}
\end{figure}
Let $\X$ be the desired \EFX{} and maximal schedule. Observe that due to maximality, the chore $c_4$ cannot remain unallocated under $\X$. This is because if the neighboring chore $c_3$ is assigned to one of the agents, say $a_1$, then the chore $c_4$ must be assigned to the other agent $a_2$. Similarly, if $c_3$ is unassigned, then $c_4$ can be assigned to either of the agents without creating any conflict. Thus, we can assume, without loss of generality, that $c_4$ is assigned to agent $a_1$.

In order for the schedule $\X$ to satisfy \EFX{}, agent $a_1$ cannot be assigned any other chore. Furthermore, feasibility dictates that the other agent $a_2$ can be given at most two of the three remaining chores $c_1$, $c_2$, and $c_3$. If agent $a_2$ gets exactly two chores, then it must be given $c_1$ and $c_3$; however, then $c_2$ must be assigned to agent $a_1$, violating \EFX{}. On the other hand, if agent $a_2$ gets at most one chore, then once again by maximality of $\X$, agent $a_1$ will be required to get at least one chore out of $c_1$, $c_2$, and $c_3$, again violating \EFX{}. Thus, an \EFX{} and maximal schedule does not exist in the above instance.\qed
\label{eg:EFX_Maximal_Counterexample}
\end{example}

The non-existence of \EFX{} motivates the consideration of a weaker approximation such as \EF{1}. Our next example shows that \EF{1} and Pareto optimality can be mutually incompatible even for two agents with identical valuations on a path graph.

\begin{example}[\EF{1} and Pareto optimal schedule may not exist]
Consider an instance with two agents $a_1$, $a_2$ and five chores $c_1, \dots, c_5$ that are identically valued by the agents at $-2$, $-10$, $-1$, $-10$, and $-2$, respectively.  The conflict graph is shown below.

\begin{figure}[h]
\centering
\begin{tikzpicture}
    \tikzset{mynode/.style = {shape=circle,draw,inner sep=1pt}} 
    \node[mynode] (1) at (0,0) {$c_1$};
    \node[mynode] (2) at (1,0) {$c_2$};
    \node[mynode] (3) at (2,0) {$c_3$};
    \node[mynode] (4) at (3,0) {$c_4$};
    \node[mynode] (5) at (4,0) {$c_5$};
    \draw (1) -- (2);
    \draw (2) -- (3);
    \draw (3) -- (4);
    \draw (4) -- (5);
\end{tikzpicture}
\label{fig:EF1_PO_line_graph_counterexample}
\end{figure}

Any maximal schedule that allocates a ``heavy'' chore ($c_2$ or $c_4$) to one of the agents, say $a_1$, can be shown to be Pareto dominated by a schedule that assigns the extreme chores ($c_1$ and $c_5$) to $a_1$ and the middle chore $c_3$ to $a_2$. Therefore, any maximal schedule that leaves both heavy chores unassigned must be of the form $(\{c_1,,c_5\},\{c_3\})$ or $(\{c_3\},\{c_1,,c_5\})$, neither of which satisfy \EF{1}.\qed
\label{eg:EF1_PO_line_graph_counterexample}
\end{example}

Note that a Pareto optimal schedule (without \EF{1}) always exists; indeed, a schedule that maximizes the sum of agents' utilities~(i.e., the utilitarian social welfare) among all maximal schedules is Pareto optimal.

Another notion of efficiency is \emph{completeness} which asks that no chore should be left unassigned. A complete schedule may not exist (consider two agents and a triangle conflict graph). However, even when a complete schedule exists, no such schedule may satisfy \EF{1}.

\begin{example}[\EF{1} and complete schedule may not exist]
Consider an instance with two agents 
and four chores $c_1,c_2,c_3,c_4$ such that the odd index chores are valued at $-1$ each and the even index chores are valued at $-3$ each by both agents. The conflict graph is a path graph, similar to the one in \Cref{eg:EFX_Maximal_Counterexample}. 
Any complete schedule must assign all odd index chores to one agent and all even index chores to the other. It is easy to see that \EF{1} is violated from the perspective of the agent with even index chores.\qed
\label{eg:EF1_complete_counterexample}
\end{example}

The failure of \EF{1} and completeness means that we must focus on \EF{1} and maximality in search of positive results. 
Note that for three or more agents and a path graph, a schedule is maximal if and only if it is complete.

\section{Limitations of Algorithms from the Unconstrained Setting}
\label{sec:Limitations}

One of the reasons why the chore interval scheduling problem is difficult is that algorithms for fair division without constraints do not necessarily work for the constrained problem. We will illustrate this challenge through two examples that demonstrate that the well-known round robin and envy-cycle elimination algorithms, which satisfy \EF{1} for unconstrained items, fail to do so in the presence of conflicts.

The round robin algorithm iterates over the agents according to a fixed permutation, and each agent, on its turn, picks its favorite remaining item. This algorithm is known to find an \EF{1} allocation under additive valuations in the unconstrained problem. However, when the conflict graph is a path graph, round robin can fail to achieve \EF{1} even for two agents with identical valuations.

\begin{example}[Round robin fails \EF{1}]
Consider an instance with six chores $c_1,\dots,c_6$ and two agents $a_1,a_2$ with identical valuations. The valuations profile and the scheduling constraints are shown in \Cref{{fig:RoundRobin_Fails_EF1}}.

\begin{figure}[ht]
\centering
\begin{subfigure}[b]{\linewidth}
	\centering
	\begin{tikzpicture}[line width=0.8pt]
	\tikzmath{\choreHeight=0.4;}
	\tikzmath{\gap=0.75;}
 		\draw[help lines, dashed, step = 0.5cm] (0, -0.3) grid[ystep=0] (3.5, 1.5);
		\draw (0,0) rectangle node{$c_1$} (1,\choreHeight);
		\draw (0.5,\gap) rectangle node{$c_2$} (1.5,\gap+\choreHeight);
		\draw (1,0) rectangle node{$c_3$} (2,0+\choreHeight);
		\draw (1.5,\gap) rectangle node{$c_4$} (2.5,\gap+\choreHeight);
		\draw (2,0) rectangle node{$c_5$} (3,0+\choreHeight);
		\draw (2.5,\gap) rectangle node{$c_6$} (3.5,\gap+\choreHeight);
		\node[] at (0,-0.4){\scriptsize{$t=0$}};
		\node[] at (3.5,-0.4){\scriptsize{$t=7$}};
	\end{tikzpicture}
\end{subfigure}
\begin{subfigure}[b]{\linewidth}
	\centering
	\begin{tikzpicture}[scale=0.8]
		\tikzset{mynode/.style = {shape=circle,draw,inner sep=1pt}} 
            \node[] (0) at (0,2.7) {};
		\node[mynode] (1) at (0,2) {$c_1$};
		\node[mynode] (2) at (1,2) {$c_2$};
		\node[mynode] (3) at (2,2) {$c_3$};
		\node[mynode] (4) at (3,2) {$c_4$};
		\node[mynode] (5) at (4,2) {$c_5$};
		\node[mynode] (6) at (5,2) {$c_6$};
		\draw (1) -- (2);
		\draw (2) -- (3);
		\draw (3) -- (4);
		\draw (4) -- (5);
		\draw (5) -- (6);
		%
		\node[] (1) at (-0.75,1.25) {$a_1:$};
		\node[] (1) at (0,1.25) {0};
		\node[] (2) at (1,1.25) {-5};
		\node[] (3) at (2,1.25) {-2};
		\node[] (4) at (3,1.25) {-1};
		\node[] (5) at (4,1.25) {-3};
		\node[] (6) at (5,1.25) {-6};
		%
		\node[] (1) at (-0.75,0.5) {$a_2:$};
		\node[] (1) at (0,0.5) {0};
		\node[] (2) at (1,0.5) {-5};
		\node[] (3) at (2,0.5) {-2};
		\node[] (4) at (3,0.5) {-1};
		\node[] (5) at (4,0.5) {-3};
		\node[] (6) at (5,0.5) {-6};
	\end{tikzpicture} 
\end{subfigure}
\caption{The instance used in \Cref{eg:RoundRobin_Fails_EF1} (top) and its conflict graph and valuations (bottom).}
\label{fig:RoundRobin_Fails_EF1}
\end{figure}
Suppose agent $a_1$ goes first in the round robin algorithm. Then the induced schedule is $X_1 = \{c_1,c_3,c_5\}$ and $X_2 = \{c_4,c_2,c_6\}$. Note that $v_2(X_2 \setminus c) < v_2(X_1)$ for every $c \in X_2$, implying that the schedule fails \EF{1}. 

The reason why round robin fails \EF{1} in the scheduling model is that after picking $c_4$ in the first round, agent $a_2$ can no longer pick $c_5$ in its next turn due to feasibility constraint. Thus, it has to pick its favorite among the feasible chores, which is $c_2$. This results in a ``build up'' of envy in each subsequent round, which cannot be compensated even after removing the worst chore $c_6$ in its bundle.\qed
\label{eg:RoundRobin_Fails_EF1}
\end{example}

The envy-cycle elimination algorithm is known to find an \EF{1} allocation for indivisible goods in the unconstrained setting. For chores, a modification of this algorithm called the top-trading envy-cycle elimination algorithm~\citep{BSV21approximate} provides similar guarantees. The latter algorithm works as follows: At each step, the algorithm assigns a chore to a ``sink'' vertex in the envy graph.\footnote{The envy graph associated with a given schedule is a directed graph whose vertices are the agents and there is an edge $(i,j)$ if agent $i$ envies agent $j$ in the given schedule. A sink vertex refers to an agent who does not envy any other agent.} If a sink does not exist, there must exist a cycle of ``most envied edges'', which, when resolved~(i.e., by a cyclic swap of bundles), creates a sink vertex.

\begin{example}[Envy-cycle elimination fails \EF{1}]
Consider an instance with two agents $a_1,a_2$ and five chores $c_1,\dots,c_5$ whose conflict graph is a path graph (similar to \Cref{eg:EFX_Maximal_Counterexample}). The chores are valued at $-10$, $-1$, $-10$, $-3$, and $-2$, respectively, by both agents. Note that since the valuations are identical, an envy cycle can never occur. Thus, a sink agent always exists.

At each step of the algorithm, we allow a sink agent to pick its favorite among the feasible chores. Thus, agent $a_1$ starts by picking the chore $c_2$, which makes $a_2$ the sink. Next, agent $a_2$ picks the chore $c_5$, which again makes $a_1$ the sink, and so on. It can be checked that the induced schedule is $(\{c_2,c_4\},\{c_1,c_3,c_5\})$. However, \EF{1} is violated from $a_2$'s perspective as it gets both heavy chores.\qed
\label{eg:Envy-cycle-fails-EF1}
\end{example}

Another challenge with the chore interval scheduling problem is that the set of maximal schedules seems to lack any ``easy to exploit'' structure. In particular, one might wonder whether maximal schedules constitute a \emph{matroid}~\citep{O22matroid}. 
    %
    %
If so, it would be natural to use known algorithms for finding an \EF{1} allocation under matroid constraints~\citep{BB19matroid,DFS23fair}. 
However, any maximal schedule lacks the downward closedness property, since unassigning an item from a maximal schedule does not maintain maximality. Similarly, the exchange axiom may also fail for a maximal schedule. To see this, consider a star graph as shown in \Cref{fig:Conflict_graph_example}. Suppose the center chore is assigned to agent $a_1$ and all leaf chores are assigned to agent $a_2$. Agent $a_1$'s bundle has fewer chores, but transferring any of agent $a_2$'s chores to agent $a_1$ creates infeasibility. Thus, it is not clear if ideas from matroid theory can be applied to the chore interval scheduling problem.

\section{Results for Two Agents}
\label{sec:Results_Two_Agents}

In this section, we will discuss our algorithmic results for two agents. We will start with the case when the conflict graph is a \emph{path} graph (\Cref{thm:Two_Agents_Line_Graph}). As we have seen in \Cref{sec:Non-Existence-Results,sec:Limitations}, the setting of two agents and a path graph is already quite nontrivial as all of our counterexamples hold even in this special case. Our algorithm will use a novel \emph{coloring} technique and a construct called \emph{adjacent schedules} that will help us circumvent the limitations discussed previously. We will then show that, building on our coloring technique, a more sophisticated algorithm can find an \EF{1} and maximal schedule for two agents for any \emph{interval} graph~(\Cref{thm:Two_Agents_EF1_Interval}).

Notably, 
our algorithms apply to general \emph{monotone} valuations, which is a significantly broader class that contains additive valuations. In light of the non-existence results in \Cref{sec:Non-Existence-Results}, the result in \Cref{thm:Two_Agents_EF1_Interval} is the most general positive result for two agents that one can expect in our problem.

Let us start with our algorithm for path graphs.

\begin{restatable}[Two agents and path graph]{theorem}{TwoAgentsLineGraph}
There is an algorithm that, given any \CISP{} instance with two agents with monotone valuations and an arbitrary path graph, returns an \EF{1} and maximal schedule. Furthermore, this algorithm runs in polynomial time, assuming access to a value query oracle.
\label{thm:Two_Agents_Line_Graph}
\end{restatable}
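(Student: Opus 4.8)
The plan is to consider a path graph $c_1 - c_2 - \cdots - c_m$ with two agents having monotone valuations. The key observation is that a maximal schedule on a path corresponds to choosing, for each vertex, either to leave it unassigned or to color it with one of two colors (agent $a_1$ or $a_2$), such that (i) no two adjacent vertices get the same color and (ii) maximality holds, meaning no unassigned vertex has both neighbors free of a given color — more precisely, an unassigned vertex must have, for each agent, a neighbor assigned to that agent. On a path this forces a very rigid structure: between any two consecutive assigned vertices there is at most one gap, and a gap vertex $c_j$ needs one neighbor colored $a_1$ and the other colored $a_2$. So maximal schedules are essentially described by a sequence of ``blocks'' of consecutively-colored vertices with single-vertex gaps carrying opposite-colored neighbors.

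**The coloring / adjacency construction.** First I would fix one canonical maximal schedule — say, assign $c_1, c_3, c_5, \ldots$ greedily — and then define a notion of \emph{adjacent} maximal schedules: two maximal schedules are adjacent if one can be obtained from the other by a local modification (flipping the colors of a contiguous segment, or shifting a gap by one position, or swapping which agent owns a block). The goal is to show the space of maximal schedules is connected under adjacency and, crucially, that the ``envy difference'' $v_1(X_1) - v_1(X_2)$ (and symmetrically for $a_2$) changes in a controlled, ``Lipschitz-like'' way along adjacency steps — i.e., passing between adjacent schedules changes each agent's bundle by at most a bounded amount, controllable by removing one chore. Then a discrete intermediate-value argument would let me start from a schedule where $a_1$ is (weakly) better off, walk to one where $a_2$ is better off, and find a crossing point that is \EF{1}. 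The coloring technique presumably makes precise exactly which segment-flips preserve maximality and bijectively enumerates the maximal schedules, so that the walk is well-defined and polynomial-length.

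**Why two colors and a path make this tractable.** On a path, the proper 2-colorings of any induced subpath are exactly two (the two alternating patterns), so once the set of assigned vertices (a union of subpaths separated by single gaps, with gap endpoints fixed by maximality) is chosen, each maximal segment has only two colorings, and adjacent gap-segments are coupled through the opposite-color constraint at gaps. This finite, highly structured search space is what I would enumerate or traverse. I would implement the algorithm as: (1) characterize all maximal ``skeletons'' (which vertices are gaps); on a path these form a small family; (2) for each skeleton, the colorings form a product of $\{0,1\}$ choices with consistency constraints at gaps, and I would show the reachable envy values form a ``connected interval'' in an appropriate sense; (3) binary-search / sweep to the \EF{1} point; (4) argue existence via the intermediate-value step, using monotonicity only through the inequality $v_i(S \setminus c) \ge v_i(S')$ for the relevant bundles.

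**Main obstacle.** The hard part will be step (2): proving that as one traverses adjacent maximal schedules, the change in an agent's bundle is always ``absorbable by removing a single chore,'' so that somewhere along the path from ``$a_1$-favored'' to ``$a_2$-favored'' the \EF{1} condition holds for \emph{both} agents simultaneously. A single segment-flip can move a whole block of chores between agents, which a priori changes a bundle by more than one chore; the subtlety is to choose the adjacency moves (and perhaps the order in which gaps are resolved) so that the \emph{envy} — not the bundle — changes gently, and to handle the asymmetry that $a_1$ losing envy need not mean $a_2$ gains it by the same amount under non-additive monotone valuations. Making this precise for general monotone (not additive) valuations, where we only have set-monotonicity and no additivity to telescope sums, is where the real work lies, and I expect the coloring technique is exactly the device that tames it by ensuring each adjacency step transfers at most ``one chore's worth'' of responsibility across the cut.
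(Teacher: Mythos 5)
Your high-level strategy---walk through a sequence of maximal schedules from some schedule to its bundle-swapped version and locate a crossing point of the envy by a discrete intermediate-value argument---is exactly the paper's strategy. But the two components that make it work are precisely the ones you leave open, and you acknowledge both as unresolved obstacles, so the proposal has genuine gaps. First, your adjacency moves (flipping a contiguous segment, swapping which agent owns a block) can transfer many chores at once, which is fatal for the ``absorbable by one chore'' requirement; the paper instead orders the chores $c_1,\dots,c_m$ along the path and defines the $i$-th schedule by leaving exactly $c_i$ unassigned, giving $c_1,\dots,c_{i-1}$ the \emph{flipped} alternating coloring and $c_{i+1},\dots,c_m$ the original alternating coloring. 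Passing from step $i-1$ to step $i$ then assigns $c_{i-1}$ (to the opposite bundle from before) and unassigns $c_i$, so each bundle gains at most one chore and loses at most one chore, while maximality is automatic because the single unassigned chore is sandwiched between two chores in different bundles. No skeleton enumeration or binary search is needed; the sequence has length $m$ and its endpoints are the alternating schedule and its swap.

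Second, the crossing point by itself does not yield \EF{1} for both agents, as you note when you raise the asymmetry issue---but you do not resolve it. The paper's resolution is a separate lemma: if $\X$ and $\Y$ are adjacent schedules with $a_1$ envious in $\X$ but not in $\Y$, then one of the \emph{four} schedules $\X$, $\Y$, $(X_2,X_1)$, $(Y_2,Y_1)$ is \EF{1}. The proof is a short contradiction using only monotonicity: if $a_2$ envied $a_1$ in $\X$ then swapping $\X$ would be envy-free (and similarly for $\Y$), so otherwise failure of \EF{1} in $\X$ and in the swap of $\Y$ gives $v_1(Y_1)\leq v_1(X_1\setminus(X_1\setminus Y_1))<v_1(X_2)$ and $v_1(X_2)\leq v_1(Y_2\setminus(Y_2\setminus X_2))<v_1(Y_1)$, a contradiction, where the set inclusions come exactly from the one-in/one-out adjacency bound. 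Considering the swapped schedules is the device that handles the asymmetry under general monotone valuations; without it (and without the one-chore-per-step sequence above), your intermediate-value sweep does not close.
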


Our algorithm will use the idea of \emph{adjacent schedules} which is defined below.

\begin{dfn}[Adjacent schedules]
\label{def:adjacent}
Two schedules $\X = (X_1,X_2)$ and $\Y = (Y_1,Y_2)$ are said to be adjacent if for any $i \in \{1,2\}$, $Y_i$ is obtained from $X_i$ by adding at most one element and removing at most one element, that is, 
\[\abs{Y_i\setminus X_i} \leq 1 \; and \; \abs{X_i\setminus Y_i} \leq 1 \; \text{for } i \in \{1,2\}.\]
\end{dfn}
For example, an adjacent schedule can be obtained by the two agents exchanging a pair of chores, or by assigning an unallocated chore to agent 1 and transferring another chore from agent 1 to agent 2, etc. 
The following lemma shows that if we have two adjacent schedules and an agent is envious in one but not in the other, then one of the two schedules or the schedules obtained after swapping the bundles satisfies \EF{1}.

\begin{restatable}[]{lemma}{AdjacentScheduleLemma}
Let $\X=(X_1,X_2)$ and $\Y=(Y_1,Y_2)$ be two adjacent schedules such that
\begin{itemize}
\item in schedule $\X$, agent $a_1$ envies agent $a_2$, and
\item in schedule $\Y$, agent $a_1$ does not envy agent $a_2$. 
\end{itemize}
Then, at least one of the following four schedules must be \EF{1}: $\X$, $\Y$, or  the  schedules obtained by swapping the agents' bundles, i.e., $\X':=(X_2,X_1)$, or $\Y':=(Y_2,Y_1)$.  
\label{lem:Adjacent_EF1}
\end{restatable}
\begin{proof} (of \Cref{lem:Adjacent_EF1})
For the sake of contradiction, assume that none of the four schedules is \EF{1}. 
We can assume that $a_2$ does not envy $a_1$ in schedule $\X$, because otherwise $\X'$ would be \EF{}. Similarly, we can assume that $a_2$ does not envy $a_1$ in schedule $\Y'$, because otherwise $a_2$ would not envy $a_1$ in $\Y$ and $\Y$ would be \EF{}. Now, since we assumed $\X$ and $\Y'$ are not \EF{1}, in both schedules agent $a_1$ must envy agent $a_2$ even after the giving up any single chore. By adjacent property of $\X$ and $\Y$, we know that $|X_1\setminus Y_1| \leq 1$ and $|Y_2 \setminus X_2|\leq 1$.
Hence, we can write
\begin{align*}
    \text{For } \X: v_1(X_1 \setminus (X_1\setminus Y_1 )) &< v_1(X_2), \text{ and} \\
    \text{For } \Y': v_1(Y_2 \setminus (Y_2 \setminus X_2) )  &< v_1(Y_1). 
\end{align*}
Note  that $X_1 \setminus \{X_1 \setminus Y_1\} \subseteq Y_1$ and $Y_2 \setminus \{Y_2 \setminus X_2\} \subseteq X_2$.
Putting this together with the above two equations, we get
\begin{align*}
   v_1(Y_1) \leq v_1(X_1 \setminus \{X_1\setminus Y_1\}) &< v_1(X_2) \text{ and} \\
   v_1(X_2) \leq v_1(Y_2 \setminus \{Y_2 \setminus X_2\})  &< v_1(Y_1). 
\end{align*}
The two inequalities give us a contradiction.
\end{proof}
Note that while the conditions of the lemma only require the envy to be considered from agent $a_1$'s perspective, the \EF{1} implication holds for both agents. Also, 
the proof of \Cref{lem:Adjacent_EF1} works for monotone valuations. Furthermore, the lemma can also be shown to hold for \emph{goods} (i.e., when all valuations are non-negative) under monotone valuations.

\Cref{lem:Adjacent_EF1} implies that in order to find an \EF{1} and maximal schedule, it suffices to find a pair of adjacent and maximal schedules such that agent $a_1$ is envious under one but not the other. Towards this goal, we will construct a \emph{sequence} of maximal schedules, starting with a schedule $\X=(X_1,X_2)$ and ending with the swapped bundles $\X'=(X_2,X_1)$, such that any two consecutive schedules in the sequence are adjacent.
Note that agent $a_1$ cannot envy agent $a_2$ in both $X$ and $X'$. Hence, there will be two consecutive schedules (i.e., a switchover point) in this sequence that will satisfy the condition required in \Cref{lem:Adjacent_EF1} and will give us the desired \EF{1} schedule.

For our next result (\Cref{lem:Adjacent_sequence_Line_Graph}), we will find it convenient to identify agent $a_1$ with {\color{red}red} and agent $a_2$ with {\color{blue}blue} color.

\begin{lemma}
\label{lemma:schedulesequence:path}
For any \CISP{} instance with two agents and a path graph, there exists a sequence of feasible schedules $( \X_1=(R_1,B_1), \X_2=(R_2,B_2), \dots, \X_m=(R_m,B_m))$, where $m \in \mathbb{N}$ is the number of chores, such that 
\begin{enumerate}
\item The last schedule is obtained by swapping
the two bundles in the first schedule. That is, $R_1=B_m$ and $B_1=R_m$.
\item For any $2\leq i \leq m$, the schedules $\X_i$ and $\X_{i-1}$ are adjacent.
\item For any $1 \leq i \leq m$, the schedule $\X_i$ is maximal.
\end{enumerate}
\label{lem:Adjacent_sequence_Line_Graph}
\end{lemma}
Note that \Cref{thm:Two_Agents_Line_Graph} follows readily from \Cref{lem:Adjacent_EF1,lem:Adjacent_sequence_Line_Graph}. Indeed, the sequence of schedules returned by the algorithm in \Cref{lem:Adjacent_sequence_Line_Graph} consists only of maximal schedules. The extreme schedules are swapped versions of each other, therefore there must exist a pair of consecutive schedules, say $\X_i$ and $\X_{i+1}$, in the sequence where the envy of agent $a_1$ switches. From \Cref{lem:Adjacent_EF1}, at least one of $\X_i$, $\X_{i+1}$, or the swapped versions of these must satisfy \EF{1}.

\begin{figure}[t]
\centering
    \begin{tikzpicture}[scale=0.8]
        \tikzset{mynode/.style = {shape=circle,draw,inner sep=1pt}} 
        \node[mynode] (1) at (0,2) {$c_1$};
        \node[mynode] (2) at (1,2) {$c_2$};
        \node[mynode] (3) at (2,2) {$c_3$};
        \node[mynode] (4) at (3,2) {$c_4$};
        \node[mynode] (5) at (4,2) {$c_5$};
        \node[mynode] (6) at (5,2) {$c_6$};
        \draw (1) -- (2);
        \draw (2) -- (3);
        \draw (3) -- (4);
        \draw (4) -- (5);
        \draw (5) -- (6);
        %
        \node[] (0) at (-0.75,1.25) {$\X_1:$};
        \node[] (1) at (0,1.25) {$\color{red}R$};
        \node[] (2) at (1,1.25) {$\color{blue}B$};
        \node[] (3) at (2,1.25) {$\color{red}R$};
        \node[] (4) at (3,1.25) {$\color{blue}B$};
        \node[] (5) at (4,1.25) {$\color{red}R$};
        \node[] (6) at (5,1.25) {$\color{blue}B$};
        %
        \node[] (0) at (-0.75,0.5) {$\X_2:$};
        \node[] (1) at (0,0.5) {$\color{blue}B$};
        \node[] (2) at (1,0.5) {$-$};
        \node[] (3) at (2,0.5) {$\color{red}R$};
        \node[] (4) at (3,0.5) {$\color{blue}B$};
        \node[] (5) at (4,0.5) {$\color{red}R$};
        \node[] (6) at (5,0.5) {$\color{blue}B$};
        \draw[draw=black] (-0.25,0.25) rectangle ++(1.5,0.5);
        \node[] (0) at (-0.75,-0.25) {$\X_3:$};
        \node[] (1) at (0,-0.25) {$\color{blue}B$};
        \node[] (2) at (1,-0.25) {$\color{red}R$};
        \node[] (3) at (2,-0.25) {$-$};
        \node[] (4) at (3,-0.25) {$\color{blue}B$};
        \node[] (5) at (4,-0.25) {$\color{red}R$};
        \node[] (6) at (5,-0.25) {$\color{blue}B$};
        \draw[draw=black] (0.75,-0.5) rectangle ++(1.5,0.5);
        \node[] (0) at (-0.75,-1) {$\X_4:$};
        \node[] (1) at (0,-1) {$\color{blue}B$};
        \node[] (2) at (1,-1) {$\color{red}R$};
        \node[] (3) at (2,-1) {$\color{blue}B$};
        \node[] (4) at (3,-1) {$-$};
        \node[] (5) at (4,-1) {$\color{red}R$};
        \node[] (6) at (5,-1) {$\color{blue}B$};
        \draw[draw=black] (1.75,-1.25) rectangle ++(1.5,0.5);
        \node[] (0) at (-0.75,-1.75) {$\X_5:$};
        \node[] (1) at (0,-1.75) {$\color{blue}B$};
        \node[] (2) at (1,-1.75) {$\color{red}R$};
        \node[] (3) at (2,-1.75) {$\color{blue}B$};
        \node[] (4) at (3,-1.75) {$\color{red}R$};
        \node[] (5) at (4,-1.75) {$-$};
        \node[] (6) at (5,-1.75) {$\color{blue}B$};
        \draw[draw=black] (2.75,-2) rectangle ++(1.5,0.5);
        \node[] (0) at (-0.75,-2.5) {$\X_6:$};
        \node[] (1) at (0,-2.5) {$\color{blue}B$};
        \node[] (2) at (1,-2.5) {$\color{red}R$};
        \node[] (3) at (2,-2.5) {$\color{blue}B$};
        \node[] (4) at (3,-2.5) {$\color{red}R$};
        \node[] (5) at (4,-2.5) {$\color{blue}B$};
        \node[] (6) at (5,-2.5) {$\color{red}R$};
        \draw[draw=black] (3.75,-2.75) rectangle ++(1.5,0.5);
    \end{tikzpicture}
\caption{Illustrating the coloring technique on a path.}
\vspace{-0.1in}
\label{fig:Coloring_Example}
\end{figure}

\begin{proof} (of \Cref{lem:Adjacent_sequence_Line_Graph})
We will prove the lemma by means of a coloring technique. Let the chores, in the increasing order of their finish times, be denoted by $c_{1}, c_{2}, \dots, c_{m}$.
We define the $m$ schedules $\X_i = (R_i,B_i)$ for $1\leq i \leq m$, as follows~(see \Cref{fig:Coloring_Example}):
\begin{itemize}
\item   $R_1 =\{c_{h} : h \text{ is odd} \}$ and $B_1 = \{c_{h} : h \text{ is even}\}$.
\item For any  $i \in \{2,3, \dots, m-1\}$, 
 \begin{itemize}
\item $R_i$ contains
$\{c_{h} : 1 \leq h < i, h \text{ is even} \} \cup 
\{c_{h} : i < h \leq m, h \text{ is odd} \}$,
\item $B_i$ contains
$\{c_{h} : 1 \leq h < i, h \text{ is odd} \} \cup 
\{c_{h} : i < h \leq m, h \text{ is even} \}$,
\item and the chore $c_{i}$ is unassigned.
 \end{itemize}
 \item $R_m =\{c_{h} : h \text{ is even} \}$ and $B_m = \{c_{h} : h \text{ is odd}\}$.
\end{itemize}

\emph{Feasibility:} It is easy to see that each schedule $\X_i$ in the above construction is feasible. Indeed, any colored chore $c_{h}$ intersects with at most one colored chore with earlier finish time, which can only be $c_{{h-1}}$. But the two chores $c_{h}$ and $c_{{h-1}}$ are not assigned to the same agent in any of the schedules.

\smallskip
\emph{Proof of part (1)}: 
From the construction, it is clear that $R_1=B_m$ and $B_1=R_m$. 

\smallskip
\emph{Proof of part (2)}:
For any even $i \in \{2,\dots,m\}$, observe that 
$B_i \setminus B_{i-1} = \{c_{i-1}\}$, and
$B_{i-1} \setminus B_{i} = \{c_{i}\}$.
Moreover, $R_{i-1} \setminus R_{i} \subseteq \{c_1\}$ and $R_{i} \setminus R_{i-1} \subseteq \{c_m\}$. Hence, $\X_{i-1}$ and $\X_{i}$ are adjacent. For odd $i$, a similar argument works.

\smallskip
\emph{Proof of part (3)}:
For any schedule $\X_i$, there is at most one unassigned chore. Any such chore is sandwiched between two chores that are assigned to different agents, implying maximality.
\end{proof}

Having shown that an \EF{1} and maximal schedule exists for two agents on a path graph, let us now turn to the more general class of interval graphs. 
Here, we will first show that a generalization of the coloring technique for path graphs gives a weaker fairness guarantee of \EF{2} (instead of \EF{1}) along with maximality for interval graphs.\footnote{A schedule $\X = (X_1,X_2)$ satisfies envy-freeness up to two chores (\EF{2}) if for every pair of agents $a_i,a_k$, there exist two chores $c,c' \in X_i$ such that $v_i(X_i \setminus \{c,c'\}) \geq v_i(X_k)$.} 

\subsection{EF2 for two agents and interval graph}

Recall from the proof of \Cref{thm:Two_Agents_Line_Graph} 
that by constructing a sequence of adjacent and maximal schedules, we were able to show that an \EF{1} and maximal schedule exists for two agents when the conflict graph is a path. In this section, we will extend this insight to interval graphs. Specifically, for an arbitrary interval conflict graph, we will construct a sequence of adjacent schedules that are \emph{almost maximal} (i.e., can be made maximal by adding at most one chore). By a similar argument as in the proof of \Cref{thm:Two_Agents_Line_Graph}, we will show that an \EF{1} and almost maximal schedule always exists. By adding an extra chore to make the schedule maximal, we obtain a weaker fairness guarantee of \emph{envy-freeness up to two chores} (\EF{2}) in conjunction with maximality. A schedule is said to be envy-free up to two chores (\EF{2}) if for every pair of agents $a_i,a_k \in \A$, we have $v_i(X_i \setminus \{c,c'\}) \geq v_i(X_k)$ for some chores $c,c' \in X_i$.

\begin{restatable}[\EF{2} for two agents and interval graph]{theorem}{TwoAgentsEFtwo}
There is an algorithm that, given any \CISP{} instance with two agents with monotone valuations and an arbitrary interval graph, returns an \EF{2} and maximal schedule. Furthermore, this algorithm runs in polynomial time, assuming access to a value query oracle.
\label{thm:Two_Agents_EF2}
\end{restatable}

Towards proving \Cref{thm:Two_Agents_EF2}, we will once again construct a sequence of schedules, only this time with a slight relaxation of maximality, as described in the next lemma.

\begin{restatable}{lemma}{AdjacentSequence}
For any \CISP{} instance with two agents and an interval graph, there exists a sequence of feasible schedules $( \X_1=(R_1,B_1), \X_2=(R_2,B_2), \dots, \X_k=(R_k,B_k))$ for some $k \in \mathbb{N}$ such that 
\begin{enumerate}
\item The last schedule is obtained by swapping
the two bundles in the first schedule. That is, $R_1=B_k$ and $B_1=R_k$.
\item For any $2\leq i \leq k$, the two schedules $\X_i$ and $\X_{i-1}$ are adjacent.
\item For any $1 \leq i \leq k$, the schedule $\X_i$ is either maximal or can be made maximal by including one unassigned chore in either of its two bundles $R_i$ or $B_i$.
\item The sequence $\X_1,\dots,\X_k$ can be constructed in polynomial time.
\end{enumerate}
\label{lem:Adjacent_sequence}
\end{restatable}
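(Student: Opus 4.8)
The plan is to lift the sliding-blank coloring of \Cref{lem:Adjacent_sequence_Line_Graph} from paths to interval graphs. As before, I would order the chores $c_1,\dots,c_m$ by nondecreasing finish time, so that the ``earlier neighbors'' of $c_h$ are exactly the chores $c_j$ overlapping $c_h$ with $f_j<f_h$. The new feature of interval graphs is that this set of earlier neighbors can contain several chores, possibly carrying both colors; consequently no proper $2$-coloring of the whole conflict graph need exist, which is precisely why part~(3) only asks for near-maximality. I would build the initial schedule $\X_1=(R_1,B_1)$ by a left-to-right greedy pass: process $c_1,\dots,c_m$ in order and give $c_h$ the color (red or blue) not present among its already-colored neighbors, leaving $c_h$ unassigned if both colors already occur there. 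Each color class is independent by construction, and an unassigned $c_h$ has both a red and a blue (earlier) neighbor, so it cannot be added to either bundle; hence $\X_1$ is maximal, and its color-swap $\X'=(B_1,R_1)$ is maximal as well.

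Next, I would connect $\X_1$ to $\X'$ by a sequence of adjacent, near-maximal schedules obtained by sweeping a ``transition zone'' from left to right. The invariant I would maintain in $\X_i$ is that the chores to the left of the current frontier are colored by the color-swap of the greedy assignment, the chores to the right are colored by the greedy assignment rerun on the remaining chores, and the transition zone around the frontier holds at most one unassigned chore. Advancing the frontier past one chore is carried out by a short run of \emph{elementary} moves, each of which transfers a single chore between the two bundles or assigns/unassigns a single chore; each elementary move produces an adjacent schedule, and only $O(1)$ of them are needed per frontier position, so the whole sequence has polynomial length. Throughout the sweep I would keep both color classes independent and keep the lone unassigned chore of the zone flanked by oppositely colored chores, so that it has a red and a blue neighbor; then every schedule is maximal, or becomes maximal upon adding that one chore, exactly as part~(3) permits. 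Completing the sweep yields $\X_k=\X'$, which establishes parts~(1), (2), and~(4).

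The main obstacle is that a single unassigned chore no longer ``decouples'' the left (swap-colored) region from the right (greedily colored) region the way it does on a path: a swap-colored chore just left of the frontier can directly conflict with a greedily colored chore just right of it that happens to carry the same color, and in an interval graph the chores straddling a frontier position can form a large clique rather than a single edge. I would handle this by letting chores enter and leave the transition zone in finish-time order, so that at every frontier position the straddling chores have a predictable structure, and by allowing the zone to temporarily leave one chore of such a straddling clique uncolored while the remaining straddling chores are recolored one at a time via elementary moves. Showing that this can always be done while simultaneously (i) preserving properness of both color classes, (ii) keeping at most one uncolored chore in the zone, and (iii) making net progress toward the fully swapped coloring is the technical heart of the argument. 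It is also the source of the weaker guarantee in \Cref{thm:Two_Agents_EF2}: the one extra chore that part~(3) may require to restore maximality translates, when the reasoning behind \Cref{lem:Adjacent_EF1} is applied at the switchover point, into an additional chore in the envy bound, hence \EF{2} rather than the \EF{1} of \Cref{thm:Two_Agents_Line_Graph}.
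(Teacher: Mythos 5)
There is a genuine gap: the part you yourself call ``the technical heart of the argument''---defining the intermediate schedules around the frontier and verifying that the elementary moves can always be carried out while keeping both bundles independent, keeping at most one potentially-assignable unallocated chore, and making net progress---is exactly what the lemma requires and is never actually proved. Your sketch also makes this harder than it needs to be. Because you greedily color \emph{every} chore that still has a free color (e.g.\ a chore whose two earlier colored neighbors are both red gets blue), and because the right-hand region of each intermediate schedule is ``greedy rerun on the remaining chores,'' the set of assigned chores changes dynamically along the sweep, and the chores straddling the frontier can indeed form a large clique carrying both colors. You acknowledge this obstacle but do not resolve it; as stated, there is no argument that $O(1)$ elementary moves per frontier position suffice, nor that the ``one uncolored chore in the zone'' invariant can be maintained, nor even that the right-hand rerun of greedy agrees with the original coloring so that the sweep terminates at the swapped schedule $\X'$.

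The paper's proof of \Cref{lem:Adjacent_sequence} sidesteps all of this by fixing, once and for all, the set of chores that will never be assigned: a chore is declared uncolored exactly when it overlaps two or more earlier-finishing \emph{colored} chores, and these uncolored chores stay unassigned in every schedule of the sequence. By the interval structure, each colored chore then overlaps at most one earlier colored chore (necessarily its immediate predecessor), so the colored chores behave like a path and the sliding-blank construction of \Cref{lem:Adjacent_sequence_Line_Graph} applies essentially verbatim: schedule $\X_i$ swaps the parity on the prefix, keeps it on the suffix, and assigns or unassigns the pivot $c_{j_i}$ according to which neighbors it meets. The only new work is the maximality analysis: bucketing the uncolored chores as $U_h$ (those finishing between $c_{j_h}$ and $c_{j_{h+1}}$), one shows that in $\X_i$ no chore outside $U_{i+1}$ can be feasibly added, and within $U_{i+1}$ at most one chore $c^*$ can---which is precisely the one-chore slack permitted by part~(3) and the source of the \EF{2} (rather than \EF{1}) guarantee in \Cref{thm:Two_Agents_EF2}. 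If you want to salvage your plan, the fix is to adopt this static partition into permanently-unassigned chores and path-like colored chores rather than recoloring greedily on the fly; otherwise you must supply the missing frontier analysis, which your current write-up does not.
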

\begin{proof} (of \Cref{lem:Adjacent_sequence})
    We will classify each chore as marked or unmarked via the following iterative process: 
    Consider the chores in 
    the increasing 
    order 
    of their finish times (ties are broken lexicographically), denoted as $c_{1}, c_{2}, \dots, c_{m}$.
    A chore $c$ is classified as \emph{unmarked} if and only if its time interval overlaps with two or more chores which
    have earlier finish time than $c$
    and
    which are classified as marked.
    Consequently, the first two chores in this order, $c_1$ and $c_2$, must be classified as marked, as there cannot be two earlier chores overlapping with them. The classification process then proceeds iteratively for the remaining chores.\\
    Let $\mrk \coloneqq \{c_{j_1},c_{j_2},\dots,c_{j_k}\}$ denote the set of marked chores, and $\overline{\mrk}$ denote the set of unmarked chores. For simplicity of notation, we will henceforth drop $c$ from $c_{j_i}$ when referring to a marked chore. Further, we will assume, without loss of generality, that the chores $j_1,j_2,\dots,j_k$ are in the order of increasing finish time.

\begin{figure*}[h] 
  \centering\begin{tikzpicture}
    \def\rectwidth{7cm}
    \def\rectheight{5.5cm}
    
    \draw (0,0) rectangle (\rectwidth, \rectheight);

    \def\startintervals{{5.2,4.3,4.2,2.5,2.3,3,0.8,1,1.2,0.5}}
    \def\endInterval{{6.5,6,5.5,5,4.8,4.5,4,3.8,2.8,2}}
    \def\height{{0.7,1.15,1.61,2.06,2.52,2.97,3.43,3.88,4.34,4.8}}
    \def\labels{{"$j_5$", "", "$j_4$", "", "", "$j_3$", "", "","$j_2$","$j_1$"}}
    \foreach \y in {1, 2, 3, 4, 5, 6, 7, 8, 9, 10}
    {
        \pgfmathsetmacro\startx{\startintervals[\y-1]}
        \pgfmathsetmacro\endx{\endInterval[\y-1]}
        \pgfmathsetmacro\h{\height[\y-1]}
        \pgfmathsetmacro\labeltext{\labels[\y-1]}

        \draw (\startx, \h) -- (\endx, \h);
        \draw (\startx, \h + 0.1) -- (\startx, \h - 0.1);
        \draw (\endx, \h + 0.1) -- (\endx, \h - 0.1);
        
        \node[right] at (\endx - .4, \h+.3) {\labeltext};
    }
    \draw [decorate, decoration={brace, amplitude=4 pt, mirror}] (0.5,4) -- (0.5,3.2);
    \node[right] at (0.2,3) {\scriptsize $\overline{\mrk}_2$};
    \draw [decorate, decoration={brace, amplitude=4 pt, mirror}] (2.1,2.7) -- (2.1,1.9);
    \node[right] at (1.8,1.7) {\scriptsize $\overline{\mrk}_3$};
    \node[right] at (3.2,1.15) {\scriptsize $\overline{\mrk}_4$};
\end{tikzpicture} 
  \caption{Pictorial representation of marked and unmarked chores classification.}
  \label{fig_marked}
\end{figure*}

    Note that there can be multiple unmarked chores between two consecutive marked chores. For every $i \in \{1,\dots,k\}$, we let $\overline{\mrk}_i$ denote the set of unmarked chores whose finish time is between that of the marked chores $j_i$ and $j_{i+1}$. We let $j_{k+1}$ denote a virtual chore that starts after all other chores have finished; thus, the set $\overline{\mrk}_{k}$ denotes the unmarked chores that finish after $j_k$. \Cref{fig_marked} illustrates an example of marked and unmarked chores.
    
    Now, we will construct the desired sequence of schedules. 
    The unmarked chores will not be assigned to any agent in any of the schedules. 
    We define the $k$ schedules $\X_i = (R_i,B_i)$ for $1\leq i \leq k$, as follows.

\begin{itemize}
\item   $R_1 =\{{j_h} : h \text{ is odd} \}$ and $B_1 = \{{j_h} : h \text{ is even}\}$.

\item For any  $i \in \{2,3, \dots, k-1\}$, 

 \begin{itemize}
\item $R_i$ contains
$\{{j_h} : 1 \leq h < i, h \text{ is even} \} \cup 
\{{j_h} : i < h \leq k, h \text{ is odd} \}$,

\item $B_i$ contains
$\{{j_h} : 1 \leq h < i, h \text{ is odd} \} \cup 
\{{j_h} : i < h \leq k, h \text{ is even} \}$,

\item and the chore ${j_i}$ is assigned or unassigned in the schedule $\X_i$  depending on whether the interval of ${j_i}$ intersects with the intervals of ${j_{i-1}}$ and ${j_{i+1}}$.
Note that one of chores ${j_{i-1}}$ and ${j_{i+1}}$ is in $R_i$ and the other one is in $B_i$. 
If the interval of ${j_i}$ intersects with both, then ${j_i}$ is unassigned. If it intersects with only ${j_{i+1}}$, then
it is assigned to the same agent as ${j_{i-1}}$. 
And otherwise, it is assigned to the same agent as
${j_{i+1}}$.
 \end{itemize}
 
 \item $R_k =\{{j_h} : h \text{ is even} \}$ and $B_k = \{{j_h} : h \text{ is odd}\}$.

\end{itemize}

\emph{Feasibility:} By the definition of marked chores, any marked chore ${j_h}$ can intersect with at most one marked chore with an earlier finish time, which can be only ${j_{h-1}}$. 
But, the two chores ${j_h}$ and ${j_{h-1}}$ 
are not assigned to the same agent in any of the schedules, if their intervals intersect. 
Moreover, the unmarked chores are unassigned in all
the schedules.
Hence, each schedule $\X_i$ is feasible.

\smallskip
\emph{Proof of part (1)}: 
From the construction, $R_1=B_k$ and $B_1=R_k$ immediately follows. 

\smallskip
\emph{Proof of part (2)}:
For any $i\geq 2$, observe that the only chores that can possibly be in $R_i \setminus R_{i-1}$ are ${j_{i-1}}$ and ${j_{i}}$. Moreover, if  ${j_{i-1}}$ is in $R_i \setminus R_{i-1}$, then $i-1$ must be even. In that case, ${j_i}$ must be in $R_{i-1}$, and hence cannot be in $R_i \setminus R_{i-1}$. Thus, we can conclude that $\abs{R_i \setminus R_{i-1}} \leq 1$.

Similarly, the only chores that can possibly be in $R_{i-1} \setminus R_{i}$ are ${j_{i-1}}$ and ${j_{i}}$. Moreover, if ${j_{i}}$ is in $R_{i-1}$, then $i$ must be odd. In that case, ${j_{i-1}}$ must be in $R_i$, and hence cannot be in $R_{i-1} \setminus R_{i}$. Thus, we can conclude that $\abs{R_{i-1} \setminus R_{i}} \leq 1$.

Similarly, one can argue that $\abs{B_i \setminus B_{i-1}} \leq 1$ and $\abs{B_{i-1} \setminus B_{i}} \leq 1$. Thus, the two schedules $\X_{i-1}$ and $\X_i$ are adjacent for any $i\geq 2$.

\smallskip
\emph{Proof of part (3)}:
Consider the schedule $\X_i$ for some $i$. We will argue that there is at most one unassigned chore that can be assigned to some agent without violating feasibility. First, consider the chore ${j_i}$. It is unassigned only if it intersects with two chores, one in $R_i$ and the other in $B_i$. Hence, it cannot be assigned to any agent.

The other unassigned chores are the unmarked chores.
Note that $\overline{\mrk}_1$ is empty.
By the design of the marked and unmarked scheme, any chore in set $\overline{\mrk}_h$  intersects with ${j_{h-1}}$ and ${j_{h}}$, and possibly some other marked chores with earlier finish time.
For any $h$ different from $i$ and $i+1$, observe that chores ${j_{h-1}}$ and ${j_{h}}$ have been assigned to different agents in the schedule $\X_i$.
Hence, any chore in $\overline{\mrk}_h$ cannot be assigned to any agent for $h\neq i$ and $h \neq i+1$. 
      
It remains to consider the unmarked chores in the sets $\overline{\mrk}_i$ and $\overline{\mrk}_{i+1}$. %
We first claim that any chore in $\overline{\mrk}_i$ cannot be assigned to any agent in the schedule $\X_i$. There are two cases: 

(i) when ${j_i}$ and ${j_{i+1}}$ do not intersect, then they both are assigned to the same agent, while ${j_{i-1}}$ is assigned to the other agent. Thus, we get that any chore in $\overline{\mrk}_i$ intersects with two chores which are with different agents, and hence cannot be assigned to any agent.

(ii) when ${j_i}$ and ${j_{i+1}}$ intersect, then ${j_{i+1}}$ must also intersect with every chore in $\overline{\mrk}_i$, because chores in $\overline{\mrk}_i$ have later finish time than ${j_i}$. In this case, any chore in $\overline{\mrk}_i$ intersects with ${j_{i-1}}$ and ${j_{i+1}}$ which are with different agents. Hence, it cannot be assigned to any agent. 

Now, we consider the chores in $\overline{\mrk}_{i+1}$. Any chore in $\overline{\mrk}_{i+1}$ intersecting with ${j_{i-1}}$ cannot be assigned to any agent, because ${j_{i-1}}$ and ${j_{i+1}}$ are with different agents. In case ${j_i}$ and ${j_{i+1}}$ are assigned to different agents, we can say that no chore in $\overline{\mrk}_{i+1}$ can be assigned to any agent. 
      
Consider the case when ${j_{i}}$ and ${j_{i+1}}$ are with the same agent or if ${j_i}$ is unassigned. Let $j$ be the chore in $\overline{\mrk}_{i+1}$ with the earliest finish time that does not intersect with ${j_{i-1}}$.       
If $j$ intersects with ${j_{i+2}}$, then so do the all the following chores in $\overline{\mrk}_{i+1}$. In that case, the schedule is already maximal. Otherwise, we can assign $j$ to the same agent as ${j_{i+2}}$. 
After this, the remaining chores in $\overline{\mrk}_{i+1}$ cannot be assigned to any agent because they all intersect with ${j_{i+1}}$ and $j$. 
Hence, we will end up with a maximal schedule after assigning $j$.
\end{proof}

From \Cref{lem:Adjacent_sequence}, we obtain a feasible schedule which is \EF{1} and can be made maximal by assigning one of its unassigned chores to some agent. The resulting schedule, therefore, satisfies \EF{2}, completing the proof of \Cref{thm:Two_Agents_EF2}.

\subsection{EF1 for two agents and interval graph}

By a more sophisticated coloring argument, it can be shown that an \EF{1} and maximal schedule always exists for interval graphs. We present this proof in \Cref{sec:Proof_Two_Agents_EF1_Interval}. As mentioned in the introduction, \cite{IMY25dividing} show a more general version of this result with a simpler and more intuitive proof. We recommend that the reader refer to their proof. Specifically, their proof further refines our notion of adjacency (\cref{def:adjacent}) and isolates precisely the properties we need, resulting in a cleaner overall argument.

\begin{restatable}[\EF{1} for two agents and interval graph]{theorem}{TwoAgentsEFoneInterval}
There is an algorithm that, given any \CISP{} instance with two agents with monotone valuations and an arbitrary interval graph, returns an \EF{1} and maximal schedule. Furthermore, this algorithm runs in polynomial time assuming access to a value query oracle.
\label{thm:Two_Agents_EF1_Interval}
\end{restatable}

\section{Results for Arbitrary Number of Agents}
\label{sec:Results_n_Agents}

We now turn to the case of an arbitrary number of agents. Here, our main result is that for all $n \geq 3$ agents with identical additive valuations and an arbitrary path graph, an \EF{1} and complete (therefore, maximal) schedule always exists.

Later, we show that if $n \geq 4$ and we additionally assume that the values our dichotomous, such a schedule can be found in polynomial time. This is done in \cref{sec:dichotomous}. Finally, in \cref{sec:bounded-components}, we consider identical valuations and a graph with bounded connected components~(\Cref{thm:n_Agents_Identical_Bounded_Components}) and show that an \EF{1} and complete (therefore, maximal) schedule always exists and can be computed in polynomial time.

We will start with our result for identical additive valuations and an arbitrary path graph.

\begin{restatable}[Path graph and identical valuations]{theorem}{nAgentsIdentical}
Given any \CISP{} instance with $n \geq 3$ agents with identical additive valuations and an arbitrary path graph, an \EF{1} and complete (therefore, maximal) schedule always exists. 
\label{thm:n_Agents_Identical}
\end{restatable}

Below, we first discuss the key ideas in the proof of \Cref{thm:n_Agents_Identical} for the case of $n=3$ agents. We then extend this discussion to provide a detailed proof for general $n$.
Our proof will use the ``cycle plus triangles'' theorem (\Cref{thm:Cycle_Plus_Triangles}).

\begin{restatable}[Cycle plus triangles;~\citealp{FS92solution}]{theorem}{CyclePlusTriangles}
For any positive integer $p$, let $H$ be any simple graph on $3p$ vertices whose set of edges is the disjoint union of a Hamiltonian cycle and $p$ pairwise vertex-disjoint triangles. Then, $H$ is $3$-colorable.
\label{thm:Cycle_Plus_Triangles}
\end{restatable}

Interestingly, \citet{E90some} conjectured that any ``cycle plus triangles'' graph $H$ always admits a $3$-coloring. This conjecture was proven by \citet{FS92solution}. 

To see how the ``cycle plus triangles'' theorem applies to the chore interval scheduling problem, consider reindexing the chores on the path graph $G$ so that $c_1$ is the most disliked chore (recall that the valuations are identical), $c_2$ is the next most disliked chore, and so on (ties are broken arbitrarily). Furthermore, for simplicity, let the number of chores be a multiple of $3$ (i.e., $m = 3p$); if not, we can add some zero-valued dummy chores that will be removed later.

We now construct the graph $H$. For every chore $c_1,\dots,c_m$, we add a vertex in $H$. We group the chores into triples $\{c_1,c_2,c_3\}, \{c_4,c_5,c_6\}$, and so on, and add edges between every pair of chores in each triple resulting in vertex disjoint triangles. Thus, for $i \in [p]$, the $i^\text{th}$ triangle contains the $(3i-2)^\textup{th}$, $(3i-1)^\textup{th}$, and $(3i)^\textup{th}$ most disliked chores. Further, between any pair of chores $c_i$ and $c_j$ that are adjacent along the path graph $G$, we add an edge in $H$. However, if an edge between $c_i$ and $c_j$ already exists, 
then we connect them via a \emph{nine-vertex gadget}; see \Cref{fig:CyclePlusTriangle} in the supplementary material. The gadget helps in avoiding multiedges between vertices by connecting them via a loop of dummy vertices. Finally, we add an edge connecting the extreme chores on the path graph to complete the Hamiltonian cycle---again, we do this via a gadget if the chores already have an edge between them. Observe that $H$ is a supergraph of $G$.

The graph $H$ constructed above satisfies the conditions of the ``cycle plus triangles'' theorem (\Cref{thm:Cycle_Plus_Triangles}) and is therefore $3$-colorable. 
The set of vertices assigned to each color constitutes an independent set of $H$. By treating each color as an agent, we obtain a feasible and complete schedule.

To argue \EF{1}, observe that for each $i \in [p]$, each agent receives exactly one chore from its $(3i-2)^\textup{th}$, $(3i-1)^\textup{th}$, and $(3i)^\textup{th}$ most disliked chores. By removing the worst chore in its bundle, each agent prefers its own bundle over that of any other agent in an \emph{item for item} manner. By additivity, it follows that the schedule is \EF{1}.

The above construction for $n=3$ agents can be extended to a general number of agents. To this end, we will use a generalization of the ``cycle plus triangles'' theorem called the ``cycle plus $n$-cliques'' theorem; the latter applies to graphs consisting of a 
Hamiltonian cycle and vertex disjoint cliques each with at most $n$ vertices. Such graphs are known to always admit an $n$-coloring~\citep{FS97some}. 

\begin{restatable}[Cycle plus $n$-cliques;~\citealp{FS97some}]{theorem}{CyclePlusNCliques}
Let $H$ be any simple graph whose set of edges is the disjoint union of a Hamiltonian cycle and $p$ pairwise vertex-disjoint complete subgraphs each on at most $n$ vertices. Then, $H$ is $n$-colorable.
\label{thm:Cycle_Plus_N_Cliques}
\end{restatable}

\Cref{thm:Cycle_Plus_N_Cliques} is due to \citet{FS97some}. It extends an earlier result called the ``cycle plus triangles'' theorem by the same authors~\citep{FS92solution}, which was stated as \Cref{thm:Cycle_Plus_Triangles}.

In the remainder of this section, we will discuss a procedure to construct a ``cycle plus $n$-cliques'' graph $H$ from the given instance of the chore interval scheduling problem~(\CISP{}) with $n \geq 3$ agents. We will show that any $n$-coloring in the graph $H$ gives an \EF{1} and maximal schedule in the \CISP{} instance. This reduction will establish the desired implication in \Cref{thm:n_Agents_Identical}.

\paragraph{Reducing \CISP{} to ``cycle plus $n$-cliques''.}  We are given that the conflict graph in the given \CISP{} instance is a path graph. Let us refer to this conflict graph as $G$. Since the valuations are identical, we can reindex the chores so that $c_1$ is the most disliked chore, $c_2$ is the next most disliked chore (breaking ties arbitrarily), and so on. For convenience, we can make the number of chores $m$ an integer multiple of $n$ by adding up to $n-1$ dummy chores, each with a value of $0$. These dummy chores will be helpful in applying \Cref{thm:Cycle_Plus_N_Cliques} to the reduced graph $H$ and will be removed later. For an illustration of our reduction with $n=3$ agents, see \Cref{fig:CyclePlusTriangle}.

\begin{figure*}[t]
\centering
    \begin{subfigure}[b]{0.9\linewidth}
    \centering
    \begin{tikzpicture}
            \tikzset{mynode/.style = {shape=circle,draw,inner sep=0pt,minimum size=15pt}}
            \node[mynode] (1) at (0,0) {$c_7$};
            \node[mynode] (2) at (1,0) {$c_{10}$};
            \node[mynode] (3) at (2,0) {$c_9$};
            \node[mynode] (4) at (3,0) {$c_6$};
            \node[mynode] (5) at (4,0) {$c_5$};
            \node[mynode] (6) at (5,0) {$c_2$};
            \node[mynode] (7) at (6,0) {$c_1$};
            \node[mynode] (8) at (7,0) {$c_8$};
            \node[mynode] (9) at (8,0) {$c_4$};
            \node[mynode] (10) at (9,0) {$c_3$};
            \node[mynode,fill=blue!20] (11) at (10,0) {$c_{11}$};
            \node[mynode,fill=blue!20] (12) at (11,0) {$c_{12}$};
            \draw[line width=2pt] (1) -- (2);
            \draw[line width=2pt] (2) -- (3);
            \draw[line width=2pt] (3) -- (4);
            \draw[line width=2pt] (4) -- (5);
            \draw[line width=2pt] (5) -- (6);
            \draw[line width=2pt] (6) -- (7);
            \draw[line width=2pt] (7) -- (8);
            \draw[line width=2pt] (8) -- (9);
            \draw[line width=2pt] (9) -- (10);
            \draw[line width=2pt,dotted] (10) -- (11);
            \draw[line width=2pt,dotted] (11) -- (12);
        \end{tikzpicture}
        \caption{}
        \label{fig:CyclePlusTriangle-path}
        \vspace{0.2in}
    \end{subfigure}
    %
    %
    \begin{subfigure}[b]{0.9\linewidth}
    \centering
    \begin{tikzpicture}
            \tikzset{mynode/.style = {shape=circle,draw,inner sep=0pt,minimum size=15pt}}
            \tikzset{gadgetnode/.style = {shape=rectangle,draw,inner sep=0pt,minimum size=15pt}} 
            \node[mynode] (1) at (1,1.4) {$c_1$};
            \node[mynode] (2) at (0,0) {$c_2$};
            \node[mynode] (3) at (2,0) {$c_3$};
            \node[mynode] (4) at (4,1.4) {$c_4$};
            \node[mynode] (5) at (3,0) {$c_5$};
            \node[mynode] (6) at (5,0) {$c_6$};
            \node[mynode] (7) at (7,1.4) {$c_7$};
            \node[mynode] (8) at (6,0) {$c_8$};
            \node[mynode] (9) at (8,0) {$c_9$};
            \node[mynode] (10) at (10,1.4) {$c_{10}$};
            \node[mynode,fill=blue!20] (11) at (9,0) {$c_{11}$};
            \node[mynode,fill=blue!20] (12) at (11,0) {$c_{12}$};
            \draw (1) -- (2);
            \draw (2) -- (3);
            \draw (1) -- (3);
            \draw (4) -- (5);
            \draw (5) -- (6);
            \draw (4) -- (6);
            \draw (7) -- (8);
            \draw (8) -- (9);
            \draw (7) -- (9);
            \draw (10) -- (11);
            \draw (11) -- (12);
            \draw (10) -- (12);
            \draw[line width=2pt] (7) to[bend left] (10);
            \draw[line width=2pt] (10) to[bend right] (9);
            \draw[line width=2pt] (9) to[bend left] (6);
            \draw[line width=2pt] (5) to[bend left] (2);
            \draw[line width=2pt] (1) to[out=20,in=110,looseness=1.3] (8);
            \draw[line width=2pt] (8) to[bend right] (4);
            \draw[line width=2pt] (4) to[bend right] (3);
            \draw[line width=2pt] (3) to[out=80,in=110] (11);
            \draw[line width=2pt,color=red] (12) to[out=80,in=60] (7);
            %
            %
            %
            \def\x{1.5}
            \def\y{4}
            \node[gadgetnode] (13) at (\x+1,\y+1.4) {$g^1_1$};
            \node[gadgetnode] (14) at (\x+0,\y+0) {$g^1_2$};
            \node[gadgetnode] (15) at (\x+2,\y+0) {$g^1_3$};
            \node[gadgetnode] (16) at (\x+4,\y+1.4) {$g^1_4$};
            \node[gadgetnode] (17) at (\x+3,\y+0) {$g^1_5$};
            \node[gadgetnode] (18) at (\x+5,\y+0) {$g^1_6$};
            \node[gadgetnode] (19) at (\x+7,\y+1.4) {$g^1_7$};
            \node[gadgetnode] (20) at (\x+6,\y+0) {$g^1_8$};
            \node[gadgetnode] (21) at (\x+8,\y+0) {$g^1_9$};
            \draw (13) -- (14);
            \draw (14) -- (15);
            \draw (13) -- (15);
            \draw (16) -- (17);
            \draw (17) -- (18);
            \draw (16) -- (18);
            \draw (19) -- (20);
            \draw (20) -- (21);
            \draw (19) -- (21);
            \draw (13) to[bend left] (19);
            \draw (19) to[bend left] (16);
            \draw (16) to[out=40,in=80,looseness=1.2] (21);
            \draw (21) to[bend left] (18);
            \draw (18) to[bend left] (20);
            \draw (20) to[bend left] (15);
            \draw (15) to[bend left] (17);
            \draw (17) to[bend left] (14);
            %
            %
            %
            \def\x{0.5}
            \def\y{-4}
            \node[gadgetnode] (22) at (\x+1,\y+1.4) {$g^2_1$};
            \node[gadgetnode] (23) at (\x+0,\y+0) {$g^2_2$};
            \node[gadgetnode] (24) at (\x+2,\y+0) {$g^2_3$};
            \node[gadgetnode] (25) at (\x+4,\y+1.4) {$g^2_4$};
            \node[gadgetnode] (26) at (\x+3,\y+0) {$g^2_5$};
            \node[gadgetnode] (27) at (\x+5,\y+0) {$g^2_6$};
            \node[gadgetnode] (28) at (\x+7,\y+1.4) {$g^2_7$};
            \node[gadgetnode] (29) at (\x+6,\y+0) {$g^2_8$};
            \node[gadgetnode] (30) at (\x+8,\y+0) {$g^2_9$};
            \draw (22) -- (23);
            \draw (23) -- (24);
            \draw (22) -- (24);
            \draw (25) -- (26);
            \draw (26) -- (27);
            \draw (25) -- (27);
            \draw (28) -- (29);
            \draw (29) -- (30);
            \draw (28) -- (30);
            \draw (22) to[bend left] (28);
            \draw (28) to[bend left] (25);
            \draw (25) to[out=40,in=80,looseness=1.2] (30);
            \draw (30) to[bend left] (27);
            \draw (27) to[bend left] (29);
            \draw (29) to[bend left] (24);
            \draw (24) to[bend left] (26);
            \draw (26) to[bend left] (23);
            %
            %
            \def\x{10.5}
            \def\y{-4}
            \node[] (31) at (\x+1,\y+1.4) {};
            \node[] (32) at (\x+0,\y+0) {};
            \node[] (33) at (\x+0.5,\y+0.7) {gadget};
            %
            %
            %
            \draw[dashed] (2) to[out=95,in=200,looseness=1.2] (13);
            \draw[dashed] (14) to[bend right,looseness=1.2] (1);
            \draw[dashed] (5) to[bend right] (23);
            \draw[dashed] (22) to[bend left] (6);
            \draw[dashed] (11) to[bend right] (32);
            \draw[dashed] (31) to[bend left] (12);
        \end{tikzpicture}
        \caption{}
        \label{fig:CyclePlusTriangle-construction}
    \end{subfigure}
    %
    %
    %
    %
    \caption{An illustration of the reduction in the proof of \Cref{thm:n_Agents_Identical} for $n=3$ agents. Subfigure (a) shows the path graph consisting of ten original chores $c_7,c_{10},\dots,c_3$ and two dummy chores $c_{11}$ and $c_{12}$ (denoted by shaded nodes). The dummy chores can be considered to be an extension of the path graph as shown. Subfigure (b) shows the corresponding ``cycle plus $n$-cliques'' graph $H$. We divide the chores into groups of size $n$, namely $\{c_1,c_2,c_3\}$, $\{c_4,c_5,c_6\}$, and so on such that $\{c_1,c_2,c_3\}$ are the worst $n$ chores, $\{c_4,c_5,c_6\}$ are the next worst $n$ chores, and so on. The thick black edges simulate the edges of the path graph between chores belonging to different groups. For chores in the same group that are adjacent along the path (i.e., the \emph{special} pairs, namely, $\{c_1,c_2\}$, $\{c_5,c_6\}$, and $\{c_{11},c_{12}\}$), we simulate the edge via a nine-vertex gadget. The square nodes denote the gadget chores. The thin dashed lines denote the connector edges between the special vertices and the gadget vertices. The thick red edge denotes the fictitious edge added between the extreme chores on the path to complete the Hamiltonian cycle. \Cref{fig:CyclePlusTriangle-Coloring} shows a $3$-coloring of the graph $H$.}
\label{fig:CyclePlusTriangle}
\end{figure*}

\begin{figure*}[t]
\centering
    \begin{tikzpicture}
            \tikzset{mynode/.style = {shape=circle,draw,inner sep=0pt,minimum size=15pt}}
            \tikzset{gadgetnode/.style = {shape=rectangle,draw,inner sep=0pt,minimum size=15pt}} 
            \node[mynode] (1) at (1,1.4) {$1$};
            \node[mynode] (2) at (0,0) {$2$};
            \node[mynode] (3) at (2,0) {$3$};
            \node[mynode] (4) at (4,1.4) {$1$};
            \node[mynode] (5) at (3,0) {$3$};
            \node[mynode] (6) at (5,0) {$2$};
            \node[mynode] (7) at (7,1.4) {$1$};
            \node[mynode] (8) at (6,0) {$2$};
            \node[mynode] (9) at (8,0) {$3$};
            \node[mynode] (10) at (10,1.4) {$2$};
            \node[mynode,fill=blue!20] (11) at (9,0) {$1$};
            \node[mynode,fill=blue!20] (12) at (11,0) {$3$};
            \draw (1) -- (2);
            \draw (2) -- (3);
            \draw (1) -- (3);
            \draw (4) -- (5);
            \draw (5) -- (6);
            \draw (4) -- (6);
            \draw (7) -- (8);
            \draw (8) -- (9);
            \draw (7) -- (9);
            \draw (10) -- (11);
            \draw (11) -- (12);
            \draw (10) -- (12);
            \draw[line width=2pt] (7) to[bend left] (10);
            \draw[line width=2pt] (10) to[bend right] (9);
            \draw[line width=2pt] (9) to[bend left] (6);
            \draw[line width=2pt] (5) to[bend left] (2);
            \draw[line width=2pt] (1) to[out=20,in=110,looseness=1.3] (8);
            \draw[line width=2pt] (8) to[bend right] (4);
            \draw[line width=2pt] (4) to[bend right] (3);
            \draw[line width=2pt] (3) to[out=80,in=110] (11);
            \draw[line width=2pt,color=red] (12) to[out=80,in=60] (7);
            %
            %
            %
            \def\x{1.5}
            \def\y{4}
            \node[gadgetnode] (13) at (\x+1,\y+1.4) {$1$};
            \node[gadgetnode] (14) at (\x+0,\y+0) {$2$};
            \node[gadgetnode] (15) at (\x+2,\y+0) {$3$};
            \node[gadgetnode] (16) at (\x+4,\y+1.4) {$2$};
            \node[gadgetnode] (17) at (\x+3,\y+0) {$1$};
            \node[gadgetnode] (18) at (\x+5,\y+0) {$3$};
            \node[gadgetnode] (19) at (\x+7,\y+1.4) {$3$};
            \node[gadgetnode] (20) at (\x+6,\y+0) {$2$};
            \node[gadgetnode] (21) at (\x+8,\y+0) {$1$};
            \draw (13) -- (14);
            \draw (14) -- (15);
            \draw (13) -- (15);
            \draw (16) -- (17);
            \draw (17) -- (18);
            \draw (16) -- (18);
            \draw (19) -- (20);
            \draw (20) -- (21);
            \draw (19) -- (21);
            \draw (13) to[bend left] (19);
            \draw (19) to[bend left] (16);
            \draw (16) to[out=40,in=80,looseness=1.2] (21);
            \draw (21) to[bend left] (18);
            \draw (18) to[bend left] (20);
            \draw (20) to[bend left] (15);
            \draw (15) to[bend left] (17);
            \draw (17) to[bend left] (14);
            %
            %
            %
            \def\x{0.5}
            \def\y{-4}
            \node[gadgetnode] (22) at (\x+1,\y+1.4) {$3$};
            \node[gadgetnode] (23) at (\x+0,\y+0) {$2$};
            \node[gadgetnode] (24) at (\x+2,\y+0) {$1$};
            \node[gadgetnode] (25) at (\x+4,\y+1.4) {$2$};
            \node[gadgetnode] (26) at (\x+3,\y+0) {$3$};
            \node[gadgetnode] (27) at (\x+5,\y+0) {$1$};
            \node[gadgetnode] (28) at (\x+7,\y+1.4) {$1$};
            \node[gadgetnode] (29) at (\x+6,\y+0) {$2$};
            \node[gadgetnode] (30) at (\x+8,\y+0) {$3$};
            \draw (22) -- (23);
            \draw (23) -- (24);
            \draw (22) -- (24);
            \draw (25) -- (26);
            \draw (26) -- (27);
            \draw (25) -- (27);
            \draw (28) -- (29);
            \draw (29) -- (30);
            \draw (28) -- (30);
            \draw (22) to[bend left] (28);
            \draw (28) to[bend left] (25);
            \draw (25) to[out=40,in=80,looseness=1.2] (30);
            \draw (30) to[bend left] (27);
            \draw (27) to[bend left] (29);
            \draw (29) to[bend left] (24);
            \draw (24) to[bend left] (26);
            \draw (26) to[bend left] (23);
            %
            %
            \def\x{10.5}
            \def\y{-4}
            \node[] (31) at (\x+1,\y+1.4) {};
            \node[] (32) at (\x+0,\y+0) {};
            \node[] (33) at (\x+0.5,\y+0.7) {gadget};
            %
            %
            %
            \draw[dashed] (2) to[out=95,in=200,looseness=1.2] (13);
            \draw[dashed] (14) to[bend right,looseness=1.2] (1);
            \draw[dashed] (5) to[bend right] (23);
            \draw[dashed] (22) to[bend left] (6);
            \draw[dashed] (11) to[bend right] (32);
            \draw[dashed] (31) to[bend left] (12);
        \end{tikzpicture}
        \caption{A $3$-coloring of the graph $H$ shown in \Cref{fig:CyclePlusTriangle}.}
\label{fig:CyclePlusTriangle-Coloring}
\end{figure*}

Let $m$ denote the total number of chores in the path graph after adding the dummy chores. The ``cycle plus $n$-cliques'' graph $H$ is constructed as follows: For each chore, we will create a vertex in $H$. To define the edges in $H$, the chores are grouped into $m/n$ groups, formed by putting together blocks of $n$ worst chores as per agents' preferences. That is, the groups are $\{c_1,c_2,\dots,c_n\}$, $\{c_{n+1},c_{n+2},\dots,c_{2n}\}$, and so on. Each group corresponds to an $n$-clique in $H$, meaning we will add an edge between every pair of vertices corresponding to the chores in the same group. In \Cref{fig:CyclePlusTriangle}, this construction is illustrated in the form of four triangles $\{c_1,c_2,c_3\},\{c_4,c_5,c_6\},\{c_7,c_8,c_9\}$ and $\{c_{10},c_{11},c_{12}\}$ in subfigure (b).

Next, we will incorporate the edges in the path graph $G$ into the ``cycle plus $n$-cliques'' graph $H$. For any pair of chores that are adjacent in $G$, we will add an edge between the corresponding vertices in $H$ (shown via thick black edges in \Cref{fig:CyclePlusTriangle}). 
However, there may be chores that are adjacent in the path graph $G$ that belong to the same group; we call such pairs \emph{special}. In \Cref{fig:CyclePlusTriangle}, the pairs $\{c_1,c_2\}$, $\{c_5,c_6\}$, and $\{c_{11},c_{12}\}$ correspond to the special pairs. To connect the vertices corresponding to any special pair of chores, we use a \emph{gadget} (\Cref{fig:CyclePlusTriangle} shows two such gadgets). For any $i$, the $i^\textup{th}$ gadget consists of nine vertices $g^i_1,g^i_2,\dots,g^i_9$. These vertices are connected as follows: First, we group the gadget vertices into three triples $\{g^i_1,g^i_2,g^i_3\},\{g^i_4,g^i_5,g^i_6\}$, and $\{g^i_7,g^i_8,g^i_9\}$ and connect all vertex pairs within the same triple. Next, we add eight more edges, namely $\{g^i_1,g^i_7\}$, $\{g^i_7,g^i_4\}$, $\{g^i_4,g^i_9\}$, $\{g^i_9,g^i_6\}$, $\{g^i_6,g^i_8\}$, $\{g^i_8,g^i_3\}$, $\{g^i_3,g^i_5\}$, and $\{g^i_5,g^i_2\}$. Note that with this construction, the subgraph induced by the gadget consists of a union of pairwise vertex-disjoint triangles and a Hamiltonian cycle except for the missing edge between $g^i_1$ and $g^i_2$. We connect the gadget vertices $g^i_1$ and $g^i_2$ with the vertices corresponding to the $i^\textup{th}$ special pair of chores using two \emph{connector} edges (shown in \Cref{fig:CyclePlusTriangle}, subfigure (b) via thin dashed lines). 
It is relevant to note that for general $n$, while the original chores will induce $n$-cliques, the gadget will nevertheless consist only of triangles.

Finally, we add one more edge between the vertices corresponding to the start and the end chores of the path (including the dummy chores). In \Cref{fig:CyclePlusTriangle}, this edge is shown via a thick red edge between $c_7$ and $c_{12}$. (Again, if an edge already exists between these chores, we connect them through another gadget.) This finishes the construction of the graph~$H$.

\begin{proof} (of \Cref{thm:n_Agents_Identical})
    It is easy to see that the edge set of the graph $H$ consists of a disjoint union of pairwise vertex disjoint $n$-cliques and a Hamiltonian cycle. By the ``cycle plus $n$-cliques'' theorem~(\Cref{thm:Cycle_Plus_N_Cliques}), the graph $H$ admits an $n$-coloring. Note that since $H$ is a supergraph of $G$, an $n$ coloring of $H$ induces an $n$-coloring of $G$.

    Consider any bijection between colors and agents. Assign each chore in $G$ (including dummy chores) to the agent whose associated color is assigned to the corresponding vertex. Since the coloring is proper, no pair of adjacent vertices is assigned to the same color; thus, the assignment is feasible. Furthermore, since all chores are allocated, the schedule is complete and, therefore, maximal.

    To see why the schedule is \EF{1}, observe that each agent receives exactly one chore from the vertices corresponding to each $n$-clique. Fix a pair of agents $h$ and $k$. For any $t \in \{1,2,\dots,m/n\}$, denote the chores in group $t$ as $\{c_{(t-1)n + 1},\dots,c_{(t-1)n + n}\}$. Observe that for any $t \geq 2$, agent $h$ prefers the chore it receives from the $t^\textup{th}$ group over that received by agent $k$ from the $(t-1)^\textup{th}$ group. By additivity, it follows that agent $h$ prefers its bundle (after the removal of the chore from the group $t=1$) over a subset of the bundle of agent $k$. Furthermore, since chores have nonpositive value, agent $h$ prefers its bundle over that of agent $k$ up to the removal of some chore, which implies \EF{1}.
    
    Finally, observe that the vertices corresponding to all dummy chores are located in the same $n$-clique. Since all vertices within each $n$-clique are assigned distinct colors, no agent receives more than one dummy chore. As the dummy chores have zero value, their removal does not affect the \EF{1} guarantee. This proves \Cref{thm:n_Agents_Identical}.
\end{proof}

Finally, we note that for $n=4$ agents, a polynomial-time algorithm is known for finding the $n$-coloring whose existence is guaranteed by the ``cycle plus $n$-cliques'' theorem; see the survey by \citet{A93restricted} and the references therein. The algorithmic problem for other values of $n$ remains open.

\subsection{Polynomial-time Algorithm for Dichotomous Valuations}
\label{sec:dichotomous}

We now show that when $n \geq 4$ and the valuations are additionally assumed to be dichotomous, we can prove an analog of \cref{thm:n_Agents_Identical} where the schedule can be computed in polynomial time. Recall that under identical, additive, and dichotomous valuations, each chore is either ``heavy'' (highly disliked) or ``light'' (less disliked) for all agents. We will assume that heavy chores are valued at $H$ while the light chores are valued at $L$ by each agent; thus, $H < L \leq 0$. 

\begin{restatable}[Path graph and identical dichotomous vals]{theorem}{nAgentsIdenticalDichotomous}
For any $n \geq 4$, there is a polynomial-time algorithm that, given any \CISP{} instance with $n$ agents with identical, dichotomous, and additive valuations and an arbitrary path graph, returns an \EF{1} and maximal schedule.
\label{thm:n_Agents_Identical_Dichotomous}
\end{restatable}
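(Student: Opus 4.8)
Since the conflict graph is a path and $n\ge 3$, a schedule is maximal if and only if it is complete (as observed above), so it suffices to produce a \emph{complete} schedule that is \EF{1}, i.e.\ a proper $n$-colouring of the path $c_1 - c_2 - \cdots - c_m$ whose colour classes are ``balanced enough''. Because the valuations are identical, dichotomous and additive, a bundle $X_i$ is fully described by the pair $(p_i,q_i)$, where $p_i$ is its number of heavy chores, $q_i$ its number of light chores, and we write $n_i = p_i+q_i$. The first step is a short case analysis showing that if a proper colouring satisfies $|p_i-p_j|\le 1$ and $|n_i-n_j|\le 1$ for all colours $i,j$, then the induced schedule is \EF{1}: comparing $X_i$ (after removing its costliest, hence heavy, chore when $p_i\ge 1$) with any $X_k$ reduces to an inequality in the differences $p_i-p_k$ and $n_i-n_k$, and using $H<L\le 0$ (so the per‑chore ``heavy penalty'' strictly exceeds the ``light penalty'', which is itself nonnegative) one checks the inequality holds whenever both differences are at most $1$ in absolute value; the all‑light and empty cases are handled separately and are easier. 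So the combinatorial goal is: a proper $n$-colouring of the path whose classes are simultaneously nearly balanced in total size and in number of heavy vertices.

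To construct it, I would first colour only the heavy chores. In increasing order of finish time the heavy chores induce a subgraph of the path that is a disjoint union of subpaths (two heavies conflict only when consecutive on the line); colouring the $t$-th heavy chore with colour $((t-1)\bmod n)+1$ is automatically proper on that subgraph (consecutive heavies get consecutive, hence distinct, colours) and yields $p_i\in\{\lfloor m_H/n\rfloor,\lceil m_H/n\rceil\}$ for every colour, where $m_H$ is the number of heavy chores. Next, I would colour the light chores so that the totals land in $\{\lfloor m/n\rfloor,\lceil m/n\rceil\}$; given the $p_i$'s this pins down a target light‑class‑size vector $(q_1,\dots,q_n)$, and a small arithmetic case analysis on $r=m_H\bmod n$, $s=m_L\bmod n$ and whether $r+s\le n$ shows a valid such target always exists with each $q_i$ within $2$ of $\lfloor m_L/n\rfloor$ (within $1$ when $r+s\le n$, in which case the target is just a cyclically shifted round robin on the lights). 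The light chores also form a disjoint union of subpaths, and an endpoint of such a subpath that is adjacent on the line to a heavy chore inherits one forbidden colour from that already‑coloured heavy neighbour (two such forbidden colours for an isolated light chore trapped between two heavies). It then remains to colour this union of light subpaths, respecting the endpoint restrictions, so as to realize the prescribed class sizes $(q_1,\dots,q_n)$; combining the three pieces gives a complete (hence maximal) schedule with $|p_i-p_j|\le 1$ and $|n_i-n_j|\le 1$, hence \EF{1}, and every step is manifestly polynomial. The degenerate case $m\le n$ (and other tiny instances) is dispatched directly by giving each chore its own colour.

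I expect the last step---realizing a prescribed, nearly balanced colour‑class‑size vector as a proper colouring of a disjoint union of paths subject to list constraints of size at least $n-2$ at the endpoints---to be the main obstacle, and this is precisely where $n\ge 4$ is essential (for $n=3$ the slack vanishes, consistent with the three‑agent case being left open). I would handle it run by run: for each maximal light run between heavy walls of colours $\alpha$ and $\beta$, the set of admissible colourings is rich enough when $n\ge 4$ (each interior vertex has $\ge n-1\ge 3$ choices, each wall‑adjacent endpoint $\ge n-2\ge 2$) that the choice of how many chores of each colour the run contributes can be made to fit any nearly balanced global budget---either by a greedy left‑to‑right pass that always spends remaining budget on the scarcest colours, or by phrasing the per‑run contributions as an integral flow/transportation problem whose fractional relaxation is evidently feasible. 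Verifying rigorously that this always succeeds for $n\ge 4$ (including short runs with two restrictive walls) is the technical heart of the argument.
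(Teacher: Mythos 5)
Your reduction is sound up to the last step: on a path with $n\ge 3$ agents maximality coincides with completeness, your sufficient condition for \EF{1} (heavy counts within $1$ \emph{and} total bundle sizes within $1$, for identical dichotomous valuations with $H<L\le 0$) checks out, and colouring the heavy chores cyclically in finish-time order is indeed proper and balances the heavy counts. But the proof is not complete, and the missing piece is exactly the one you flag yourself: you never establish that the prescribed, nearly balanced light-class-size vector $(q_1,\dots,q_n)$ can always be realized by a proper colouring of the light runs subject to the endpoint restrictions inherited from the already-coloured heavy walls. The assertion that ``the fractional relaxation is evidently feasible'' is not evident: a run of $\ell$ light chores can contribute at most $\lceil \ell/2\rceil$ chores of any single colour, singleton runs trapped between two heavy walls have only $n-2$ admissible colours, runs at the extremities of the path or bordered by a single wall behave differently, and the target vector itself (which you allow to deviate by up to $2$ from $\lfloor m_L/n\rfloor$) must be shown compatible with these per-run capacities and with the particular pattern of forbidden pairs induced by the cyclic heavy colouring. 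A greedy left-to-right pass ``spending budget on the scarcest colour'' can get stuck at a wall-adjacent endpoint, and an integral-flow formulation needs an explicit Hall/capacity verification; neither is supplied. Since this is the entire combinatorial content of the theorem for $n\ge 4$, the proposal as it stands is a plausible programme rather than a proof.

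For comparison, the paper avoids any ``prescribed colour-class sizes'' lemma altogether. It groups the $n$ agents into meta agents (pairs, plus one triple when $n$ is odd), runs a weighted round robin over the meta agents on the heavy chores left to right and then on the light chores, and pads with dummy isolated chores so that every meta agent holds exactly its proportional share of each type. The left-to-right picking guarantees that each meta agent's bundle is a disjoint union of very short paths (length at most $2$, or at most $4$ for the triple), which can then be split conflict-freely and exactly equitably among its constituent agents; the pre-removal schedule is envy-free, and discarding at most one dummy of each type (and at most one in total) per agent preserves \EF{1}. If you want to salvage your route, the statement you must prove is an equitable-type list-colouring result for disjoint unions of paths with end-lists of size at least $n-2$ and prescribed class sizes; proving it (or replacing it with the paper's meta-agent decomposition) is necessary before the argument can be accepted.
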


\begin{algorithm}[t]
 \DontPrintSemicolon
\KwIn{A \CISP{} instance $\langle \A, \C, \T, \V \rangle$ with a path graph}
\KwOut{A schedule $\X$}
\BlankLine
\Comment{Phase 1: Grouping agents into meta agents}
\BlankLine
   If $n$ is even: For $i \in [n/2]$, let the meta agent $S_i$ represent the pair of agents $a_{2i-1}$ and $a_{2i}$.\;
   If $n$ is odd: Let the meta agent $S_1$ represent the triple of agents $a_1,a_2,a_3$ and for $i \in \{2,3,\dots,(n-1)/2\}$, let the meta agent $S_i$ represent the agents $a_{2i}$ and $a_{2i+1}$.\;
\BlankLine
\BlankLine
\Comment{Phase 2: Weighted round robin for meta agents}
\BlankLine
If $n$ is even: First allocate heavy chores from left to right via round robin among the meta agents; specifically, we use the picking sequence $\langle S_1,S_2,\dots,S_{n/2},S_1,S_2,\dots,S_{n/2} \rangle$. Then allocate the light chores from left to right according to the same sequence, starting after the meta agent who last picked a heavy chore.\label{algline:Even_Sequence}\;
If $n$ is odd: First allocate heavy chores from left to right via weighted round robin among meta agents according to the picking sequence $\langle S_1,S_2,\dots,S_{(n-1)/2},S_1,S_2,\dots,S_{(n-1)/2},S_1 \rangle$. Then allocate light chores from left to right according to the same sequence, starting after the meta agent who last picked a heavy chore.\label{algline:Odd_Sequence}\;
Assign dummy chores (by adding isolated vertices in the conflict graph) to equalize the number of heavy (similarly, light) chores for all meta agents representing pairs of agents. If $n$ is odd, the meta agent $S_1$ receives $1.5$ times as many chores of each type.
\BlankLine
\BlankLine
\Comment{Phase 3: Assigning chores to individual agents}
\BlankLine
Solve the 2-agent and 3-agent sub-problems by giving each meta agent's chores to its constituent agents.\;
Remove the dummy chores.\;
\KwRet{Current schedule}\;
\caption {Algorithm for Identical Dichotomous Valuations}
\label{alg:Identical_Dichotomous}
\end{algorithm}

Our algorithm for establishing \Cref{thm:n_Agents_Identical_Dichotomous} consists of three phases (see Algorithm~\ref{alg:Identical_Dichotomous}):

\paragraph{Phase 1: Grouping agents into meta agents}
The algorithm starts by partitioning the $n$ agents into pairs (if $n$ is even) or pairs and one triple (if $n$ is odd). We refer to each such group as a \emph{meta agent}.

\paragraph{Phase 2: Weighted round robin for meta agents}
In the second phase, the algorithm runs a \emph{weighted round-robin} procedure for the meta agents. If the number of agents $n$ is even, the procedure is the same as the standard round robin algorithm.\footnote{For ease of analysis, we execute each ``round'' as a combination of two copies of the same permutation of the agents, i.e., using the picking sequence $\langle S_1,S_2,\dots,S_{n/2},S_1,S_2,\dots,S_{n/2} \rangle$, where $S_1,S_2,\dots,S_{n/2}$ are the meta agents. Doing so ensures that each meta agent picks an even number of chores in each round.}
However, if $n$ is odd, we need to adjust the turn-taking so that the meta agent corresponding to the triple picks 1.5 times as many chores as any other meta agent (see Line~\ref{algline:Odd_Sequence} of Algorithm~\ref{alg:Identical_Dichotomous}).\footnote{We use the picking sequence $\langle S_1,S_2,\dots,S_{(n-1)/2},S_1,S_2,\dots,S_{(n-1)/2},S_1 \rangle$ among the meta agents, which ensures that the number of chores picked by each meta agent representing a pair of agents is a multiple of $2$ while the number of chores picked by meta agent $S_1$ is a multiple of $3$.}

The weighted round-robin step runs in two phases: A \emph{heavy phase}, where, at its turn, each meta agent picks the leftmost available heavy chore. This is followed by a \emph{light phase}, where, starting after the meta agent who was the last to pick in the previous phase, each meta agent picks the leftmost available light chore. Note that it is possible for a meta agent to pick two neighboring chores in the path graph.

By adding an appropriate number of \emph{dummy} chores (heavy and light), the algorithm ensures that each meta agent representing a pair of agents gets the same number of heavy (similarly, light) chores, and the meta agent representing a triple gets 1.5 times as many chores of each type.

\Cref{fig:WeightedRoundRobin_Even,fig:WeightedRoundRobin_Odd} provide a visual illustration of the weighted round-robin step in our algorithm (Algorithm~\ref{alg:Identical_Dichotomous}).

\begin{figure*}
    \begin{tikzpicture}
	\node[] (1) at (0,-1) {\Large{$S_1$}};
        \node[] (2) at (2.4,-1) {\Large{$S_i$}};
        \node[] (2a) at (3.6,-1) {\Large{$S_{i+1}$}};
        \node[] (3) at (5.4,-1) {\Large{$S_{n/2}$}};
        \node[] (4) at (6.6,-1) {\Large{$S_1$}};
        \node[] (5) at (8.4,-1) {\Large{$S_i$}};
        \node[] (5a) at (9.6,-1) {\Large{$S_{i+1}$}};
        \node[] (6) at (12,-1) {\Large{$S_{n/2}$}};
        \path (1) -- node[auto=false]{\dots} (2);
        \path (2a) -- node[auto=false]{\dots} (3);
        \path (4) -- node[auto=false]{\dots} (5);
        \path (5a) -- node[auto=false]{\dots} (6);

        \draw[ultra thick][fill=gray!40] (-0.4,-2.75) rectangle (0.4,-1.95);
        \draw[ultra thick][fill=gray!40] (2,-2.75) rectangle (2.8,-1.95);
        \draw[ultra thick][fill=gray!40] (3.2,-2.75) rectangle (4,-1.95);
        \draw[ultra thick][fill=gray!40] (5,-2.75) rectangle (5.8,-1.95);
        \draw[ultra thick][fill=gray!40] (6.2,-2.75) rectangle (7,-1.95);
        \draw[ultra thick][fill=gray!40] (8,-2.75) rectangle (8.8,-1.95);
        \draw[ultra thick][fill=gray!40] (9.2,-2.75) rectangle (10.0,-1.95);
        \draw[ultra thick][fill=gray!40] (11.6,-2.75) rectangle (12.4,-1.95);

        \path (0, -2.25) -- node[auto=false]{\dots} (2.4,-2.25);
        \path (3.6, -2.25) -- node[auto=false]{\dots} (5.4,-2.25);
        \path (6, -2.25) -- node[auto=false]{\dots} (9,-2.25);
        \path (9.6, -2.25) -- node[auto=false]{\dots} (12,-2.25);

        \draw[thick][fill=gray!40] (-0.4,-4) rectangle (0.4,-3.2);
        \draw[thick][fill=gray!40] (2,-4) rectangle (2.8,-3.2);
        \draw[thick, dashed][fill=gray!10] (3.2,-4) rectangle (4,-3.2);
        \draw[thick, dashed][fill=gray!10] (5,-4) rectangle (5.8,-3.2);
        \draw[thick, dashed][fill=gray!10] (6.2,-4) rectangle (7,-3.2);
        \draw[thick, dashed][fill=gray!10] (8,-4) rectangle (8.8,-3.2);
        \draw[thick, dashed][fill=gray!10] (9.2,-4) rectangle (10.0,-3.2);
        \draw[thick, dashed][fill=gray!10] (11.6,-4) rectangle (12.4,-3.2);

        \path (0, -3.5) -- node[auto=false]{\dots} (2.4,-3.5);
        \path (3.6, -3.5) -- node[auto=false]{\dots} (5.4,-3.5);
        \path (6, -3.5) -- node[auto=false]{\dots} (9,-3.5);
        \path (9.6, -3.5) -- node[auto=false]{\dots} (12,-3.5);

        \draw[ultra thick][fill=gray!40] (3.35,-5.25) rectangle (3.85,-4.75);
        \draw[ultra thick][fill=gray!40] (5.15,-5.25) rectangle (5.65,-4.75);
        \draw[ultra thick][fill=gray!40] (6.35,-5.25) rectangle (6.85,-4.75);
        \draw[ultra thick][fill=gray!40] (8.15,-5.25) rectangle (8.65,-4.75);
        \draw[ultra thick][fill=gray!40] (9.35,-5.25) rectangle (9.85,-4.75);
        \draw[ultra thick][fill=gray!40] (11.75,-5.25) rectangle (12.25,-4.75);

        \path (3.6, -5) -- node[auto=false]{\dots} (5.4,-5);
        \path (6.6, -5) -- node[auto=false]{\dots} (8.4,-5);
        \path (9.6, -5) -- node[auto=false]{\dots} (12,-5);

        \draw[ultra thick][fill=gray!40] (-0.25,-6) rectangle (0.25,-5.5);
        \draw[ultra thick][fill=gray!40] (2.15,-6) rectangle (2.65,-5.5);
        \draw[thick][fill=gray!40] (3.35,-6) rectangle (3.85,-5.5);
        \draw[thick][fill=gray!40] (5.15,-6) rectangle (5.65,-5.5);
        \draw[thick][fill=gray!40] (6.35,-6) rectangle (6.85,-5.5);
        \draw[thick][fill=gray!40] (8.15,-6) rectangle (8.65,-5.5);
        \draw[thick, dashed][fill=gray!10] (9.35,-6) rectangle (9.85,-5.5);
        \draw[thick, dashed][fill=gray!10] (11.75,-6) rectangle (12.25,-5.5);
        
        \path (0, -5.75) -- node[auto=false]{\dots} (2.4,-5.75);
        \path (3.6, -5.75) -- node[auto=false]{\dots} (5.4,-5.75);
        \path (6.6, -5.75) -- node[auto=false]{\dots} (8.4,-5.75);
        \path (9.6, -5.75) -- node[auto=false]{\dots} (12,-5.75);

        \draw[thick, dashed][fill=gray!10] (-0.25,-6.75) rectangle (0.25,-6.25);
        \draw[thick, dashed][fill=gray!10] (2.15,-6.75) rectangle (2.65,-6.25);
        
        \path (0, -6.5) -- node[auto=false]{\dots} (2.4,-6.5);

        \draw node at (-2, -2.5) {Heavy Phase};
        \draw node at (-2, -6) {Light Phase};
        
    \end{tikzpicture}
    \vspace{0.01in}
    \caption{Illustrating the weighted round-robin algorithm when the number of agents is even. The large and small squares denote the heavy and light chores, respectively. The strongly (respectively, lightly) shaded squares denote the chores that are originally present (respectively, the dummy chores). Thick borders around the squares denote that in that round, all agents receive an original chore, while the squares with regular borders denote the round where some agents picked original chores while others did not because the original chores were consumed.}
    \label{fig:WeightedRoundRobin_Even}
\end{figure*}

\begin{figure*}
    \begin{tikzpicture}
	\node[] (1) at (0,-1) {\Large{$S_1$}};
        \node[] (2) at (2.4,-1) {\Large{$S_i$}};
        \node[] (3) at (3.6,-1) {\Large{$S_{i + 1}$}};
        \node[] (4) at (6,-1) {\Large{$S_1$}};
        \node[] (5) at (8.4,-1) {\Large{$S_i$}};
        \node[] (6) at (9.6,-1) {\Large{$S_{i + 1}$}};
        \node[] (7) at (12,-1) {\Large{$S_1$}};
        \path (1) -- node[auto=false]{\dots} (2);
        \path (3) -- node[auto=false]{\dots} (4);
        \path (4) -- node[auto=false]{\dots} (5);
        \path (6) -- node[auto=false]{\dots} (7);

        \draw[ultra thick][fill=gray!40] (-0.4,-2.75) rectangle (0.4,-1.95);
        \draw[ultra thick][fill=gray!40] (2,-2.75) rectangle (2.8,-1.95);
        \draw[ultra thick][fill=gray!40] (3.2,-2.75) rectangle (4,-1.95);
        \draw[ultra thick][fill=gray!40] (5.6,-2.75) rectangle (6.4,-1.95);
        \draw[ultra thick][fill=gray!40] (8,-2.75) rectangle (8.8,-1.95);
        \draw[ultra thick][fill=gray!40] (9.2,-2.75) rectangle (10.0,-1.95);
        \draw[ultra thick][fill=gray!40] (11.6,-2.75) rectangle (12.4,-1.95);

        \path (0, -2.25) -- node[auto=false]{\dots} (2.4,-2.25);
        \path (3.6, -2.25) -- node[auto=false]{\dots} (6,-2.25);
        \path (6, -2.25) -- node[auto=false]{\dots} (8.4,-2.25);
        \path (9.6, -2.25) -- node[auto=false]{\dots} (12,-2.25);

        \draw[thick][fill=gray!40] (-0.4,-4) rectangle (0.4,-3.2);
        \draw[thick][fill=gray!40] (2,-4) rectangle (2.8,-3.2);
        \draw[thick, dashed][fill=gray!10] (3.2,-4) rectangle (4,-3.2);
        \draw[thick, dashed][fill=gray!10] (5.6,-4) rectangle (6.4,-3.2);
        \draw[thick, dashed][fill=gray!10] (8,-4) rectangle (8.8,-3.2);
        \draw[thick, dashed][fill=gray!10] (9.2,-4) rectangle (10.0,-3.2);
        \draw[thick, dashed][fill=gray!10] (11.6,-4) rectangle (12.4,-3.2);

        \path (0, -3.5) -- node[auto=false]{\dots} (2.4,-3.5);
        \path (3.6, -3.5) -- node[auto=false]{\dots} (6,-3.5);
        \path (6, -3.5) -- node[auto=false]{\dots} (8.4,-3.5);
        \path (9.6, -3.5) -- node[auto=false]{\dots} (12,-3.5);

        \draw[ultra thick][fill=gray!40] (3.35,-5.25) rectangle (3.85,-4.75);
        \draw[ultra thick][fill=gray!40] (5.75,-5.25) rectangle (6.25,-4.75);
        \draw[ultra thick][fill=gray!40] (8.15,-5.25) rectangle (8.65,-4.75);
        \draw[ultra thick][fill=gray!40] (9.35,-5.25) rectangle (9.85,-4.75);
        \draw[ultra thick][fill=gray!40] (11.75,-5.25) rectangle (12.25,-4.75);

        \path (3.6, -5) -- node[auto=false]{\dots} (6,-5);
        \path (6, -5) -- node[auto=false]{\dots} (8.4,-5);
        \path (9.6, -5) -- node[auto=false]{\dots} (12,-5);

        \draw[ultra thick][fill=gray!40] (-0.25,-6) rectangle (0.25,-5.5);
        \draw[ultra thick][fill=gray!40] (2.15,-6) rectangle (2.65,-5.5);
        \draw[thick][fill=gray!40] (3.35,-6) rectangle (3.85,-5.5);
        \draw[thick][fill=gray!40] (5.75,-6) rectangle (6.25,-5.5);
        \draw[thick][fill=gray!40] (8.15,-6) rectangle (8.65,-5.5);
        \draw[thick, dashed][fill=gray!10] (9.35,-6) rectangle (9.85,-5.5);
        \draw[thick, dashed][fill=gray!10] (11.75,-6) rectangle (12.25,-5.5);
        
        \path (0, -5.75) -- node[auto=false]{\dots} (2.4,-5.75);
        \path (3.6, -5.75) -- node[auto=false]{\dots} (6,-5.75);
        \path (6, -5.75) -- node[auto=false]{\dots} (8.4,-5.75);
        \path (9.6, -5.75) -- node[auto=false]{\dots} (12,-5.75);

        \draw[thick, dashed][fill=gray!10] (-0.25,-6.75) rectangle (0.25,-6.25);
        \draw[thick, dashed][fill=gray!10] (2.15,-6.75) rectangle (2.65,-6.25);
        
        \path (0, -6.5) -- node[auto=false]{\dots} (2.4,-6.5);

        \draw node at (-2, -2.5) {Heavy Phase};
        \draw node at (-2, -6) {Light Phase};
        
    \end{tikzpicture}
    \vspace{0.01in}
    \caption{Illustrating the weighted round-robin algorithm when the number of agents is odd. The drawing convention is the same as described in the caption of \Cref{fig:WeightedRoundRobin_Even}.}
    \label{fig:WeightedRoundRobin_Odd}
\end{figure*}

\paragraph{Phase 3: Assigning chores to individual agents}
In the third phase, the algorithm distributes the chores of each meta agent among its constituent agents; in other words, the algorithm solves the 2-agent and 3-agent subproblems. At the end of this phase, each individual agent has the same number of heavy (similarly, light) chores. Finally, the algorithm discards the dummy chores and the resulting schedule is returned.

It is easy to see that the algorithm runs in polynomial time. To argue correctness, we will show that before the dummy chores are removed in Phase 3, each agent gets an equal number of heavy (similarly, light) chores; in other words, before the removal of dummy chores, the schedule is envy-free. 

Note that since there are at least four agents $(n \geq 4)$, there must be at least two meta agents. If $n$ is even, a meta agent never picks two heavy (respectively, light) chores that are adjacent in the path graph (recall that each meta agent picks the \emph{leftmost} available chore). Thus, the set of chores picked by a meta agent must constitute a disjoint union of edges (between one heavy and one light chore) and isolated vertices. \Cref{lem:Two_Agent_Subproblem} formalizes this observation.

\begin{restatable}{lemma}{TwoAgentSubproblem}
For any meta agent $S_i$ that represents a pair of agents, the set of chores picked by $S_i$ is a disjoint union of edges (between one heavy and one light chore) and isolated vertices. Furthermore, these chores can be divided among the two constituent agents such that both agents get the same number of heavy (similarly, light) chores and their bundles are conflict-free.
\label{lem:Two_Agent_Subproblem}
\end{restatable}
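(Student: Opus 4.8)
The plan is to analyze the outcome of the weighted round-robin phase directly, exploiting the leftmost-first picking rule on the path graph. The starting point for part~(1) is the following combinatorial observation: since there are at least two meta agents, the picking sequence used in the heavy phase has the property that between any two consecutive turns of a meta agent $S_i$ representing a pair, at least one \emph{other} meta agent takes a turn, and the same holds in the light phase. (For odd $n$ this uses the restriction of the lemma to meta agents representing pairs; the triple-agent $S_1$ is special and is excluded.) Because the chores of each type are consumed in increasing order of index, it follows that between any two chores of the same type picked by $S_i$ there is a chore of that type, assigned to some other meta agent, with index strictly in between. In particular, no two heavy chores of $S_i$ are adjacent in the path, and no two light chores of $S_i$ are adjacent.

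To obtain the full structure, I would decompose the chores of $S_i$ into maximal runs of vertices with consecutive indices. Inside such a run the types alternate (by the previous paragraph), so a run of length at least three would contain two chores $c_{j-1}$ and $c_{j+1}$ of the same type that are consecutive among $S_i$'s chores of that type. But then the observation above forces a chore of that same type, assigned to another meta agent, to have index strictly between $j-1$ and $j+1$, i.e., to be $c_j$ --- contradicting that $c_j$ has the opposite type. Hence every run has length at most two: a length-two run is an edge joining one heavy and one light chore, and a length-one run is an isolated vertex. Dummy chores, being isolated vertices of the conflict graph, only add further isolated vertices, so they do not disturb this picture. This proves part~(1).

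For part~(2), I would use a counting argument. By construction (the doubled-permutation picking sequence, with the last round filled out by dummy chores), $S_i$ receives an even number $2a$ of heavy chores and an even number $2b$ of light chores; let $E$ be the number of heavy-light edges, so $E \le \min\{2a, 2b\}$ since the components are vertex-disjoint and each edge uses exactly one heavy and one light chore. In any conflict-free split of $S_i$'s chores between the two agents, the two endpoints of each edge must go to different agents, while isolated vertices may be distributed freely. Letting $x$ be the number of edges whose heavy endpoint is assigned to agent~$1$, and then choosing how many isolated heavy and light chores go to agent~$1$, the requirement that both agents receive exactly $a$ heavy and $b$ light chores reduces to finding an integer $x$ with $\max\{0,\,E-a,\,E-b\} \le x \le \min\{E,\,a,\,b\}$. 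A short case check (using $E \le 2a$, $E \le 2b$, hence $E \le a+b$) shows this interval is nonempty, and any such $x$ yields the desired balanced division into two conflict-free bundles.

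The main obstacle is part~(1), and within it the ruling-out of three-vertex runs: knowing only that same-type chores of $S_i$ are pairwise non-adjacent is \emph{not} sufficient (it would still permit a heavy--light--heavy path), so one genuinely needs the stronger ``another meta agent picks in between'' property together with the leftmost-first rule to pin down the index of the intervening chore. Once part~(1) is in hand, part~(2) is routine.
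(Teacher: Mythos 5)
Your proposal is correct and follows the same overall route as the paper---derive the edge/isolated-vertex structure of $S_i$'s bundle from the leftmost-first round-robin picks, then balance the heavy and light chores between the two constituent agents---but it differs in two worthwhile ways. For part (1), the paper only observes that a pair meta agent can never receive two \emph{adjacent} chores of the same type and immediately concludes that every component has at most two vertices; as you note, that observation alone does not exclude a heavy--light--heavy run of three consecutive chores, and your stronger invariant (between any two same-type chores of $S_i$ there is a same-type chore of another meta agent with index strictly in between, since each type is consumed left to right and some other meta agent picks between consecutive turns of $S_i$) is precisely what is needed to rule out three-vertex runs; in this respect your write-up is more careful than the paper's own. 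For part (2), the paper gives an explicit construction split into the cases of an even or odd number of edges, using the evenness of the per-type counts after dummy padding; your argument instead reduces the task to exhibiting an integer $x$ in the interval $[\max\{0,\,E-a,\,E-b\},\,\min\{E,\,a,\,b\}]$ and verifies nonemptiness from $E \le 2a$ and $E \le 2b$, which is equivalent but replaces the parity case analysis with a single counting check. Both routes rest on the same evenness guarantee coming from the doubled picking sequence and the dummy chores, so the two proofs are interchangeable, with yours arguably tighter on part (1) and more uniform on part (2).
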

\begin{proof}
Since there are at least four agents ($n \geq 4$), there must be at least two meta agents. Thus, no meta agent can pick two adjacent heavy (respectively, light) chores in the path graph. Thus, the set of chores picked by any meta agent, say $S_i$, that represents a pair of agents is a disjoint collection of paths such that each path is of length at most $2$. In other words, the chores picked up by the meta agent $S_i$ consist of isolated vertices and disjoint edges, where each edge connects a heavy and a light chore.

By the choice of picking sequences in our algorithm (Lines~\ref{algline:Even_Sequence} and~\ref{algline:Odd_Sequence} in Algorithm~\ref{alg:Identical_Dichotomous}), any meta agent that represents a pair of agents picks two chores in each round. Thus, the number of heavy (similarly, light) chores in each meta agent's bundle is even. This includes the dummy chores.

Let $x$ denote the number of edges and $y$ denote the number of isolated vertices in the bundle of meta agent $S_i$. We know that $2x+y$ is even, implying that $y$ is even. Let $a_j$ and $a_{j+1}$ denote the constituent agents of $S_i$.

If $x$ is even, then there are $x$ heavy chores each connected to a unique light chore. For $x/2$ of the edges, the heavy chore can be given to $a_j$ and the light chore to $a_{j+1}$. For the other $x/2$ edges, the heavy chore can be given to $a_{j+1}$ and the light chore to $a_{j}$. Among the $y$ isolated chores, there should again be an even number of heavy (respectively, light) chores, which can be equitably distributed between the two agents.

If $x$ is odd, then for $\lfloor x/2 \rfloor$ of the edges, the heavy chore can be given to $a_j$ and the light chore to $a_{j+1}$. For the other $\lceil x/2 \rceil$ edges, the heavy chore can be given to $a_{j+1}$ and the light chore to $a_{j}$. Thus, so far, agent $a_{j+1}$ has an extra heavy chore, and agent $a_j$ has an extra light chore. Now, among the $y$ isolated chores, there should be an odd number of heavy and an odd number of light chores. By giving an extra heavy chore to $a_j$ and an extra light chore to $a_{j+1}$ among the isolated chores, and allocating the remaining chores evenly, once again an equitable distribution of heavy and light chores can be achieved.
\end{proof}

When the number of agents is odd, the conclusion of \Cref{lem:Two_Agent_Subproblem} still holds true for the meta agents representing pairs of agents. However, for the meta agent $S_1$ which represents a triple of agents, the set of chores constitutes a disjoint union of paths each of length at most $4$. Here, a more careful argument is needed to show an equitable distribution of chores among the three constituent agents of the meta agent $S_1$.

\begin{restatable}{lemma}{ThreeAgentSubproblem}
For the meta agent $S_1$ that represents a triple of agents, the set of chores picked by $S_1$ is a disjoint union of paths of length at most $4$. Furthermore, these chores can be divided among the three constituent agents such that all three agents get the same number of heavy (similarly, light) chores and their bundles are conflict-free.
\label{lem:Three_Agent_Subproblem}
\end{restatable}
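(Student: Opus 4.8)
The plan is to follow the blueprint of \Cref{lem:Two_Agent_Subproblem}: first pin down the shape of the bundle that the weighted round robin hands to the triple meta agent $S_1$, and then show that this bundle can be split three ways with the required balance.

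\emph{Shape of $S_1$'s bundle.} I would start from the picking sequence of \Cref{alg:Identical_Dichotomous} (Line~\ref{algline:Odd_Sequence}), namely $\langle S_1, S_2, \dots, S_{(n-1)/2}, S_1, S_2, \dots, S_{(n-1)/2}, S_1 \rangle$ repeated. Within a single copy of this sequence, $S_1$ takes three turns, and between two turns of $S_1$ that lie in the same copy there is at least one other meta agent, because $(n-1)/2 \geq 2$ whenever $n \geq 5$ is odd. The only place $S_1$ takes two turns back to back is at the junction between consecutive copies. Since the algorithm assigns all heavy chores before any light chore and always picks the leftmost available chore of the relevant type, it follows that, reading the heavy chores from left to right, $S_1$ acquires at most two consecutive ones, and the same holds for the light chores. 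Hence, in any maximal run of chores $c_j, c_{j+1}, \dots, c_{j+\ell-1}$ all assigned to $S_1$, the heavy chores of the run are consecutive in the global left-to-right order of heavy chores (nothing between them was taken by anyone else), so there are at most two of them; symmetrically there are at most two light chores; thus $\ell \leq 4$. This also yields the refined facts that a run of four chores has exactly two heavy and two light chores and a run of three chores has exactly two chores of one type (a run of three or four with three chores of one type would force three consecutive same-type picks by $S_1$), which will be convenient in the second part.

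\emph{Splitting $S_1$'s bundle among $a_1, a_2, a_3$.} By the construction of \Cref{alg:Identical_Dichotomous}, after the dummy chores are inserted $S_1$ holds $3/2$ times as many chores of each type as a pair meta agent, and a pair meta agent holds an even number of chores of each type; hence $S_1$'s heavy count and light count are each divisible by $3$. Let $h^\star$ and $\ell^\star$ be one third of these counts. I need an assignment of $S_1$'s chores to $\{a_1, a_2, a_3\}$ that is conflict-free (adjacent chores to different agents) and gives each agent exactly $h^\star$ heavy and $\ell^\star$ light chores. Conflict-freeness by itself is immediate, since every connected component of the bundle is a path; the work is entirely in the balance. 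I would argue component by component: by the structural claim each component is a path on at most four vertices with at most two heavy and at most two light vertices, and I would enumerate the possible $H/L$-patterns (the singletons $H$ and $L$; the edges $HH$, $HL$, $LL$; the three-vertex paths $HHL$, $HLH$, $LHH$ and their $1H\,2L$ mirrors; and the four-vertex paths, which by the refined fact above are all $2H\,2L$). For each pattern I would record which distributions of its heavy and light chores to the three agents are realizable by a proper coloring --- for instance an $HH$- or $LL$-edge is forced to split its two chores between two distinct agents, while an $HLH$-path may put both heavies on one agent or on two --- and then show that these local options can be stitched together to reach the target tallies $(h^\star, h^\star, h^\star)$ of heavies and $(\ell^\star, \ell^\star, \ell^\star)$ of lights. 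The enabling observation is that the ``forced-distinct'' components come in rotatable families: three $HH$-edges can be colored $\{a_1,a_2\}$, $\{a_2,a_3\}$, $\{a_1,a_3\}$ so as to add $(2,2,2)$ to the heavy tally, and because the heavy and light totals are multiples of $3$ the bookkeeping always closes.

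The first part is routine once the picking sequence is unpacked. I expect the balance argument of the second part to be the main obstacle: the conflict constraints prevent a component's heavy (and light) chores from being routed freely, so one has to verify that these local obstructions never clash with an exact $1{:}1{:}1$ split. I would handle this by the finite case analysis on component patterns sketched above, combined with a pairing/rotation argument that cancels the forced asymmetries, leaning on divisibility by $3$ to make the counts match.
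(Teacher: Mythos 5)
Your first half is correct and is essentially the paper's argument: the picking sequence gives $S_1$ at most two back-to-back turns, so within any maximal run of $S_1$'s chores there are at most two heavy and at most two light chores (the heavies of a run being consecutive in the global heavy order, and likewise for lights), hence every component is a path on at most four vertices; and the divisibility-by-$3$ of the dummy-augmented heavy and light counts is exactly the property the paper invokes. Your refinements (a $4$-path has two chores of each type, a $3$-path has two of one type and one of the other) also match the paper's case analysis.

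The gap is in the second half, which is the actual content of the lemma: you never prove that the per-component realizable distributions can be combined into an exact $1{:}1{:}1$ split of the heavies and, simultaneously, of the lights. The only concrete device you offer --- rotating triples of $HH$-edges to contribute $(2,2,2)$ --- does not close the bookkeeping in general, since forced-split components need not occur in multiples of three and the heavy and light balances interact through the adjacency constraints (e.g.\ in an $HLH$ component the agent receiving the light chore is constrained by who receives the heavies); ``the bookkeeping always closes because the totals are multiples of $3$'' is precisely the claim that needs proof. The paper takes a different, local route here: it fixes explicit per-component rules (a $3$-path is split among three distinct agents; in a $4$-path one agent gets one heavy and one light non-adjacent chore while the other two get the remaining heavy and light chore; an edge is split between two agents; isolated vertices are free) and processes components one at a time while maintaining the invariant that any two agents' heavy counts, and any two agents' light counts, differ by at most one; exactness then follows at the end from this invariant together with divisibility by $3$. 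Supplying some such invariant-based greedy (or actually carrying out a global flow/stitching feasibility argument) is the missing piece; as written, your proposal identifies the obstacle but does not overcome it.
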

\begin{proof}
We will use an argument similar to that in the proof of \Cref{lem:Two_Agent_Subproblem}. By the choice of the picking sequence in our algorithm (Line~\ref{algline:Odd_Sequence} in Algorithm~\ref{alg:Identical_Dichotomous}), the meta agent $S_1$ can never pick three consecutive heavy (respectively, light) chores in each round. Thus, the set of chores picked by $S_1$ is a disjoint union of paths each of length at most $4$. Furthermore, the number of heavy (similarly, light) chores in $S_1$'s bundle is a multiple of $3$. This includes the dummy chores.

Let $a_1,a_2,a_3$ denote the constituent agents of $S_1$. We will show that each of these agents can be assigned an equal number of heavy (similarly, light) chores such that no two chores in an agent's bundle are adjacent. Recall that the total number of heavy (similarly, light) chores is a multiple of $3$. Thus, it will suffice to argue that, after feasibly allocating the chores in $i-1$ components, the chores in the $i^\text{th}$ component can also be feasibly assigned such that for any pair of agents, the number of heavy (similarly, light) chores in their bundles differs by at most $1$.

To show that the number of heavy (similarly, light) chores is almost balanced in the manner stated above, let us consider the different components that arise in the meta agent's bundle:

\begin{itemize}
    \item Path of length $3$: Note that any path of length $3$ contains at most two heavy and at most two light chores. We always assign the three chores in such a path to three different agents. 
    \item Path of length $4$: A path of length $4$ must contain two heavy and two light chores. For any arrangement of these chores, it is always possible to assign to some agent one heavy and one light chore, and assign to the other two agents the remaining heavy and light chore.
    \item Path of length $2$: In this case, we assign the two chores to two different agents.
    \item Isolated vertices: The isolated vertices do not pose any constraints on feasibility.
\end{itemize}

To compute the desired assignment of the chores, we follow a component-wise strategy. At each step, we maintain the following invariant: For any pair of agents, the number of heavy (similarly, light) chores in their bundles differs by at most $1$. By case analysis for each component, it can be checked that the chores in that component can be assigned while maintaining this invariant. The lemma follows from the invariant and the multiples-of-$3$ property.
\end{proof}

\Cref{lem:Two_Agent_Subproblem,lem:Three_Agent_Subproblem} show that the two-agent and three-agent subproblems can be solved so that all chores (including the dummy chores) are allocated, i.e., the schedule is complete and hence maximal. Furthermore, each individual agent gets the same number of heavy (similarly, light) chores. Thus, before removing the dummy chores, the schedule is envy-free.

After the removal of dummy chores, the schedule remains \EF{1}. This is because the number of dummy chores assigned to any pair of agents differ by at most $1$, and, moreover, the number of heavy (respectively, light) dummy chores assigned to any pair of agents also differ by at most $1$. Therefore, removing the dummy chores preserves the \EF{1} property. Also, note that removing the dummy chores maintains the completeness of the schedule and therefore its maximality. This proves \Cref{thm:n_Agents_Identical_Dichotomous}.

\subsection{Polynomial-time Algorithm when Connected Components are Bounded}
\label{sec:bounded-components}
Finally, we consider graphs where all connected components have size at most $n$, the number of agents, and show that when the agents have identical, additive valuations, an \EF{1} and maximal schedule exists and can be efficiently computed.

\begin{restatable}[Bounded connected components and identical valuations]{theorem}{nAgentsIdenticalBoundedComponents}
There is a polynomial-time algorithm that, given any \CISP{} instance with $n$ agents with identical additive valuations and an arbitrary conflict graph with each connected component of size at most $n$, returns an \EF{1} and complete (therefore, maximal) schedule.
\label{thm:n_Agents_Identical_Bounded_Components}
\end{restatable}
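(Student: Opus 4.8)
The plan is to output, for any such instance, a \emph{complete} feasible schedule (one that assigns every chore) which is \EF{1}. Because every connected component of the conflict graph has at most $n$ vertices, completeness is always attainable, and any complete feasible schedule is automatically maximal. I will write $d_c := -v(\{c\}) \ge 0$ for the common disutility of a chore $c$ and $\ell_i := \sum_{c \in X_i} d_c$ for the total disutility of agent $a_i$'s bundle, so $v(X_i) = -\ell_i$. With identical valuations, \EF{1} is equivalent to the statement that for every ordered pair of agents $a_i, a_k$ we have $\ell_i \le \ell_k + M_i$, where $M_i := \max\big(\{d_c : c \in X_i\} \cup \{0\}\big)$ is the largest disutility in $X_i$ (and $M_i = 0$ when $X_i = \emptyset$).

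The algorithm processes the connected components one at a time, in arbitrary order, maintaining the partial schedule and the load vector $(\ell_1,\dots,\ell_n)$. To process a component $C$ with $t := \abs{C} \le n$ chores, it sorts those chores by disutility as $e_1 \ge e_2 \ge \dots \ge e_t$, sorts the $n$ agents by current load (breaking ties arbitrarily into a total order), and assigns $e_j$ to the agent with the $j$-th smallest current load, for each $j \in [t]$. Thus the $t$ currently least-loaded agents each receive exactly one chore of $C$, with larger chores going to less-loaded agents. Feasibility is immediate: the chores of $C$ land on $t$ distinct agents, so none of them conflict with one another, and chores of distinct components never conflict. Every chore is eventually assigned, so the final schedule is complete, hence maximal, and the running time is clearly polynomial.

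For correctness it remains to establish \EF{1}, which I would do by induction on the number of processed components, the invariant being exactly the condition ``$\ell_i \le \ell_k + M_i$ for every ordered pair $(a_i,a_k)$''. The base case (empty schedule) is trivial. For the inductive step, let $C$ be the component just processed, with $e_1 \ge \dots \ge e_t$ assigned to the agents of load-ranks $1,\dots,t$; write $\ell', M'$ (resp.\ $\ell, M$) for the relevant quantities before (resp.\ after) processing $C$. Fixing agents $a_i, a_k$ of load-ranks $a, b$, one splits into cases according to whether $a \le t$ (then $a_i$ gains chore $e_a$, so $\ell_i = \ell'_i + e_a$ and $M_i \ge e_a$) or $a > t$ (then $a_i$ is unchanged), and similarly for $b$. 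In each case the target inequality $\ell_i \le \ell_k + M_i$ reduces to a one-line estimate using: that $a \le b$ implies $\ell'_i \le \ell'_k$ and (when both chores exist) $e_a \ge e_b$; that $e_j \ge 0$ for all $j$; that $M'_i \le M_i$; and, in the cases where $a_i$ was already at least as loaded as $a_k$, the inductive hypothesis $\ell'_i \le \ell'_k + M'_i$. Applying the invariant at termination and removing from each nonempty $X_i$ its maximum-disutility chore yields \EF{1}.

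The one delicate point is the sub-case of the inductive step where the agent receiving a chore of $C$ is \emph{more} loaded than the agent it is compared against --- which can occur because a component with many chores may force a chore onto a currently heavier agent --- and where one must appeal to the inductive hypothesis rather than to the load order; everything else (feasibility, maximality, polynomial runtime) is routine. I also remark that the argument nowhere uses interval structure: it applies verbatim to any conflict graph all of whose connected components have at most $n$ vertices.
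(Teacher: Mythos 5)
Your proof is correct, and it follows the same high-level strategy as the paper: process the connected components one at a time and let the agents' current loads decide who receives what (under identical valuations, sorting by load is exactly a topological order of the envy graph, which is the ordering the paper invokes). The difference is in the within-component step and in the level of detail. The paper only sketches its argument---each component is allocated ``using the round-robin algorithm'' with picking order given by the topological ordering of the envy graph, and the preservation of \EF{1} is not spelled out---whereas you replace the picking procedure by an explicit assortative matching (heaviest chore to least-loaded agent) and prove the invariant $\ell_i \le \ell_k + M_i$ by the four-case induction you outline; all cases check out, and feasibility, completeness (hence maximality), and the runtime are indeed routine. Your explicit rule is not an idle variation: a literal favorite-first round robin on top of current loads can fail, e.g., with loads $10$ and $0$, a new single-chore component of disutility $10$ breaks \EF{1} if the heavier agent picks first, while a new component with disutilities $10$ and $1$ breaks \EF{1} if the lighter agent picks first and takes the chore of disutility $1$; so the direction of your matching, and the appeal to the inductive hypothesis precisely in the case where the chore lands on the currently heavier agent, is where the real content lies, and your write-up supplies exactly the argument the paper omits. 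Finally, your remark that the interval structure is never used is consistent with the paper's own claim in the contributions that the result holds for any conflict graph whose connected components have at most $n$ vertices.
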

\begin{proof}
    For each connected component with fewer than $n$ chores, we will add enough zero-valued dummy chores to make the number of chores exactly equal to $n$. The dummy chores will be removed later without affecting the \EF{1} or maximality guarantees.

    Our algorithm will proceed in a component-wise manner, in that it assigns all chores in component $i$ before proceeding to component $i+1$. For each component, the algorithm will assign each of the $n$ chores to a distinct agent. Thus, the resulting allocation is feasible and complete (therefore, maximal).

    Note that due to identical valuations, the envy graph of the partial allocation at any stage must be acyclic. Thus, the envy graph always has a topological ordering. The chores in component $i$ are assigned based on the topological ordering resulting from the partial allocation of the chores in the components $1,2,\dots,i-1$. Specifically, the source agent (i.e., an unenvied agent) is placed at the front of the ordering, while the sink (i.e., an unenvious agent) is placed at the end.

    Consider any pair of agents $h$ and $k$, and let us analyze the envy of agent $h$ towards agent $k$. Note that the initial (empty) allocation is vacuously \EF{1}. The algorithm will maintain the \EF{1} property after the assignment of chores in each component. Indeed, consider the envy of $h$ towards $k$ before and after the allocation of chores in component $i$. If $h$ is behind $k$ in the ordering before the allocation of chores in component $i$, then $h$ does not envy $k$ to begin with. After assigning the chores in component $i$, $h$ may envy $k$ (since $h$ picks a chore after $k$ does); however, the envy can be eliminated by dropped the most recent chore picked by $h$. Thus, \EF{1} is satisfied in this case. On the other hand, if $h$ is ahead of $k$ in the ordering, then it may envy $k$ to begin with, but the envy is bounded by \EF{1} (due to the invariant). Since $h$ now picks a chore before $k$, the envy continues to be bounded by \EF{1}.

    Finally, we will remove the dummy chores. Note that since the dummy chores have zero value, their removal does not affect the \EF{1} guarantee.
\end{proof}

We note that our algorithm resembles that of \citet[Proposition 2]{HH22fair}, who showed a result similar to \Cref{thm:n_Agents_Identical_Bounded_Components} for indivisible \emph{goods}. Notably, their result does not require the valuations to be identical. Their proof uses the fact that for indivisible goods, resolving an \emph{arbitrary} envy cycle preserves the \EF{1} guarantee~\citep{LMM+04approximately}. For chores, however, resolving arbitrary envy cycles can fail to preserve \EF{1}~\citep{BSV21approximate}. Thus, it is not immediately clear how their analysis can be applied to our setting when the valuations are not identical.

\section{Concluding Remarks}
We formulated the problem of fair and efficient interval scheduling of indivisible chores under conflict constraints. Despite the limited applicability of techniques from the unconstrained setting, we showed, by means of a novel coloring technique, that an \EF{1} and maximal schedule can be computed for two agents. 
Resolving the existence of \EF{1} and maximal schedules for three or more \emph{non-identical} agents for path graphs (and, more generally, for interval graphs) is an exciting open problem. It would also be interesting to consider schedules that ``minimize wastage'' (i.e., leave the fewest chores unassigned) or satisfy other notions of fairness such as proportionality, equitability, or maximin fair share.

\section*{Acknowledgments}
We are thankful to anonymous reviewers of AAMAS 2024 for their helpful feedback, and to Amit Kumar for valuable discussions during the early stages of this work. SN gratefully acknowledges the support of a MATRICS grant (MTR/2021/000367) and a Core Research Grant (CRG/2022/009169) from SERB, Govt. of India, a TCAAI grant (DO/2021-TCAI002-009), and a TCS grant (MOU/CS/10001981-1/22-23). RRS acknowledges support of the Department of Atomic Energy, Government  of India, under project no. RTI4014. RV acknowledges support from DST INSPIRE grant no. DST/INSPIRE/04/2020/000107, ANRF grant no. CRG/2022/002621, and Mr. D.P. Gupta Chair Professorship. .

\bibliographystyle{plainnat} 
\bibliography{ms}

\clearpage
\appendix
\begin{center}
    \Large{Technical Appendix}
\end{center}

\section{Proof of Theorem~\ref{thm:Two_Agents_EF1_Interval}}
\label{sec:Proof_Two_Agents_EF1_Interval}

\TwoAgentsEFoneInterval*
To prove \Cref{thm:Two_Agents_EF1_Interval}, it will suffice to show the following result.

\begin{restatable}{lemma}{AdjacentSequenceInterval}
For any \CISP{} instance with two agents and an interval graph, there exists a sequence of feasible schedules $( \X_0=(R_0,B_0), \X_1=(R_1, B_1), \dots, \X_l=(R_l,B_l))$ for some $l \in \mathbb{N}$ such that 
\begin{enumerate}
\item The last schedule is obtained by swapping
the two bundles in the first schedule. That is, $R_0=B_l$ and $B_0=R_l$.
\item For any $1 \leq i \leq l$, the two schedules $\X_i$ and $\X_{i-1}$ are adjacent.
\item For any $0 \leq i \leq l$, the schedule $\X_i$ is maximal.
\item The sequence $\X_0,\dots,\X_l$ can be constructed in polynomial time.
\end{enumerate}
\label{lem:Adjacent_sequence_Interval}
\end{restatable}

\Cref{thm:Two_Agents_EF1_Interval} follows readily from
\Cref{lem:Adjacent_EF1,lem:Adjacent_sequence_Interval}. Thus, in the rest of this section, we will focus on the proof of \Cref{lem:Adjacent_sequence_Interval}.

\begin{proof} (of \Cref{lem:Adjacent_sequence_Interval})
To create the desired sequence of schedules, we employ a three-phase procedure.
In the first phase, we construct a maximal schedule, just as in the Lemma~\ref{lem:Adjacent_sequence}.
The idea is then to gradually move chores from one bundle to another and reach a schedule with swapped bundles.
However, as seen in the proof of Lemma~\ref{lem:Adjacent_sequence}, the straightforward way of moving chores
violates maximality.
Instead, we first do a preparatory phase 2, where we further classify unassigned chores as either supported or unsupported. Unsupported chores are those that create issues with maximality during such straightforward moves. So in this phase, we perform some reassignment such that no unassigned chore remains unsupported.  
This ensures that we can then move chores between bundles in a natural way and maintain maximality,
which is phase 3.
The high level description of this procedure is given in Algorithm~\ref{algo:three_phase}.\\

\textbf{Phase 1}: In Phase 1, we will construct the first and the last schedules, namely $\X_0$ and $\X_l$, of the sequence. The first schedule $\X_0$ is the same as the first schedule $X_1$ of \Cref{lem:Adjacent_sequence}. So we have set of chores $\mrk \coloneqq \{{j_1},{j_2},\dots,{j_k}\}$ and $\overline{\mrk} \coloneqq \overline{\mrk}_1 \cup \overline{\mrk}_2 \dots \cup \overline{\mrk}_k$.\\

We define the first schedule, $\X_0 = (R_0,B_0)$, as follows:
    $$ R_0 =\{{j_h} : h \text{ is odd} \} \text{, } B_0 = \{{j_h} : h \text{ is even}\}$$ 
    $$\text{and $\overline{\mrk}$ are unassigned}.$$ 

Consequently, the last schedule, $\X_l$ must be swapping of the bundles, resulting in:
 $$R_l =\{{j_h} : h \text{ is even} \} \text{, } B_l = \{{j_h} : h \text{ is odd}\}$$ $$\text{and  $\overline{\mrk}$ are unassigned}.$$

We refer to the first schedule $\X_0$ as the \emph{source} schedule and the last schedule $\X_l$ as the \emph{target} schedule. The bundles to which any chore ${j_h}$ belongs in the schedule $\X_0$ and in the schedule $\X_l$ are called the source bundle and the target bundle of ${j_h}$, respectively. 

For a given schedule $\X_i$, chores that are assigned to either the $R_i$ or $B_i$
  bundles are referred to as assigned or colored chores. Conversely, chores not assigned to either bundle in $\X_i$  are referred to as unassigned or uncolored chores. It is important to note the distinction that the classification of chores as `marked' or `unmarked' is defined with respect to the {\em initial schedule} $\X_0$ and remains invariant throughout the process. In contrast, the classification of chores as `assigned/unassigned' or `colored/uncolored' is defined for {\em any arbitrary schedule} $\X_i$ and may change during the process.
  
  In the initial schedule $\X_0$ all marked chores are assigned, and all unmarked chores are unassigned but this need not be true for any arbitrary $\X_i$. The color of an assigned chore is defined as {\color{red}red} if it is assigned to the $R_i$ bundle and {\color{blue}blue} if it is assigned to the $B_i$ bundle in $\X_i$. \Cref{fig1} illustrates the coloring scheme.   

\begin{figure*}[h] 
  \centering\begin{tikzpicture}
    \def\rectwidth{14cm}
    \def\rectheight{5.5cm}
    
    \draw (0,0) rectangle (\rectwidth, \rectheight);
    
    \draw (7,0) -- (7, \rectheight);
    
    \def\startintervals{{5.2,4.2,2.9,3.3,2.3,1,1.2,0.5}}
    \def\endInterval{{6.5,6,5.5,5,4.5,4,3.1,2}}
    \def\height{{1,1.5,2,2.5,3,3.5,4,4.5}}
    \def\labels{{"R", "N", "N", "B", "R", "N", "B", "R"}}
    \def\colors{{"red", "black", "black", "blue", "red", "black", "blue", "red"}}
    \foreach \y in {1, 2, 3, 4, 5, 6, 7, 8}
    {
        \pgfmathsetmacro\startx{\startintervals[\y-1]}
        \pgfmathsetmacro\endx{\endInterval[\y-1]}
        \pgfmathsetmacro\h{\height[\y-1]}
        \pgfmathsetmacro\labeltext{\labels[\y-1]}
        \pgfmathsetmacro\labelcolor{\colors[\y-1]}

        \draw (\startx, \h) -- (\endx, \h);
        \draw (\startx, \h + 0.1) -- (\startx, \h - 0.1);
        \draw (\endx, \h + 0.1) -- (\endx, \h - 0.1);
        
        \node[right, text=\labelcolor] at (\endx - .4, \h+.3) {\labeltext};
    }
    \node[right] at (3, 0.4) {$X_0$};

    \def\labels2{{"B", "N", "N", "R", "B", "N", "R", "B"}}
    \def\colors2{{"blue", "black", "black", "red", "blue", "black", "red", "blue"}}
    \foreach \y in {1, 2, 3, 4, 5, 6, 7, 8}
    {
        \pgfmathsetmacro\startx{\startintervals[\y-1] + 7}
        \pgfmathsetmacro\endx{\endInterval[\y-1] + 7}
        \pgfmathsetmacro\h{\height[\y-1]}
        \pgfmathsetmacro\labeltext{\labels2[\y-1]}
        \pgfmathsetmacro\labelcolor{\colors2[\y-1]}

        \draw (\startx, \h) -- (\endx, \h);
        \draw (\startx, \h + 0.1) -- (\startx, \h - 0.1);
        \draw (\endx, \h + 0.1) -- (\endx, \h - 0.1);
        
        \node[right, text=\labelcolor] at (\endx - .4, \h+.3) {\labeltext};
    }
    \node[right] at (10, 0.4) {$X_l$};
\end{tikzpicture} 
  \caption{Pictorial representation of first and the last schedule of the sequence: Observe that in the schedule $\X_0$, all marked chores are assigned to alternating bundles represented as $R$ and $B$. Additionally, all unmarked chores are unassigned and indicated as $N$ in the figure. In the target schedule, denoted as $\X_l$, all the marked chores will be assigned to the opposite bundle, while unmarked chores will retain their unassigned status.\\}
  \label{fig1}
\end{figure*}

\emph{Feasibility and maximality of $\X_0$}: According to the definition of a marked chore, any chore ${j_h}$ can overlap with at most two other marked chores, namely, ${j_{h-1}}$ and ${j_{h+1}}$ i.e one before it and one after it.
If $h$ is even, then $h-1$ and $h+1$ must be odd, and vice versa.
Therefore, in the schedule $\X_0$, it follows that  ${j_{h-1}}$ and ${j_{h+1}}$ are not in the same bundle as ${j_h}$, affirming the feasibility of schedule $\X_0$. Additionally, every unmarked chore in the set $\overline{\mrk}_h$ must necessarily overlap with chores ${j_h}$ and ${j_{h-1}}$ for some $h$, 
which belong to different bundles in $\X_0$. 
Consequently, $\X_0$ is also maximal. Since $\X_l$ simply swaps the colors among the assigned chores in $\X_0$, it will also be feasible and maximal.\\

\begin{algorithm}[t]
 \DontPrintSemicolon
\KwIn{A \CISP{} instance $\langle \A, \C, \T, \V \rangle$ with an interval graph}
\KwOut{A sequence of schedules $\X_1,\dots,\X_k$}
 \BlankLine
\Comment{Phase 1: Initial Coloring}
\BlankLine
    Order the chores in increasing order of their finish time (break ties lexicographically) $c_1,c_2,\dots,c_m$\;

    Set $P \gets \emptyset$.\;
    \BlankLine
    \tcp{Classify the chores as marked and unmarked}
    \BlankLine
    \For{$i = 1, \dots, m$}{
        \If{$\nexists c_j \neq c_{j'} \in P: c_i$ overlaps with $c_j$, $c_{j'}$}{
            $P \gets P \cup \{ c_i \}$. \;
        }
    }
    Let $p = |P|$ and $P = \{ c_{j_1}, \dots, c_{j_p} \}$. Define the sets:
    \begin{align*}
        R &= R_0 = \{ c_{j_h} \mid h \in [p] \text{~is odd} \} , \\
        B &= B_0 = \{ c_{j_h} \mid h \in [p] \text{~is even} \} , \\
        \X &= \X_0 = (R, B) .
    \end{align*} \;
\Comment{Phase 2: Removal of unsupported chores}
\BlankLine
    \For{$i = m, \dots, 1$}{
        \If{$c_i \notin R \cup B$ is unsupported in $\X$}{
            Update $\X$ 
            according to one of the five conditions mentioned in the detailed description of Phase 2\;
        }
    }
\BlankLine
\BlankLine
\Comment{Phase 3: Swap the colors of unsettled chores}
\BlankLine
    \For{$i = 1, \dots, m$}{
        Pick the first two unsettled chores\;
        Assign the target color to the first chore\;
        Assign a feasible color to the second chore\;
    }
\KwRet{final sequence of schedules}\;
\caption{Algorithm for \EF{1} and maximality for interval graphs}
\label{algo:three_phase}
\end{algorithm}

\textbf{Phase 2}: In this phase, we will construct the first $k$ schedule (where $k$ is the number of marked chores) of the desired sequence of maximal schedules, starting from $\X_0$.
To achieve this, we first classify all unassigned chores as either supported or unsupported. Unsupported chores are those that pose challenges when we move chores straightforward from one bundle to another. To address these challenges, we design specific moves that transition unsupported chores into supported ones by modifying their status between assigned and unassigned as needed. Each move is both feasible and maintains maximality, thereby contributing to the construction of the desired sequence of maximal schedules.

\begin{definition}[Supported chores]
For any given schedule $\X = (R, B)$, we say that a chore $j \notin R \cup B$ is \emph{supported} in $\X$ if it satisfies at least one of the following three conditions:

\begin{enumerate}

\item there exist distinct chores $j^{(1)}, j^{(2)}, j^{(3)} < j$ such that $j^{(1)}, j^{(2)}, j^{(3)} \in R \cup B$ and we have that $j$ overlaps with $j^{(1)}, j^{(2)},$ and $j^{(3)}$.

\item $j \in \overline{\mrk}$ and letting $i \in [k]$ be such that $j \in \overline{\mrk_i}$, we have that $j_i \in R$ (respectively, $j_i \in B$) and there exists $j' > j$ such that $j$ and $j'$ overlap and $j' \in B$ (respectively, $j' \in R$).

\item there exist distinct $j^{(1)}, j^{(2)} > j$ such that $j^{(1)}, j^{(2)} \in R \cup B$ and $j$ overlaps with $j^{(1)}$ and $j^{(2)}$.

\end{enumerate}

Let $\supp(\X) \subseteq [m]$ be the set of all supported chores in $\X$. We also define the set $\usupp(\X) \coloneqq \overline{ R \cup B \cup \supp(\X) }$ and call the chores in $\usupp(\X)$ to be \emph{unsupported} in $\X$. Note that $R$, $B$, $\supp(\X)$, and $\usupp(\X)$ form a partition of the set $[m]$ of all chores.

\end{definition}

Furthermore, we have identified five different exhaustive scenarios in which unsupported chores may arise. For each scenario, we design specific reassignments of chores around the unsupported chores to transition them into supported chores.

Now, we will start identifying the unsupported chores $j$ and make them supported one by one from higher indexed chores to lower index. 
In the process, we will construct a sequence of maximal schedules $\X_i$ for $i \in [k]$. 
For any $ k \geq i \geq 2$, we construct $\X_{k-i+1}$ from $\X_{ k-i}$ as follows: If $\usupp(\X_{ k - i }) \cap \overline{ \mrk_{i } } = \emptyset$, then we set $\X_{k-i+1} = \X_{ k - i }$. Otherwise, let $j = \max ( \usupp(\X_{ k - i }) \cap \overline{ \mrk_{ i } } )$ be the unsupported chore with the largest index in $\X_{ k - i }$, and the following changes are done.

\begin{enumerate}[\text{Case }(1)]
\item If ${j_{i-1}}$ and ${j_{i}}$ overlap, then assign to $j$ the color of ${j_{i-1}}$ and make ${j_{i-1}}$ unassigned~
(see Figure~\ref{condition}-i as example). That is, if ${j_{i}}$ is in $B_{k-i}$ (the other case is analogous) then define 
$$B_{k-i+1} = B_{k-i} \text{ and }$$
$$R_{k-i+1} = R_{k-i} \cup \{j\} \setminus \{{j_{i-1}}\}.$$

\item If ${j_{i-1}}$ and ${j_{i}}$ do not overlap, then we will consider ${j_{i-1}}$ and ${j_{i-2}}$ 
\begin{enumerate}[(a)]
\item If ${j_{i-1}}$ and ${j_{i-2}}$ do not overlap (If ${j_{i-2}}$ is not defined we assume that it does not overlap). Then assign to $j$ the color of ${j_{i-1}}$ and assign to ${j_{i-1}}$ the color of ${j_i}$~
(see Figure~\ref{condition}-ii(a)). That is, if ${j_{i}}$ is in $B_{k-i}$ (the other case is analogous) then define
$$B_{k-i+1} = B_{k-i} \cup \{{j_{i-1}}\} \text{ and }$$
$$R_{k-i+1} = R_{k-i} \cup \{j\} \setminus \{{j_{i-1}}\}.$$

\item If ${j_{i-1}}$ and ${j_{i-2}}$ overlap with each other, consider the following scenarios: 
\begin{enumerate}[(i)]

\item Suppose there exists a chore in $ \overline{\mrk}_i$ which is unsupported in $\X_{k-i}$ and which
does not overlap with any assigned chore with a later finish time. Let ${j}' \in \overline{\mrk}_i$ be such a chore with the latest finish time.
Then, ${j}'$ is assigned the color of ${j_i}$, and ${j_i}$ is assigned the color of ${j_{i-1}}$~
(see Figure~\ref{condition}-ii(b)(i)).
That is, if ${j_{i}}$ is in $B_{k-i}$ (the other case is analogous) then define
$$B_{k-i+1} = B_{k-i} \cup \{{j}'\} \setminus \{{j_{i}}\} \text{ and }$$
$$R_{k-i+1} = R_{k-i} \cup \{{j_{i}}\}.$$

\item If there is no such unsupported chore in $\overline{\mrk}_i$ that meets case~(2)(b)(i), then assign to ${j_i}$ the color of ${j_{i-1}}$~
(see Figure~\ref{condition}-ii(b)(ii)).
That is, if ${j_{i}}$ is in $B_{k-i}$ (the other case is analogous) then define
$$B_{k-i+1} = B_{k-i} \setminus \{{j_i}\} \text{ and }$$
$$R_{k-i+1} = R_{k-i} \cup \{{j_i}\}.$$
We observed that after performing the above reorganization,
there exists a special sub-case where chore assignment in phase 3 may lead to non-maximality. To resolve this issue, we introduce an additional reassignment if such a chore is present in $\overline{\mrk}_{i}$. The special scenario is when a supported chore in $\overline{\mrk}_{i}$ that overlap with exactly three chores ${j_{i-2}}$, ${j_{i-1}}$, and ${j_i}$,
but do not overlap with any assigned chore with a later finish time.
Let ${j}'' \in \overline{\mrk}_i$ be such a chore with the latest finish time. Then we perform reassignment over the case (2)(b)(ii).
Assign to ${j}''$ the color of ${j_{i-2}}$ and make ${j_{i-2}}$ unassigned~
(see Figure~\ref{condition}-special scenario).
That is, if ${j_{i}}$ was in $B_{k-i}$ before performing case(2)(b)(ii) and now ${j_{i}}$ in $R_{k-i+1}$ (the other case is analogous) then define
$$B_{k-i+2} = B_{k-i+1} \cup \{j''\} \setminus \{{j_{i-2}}\} \text{ and }$$
$$R_{k-i+2} = R_{k-i+1}.$$

\end{enumerate}
\end{enumerate}
\end{enumerate}
This finishes the construction of the first part of the desired sequence. From the construction, it
is evident that any two consecutive schedules are adjacent. 
\begin{figure*}[h] 
  \centering\begin{tikzpicture}[scale=0.93]
    \def\rectwidth{15cm}
    \def\rectheight{4.5cm}
    \draw (0,0.5) rectangle (\rectwidth, \rectheight);
    \draw (5,0.5) -- (5, \rectheight);
    \draw (10,0.5) -- (10, \rectheight);
    
    \draw (2.5,8) -- (12.5,8);
    \draw (7.5,4.5) -- (7.5, 8);
    \draw (12.5,4.5) -- (12.5, 8);
    \draw (2.5,4.5) -- (2.5,8);
    \def\startintervals1{{2.8,3.7,3.5,5.5,4.9}}
    \def\endInterval1{{4.1,4.6,5.4,6.7,7}}
    \def\height1{{7.1,6.7,6.2,5.7,5.2}}
    \def\labels1{{"R", "B", "N", "R", "B"}}
    \def\name1{{"$j_{i-1}$", "$j_i$", "$j$", "$j_{i+1}$", "$j_{i+2}$"}}
    \def\colors1{{"red", "blue", "black", "red", "blue"}}
    \foreach \y in {1, 2, 3, 4, 5}
    {
        \pgfmathsetmacro\startx{\startintervals1[\y-1]}
        \pgfmathsetmacro\endx{\endInterval1[\y-1]}
        \pgfmathsetmacro\h{\height1[\y-1]}
        \pgfmathsetmacro\labeltext{\labels1[\y-1]}
        \pgfmathsetmacro\name{\name1[\y-1]}
        \pgfmathsetmacro\labelcolor{\colors1[\y-1]}

        \draw (\startx, \h) -- (\endx, \h);
        \draw (\startx, \h + 0.1) -- (\startx, \h - 0.1);
        \draw (\endx, \h + 0.1) -- (\endx, \h - 0.1);
        
        \node[right, text=\labelcolor] at (\endx - .4, \h+.3) {\labeltext};
        \node[right] at (\startx - .2, \h-.3) {\scriptsize \name};
    }
    \draw (5.4 - 0.35, 6.2 + 0.15) -- (5.4 + 0.05, 6.2 + 0.45);
    \draw (4.1 - 0.35, 7.1 + 0.15) -- (4.1 + 0.02, 7.1 + 0.48);
    \node[right] at (4.1 - 0.4, 7.1 + 0.64 ) {$N$};
    \node[right, text=red] at (5.4 - 0.4, 6.2 + 0.64) {$R$};
    \node[right] at (6.1, 7.75) {Case~(1)};
    
    \def\startintervals1{{7.7,8.8,9.9,9,11.1,10.8}}
    \def\endInterval1{{8.7,9.8,10.5,11,12,12.2}}
    \def\height1{{7.3,6.9,6.5,6.1,5.7,5.1}}
    \def\labels1{{"B", "R", "B", "N", "R", "B"}}
    \def\colors1{{"blue", "red", "blue", "black", "red", "blue"}}
    \def\name1{{"$j_{i-2}$","$j_{i-1}$", "$j_i$","$j$", "$j_{i+1}$", "$j_{i+2}$"}}
    \foreach \y in {1, 2, 3, 4, 5, 6}
    {
        \pgfmathsetmacro\startx{\startintervals1[\y-1]}
        \pgfmathsetmacro\endx{\endInterval1[\y-1]}
        \pgfmathsetmacro\h{\height1[\y-1]}
        \pgfmathsetmacro\labeltext{\labels1[\y-1]}
        \pgfmathsetmacro\labelcolor{\colors1[\y-1]}
        \pgfmathsetmacro\name{\name1[\y-1]}

        \draw (\startx, \h) -- (\endx, \h);
        \draw (\startx, \h + 0.1) -- (\startx, \h - 0.1);
        \draw (\endx, \h + 0.1) -- (\endx, \h - 0.1);
        
        \node[right, text=\labelcolor] at (\endx - .4, \h+.3) {\labeltext};
        \node[right] at (\startx - .2, \h-.3) {\scriptsize \name};
    }
    \draw (11 - 0.35, 6.1 + 0.15) -- (11 + 0.02, 6.1 + 0.48);
    \draw (9.8 - 0.35, 6.9 + 0.15) -- (9.8 + 0.05, 6.9 + 0.45);
    \node[right, text=red] at (11 - 0.4, 6.1 + 0.64 ) {$R$};
    \node[right, text=blue] at (9.8 - 0.4, 6.9 + 0.64) {$B$};
    \node[right] at (10.7, 7.75) {Case~(2)(a)};

    \def\startintervals1{{0.2,0.8,1.7,1.45,1.4,3.7,3.5}}
    \def\endInterval1{{1.15,1.6,2.7,3.2,3.6,4.5,4.8}}
    \def\height1{{3.9,3.5,3.0,2.5,2.0,1.6,1.1}}
    \def\labels1{{"B", "R", "B", "N", "N", "R", "B"}}
    \def\colors1{{"blue", "red", "blue", "black", "black", "red", "blue"}}
    \def\name1{{"$j_{i-2}$","$j_{i-1}$", "$j_i$", "$j'$", "$j$", "$j_{i+1}$", "$j_{i+2}$"}}
    \foreach \y in {1, 2, 3, 4, 5, 6, 7}
    {
        \pgfmathsetmacro\startx{\startintervals1[\y-1]}
        \pgfmathsetmacro\endx{\endInterval1[\y-1]}
        \pgfmathsetmacro\h{\height1[\y-1]}
        \pgfmathsetmacro\labeltext{\labels1[\y-1]}
        \pgfmathsetmacro\labelcolor{\colors1[\y-1]}
        \pgfmathsetmacro\name{\name1[\y-1]}
        
        \draw (\startx, \h) -- (\endx, \h);
        \draw (\startx, \h + 0.1) -- (\startx, \h - 0.1);
        \draw (\endx, \h + 0.1) -- (\endx, \h - 0.1);
        
        \node[right, text=\labelcolor] at (\endx - .4, \h+.25) {\labeltext};
        \node[right] at (\startx - .2, \h-.3) {\scriptsize \name};
    }
    \draw (2.7 - 0.35, 3 + 0.10) -- (2.7 + 0.02, 3 + 0.43);
    \draw (3.2 - 0.35, 2.5 + 0.15) -- (3.2 + 0.05, 2.5 + 0.45);
    \node[right, text=red] at (2.7 - 0.4, 3 + 0.61 ) {$R$};
    \node[right, text=blue] at (3.2 - 0.4, 2.5 + 0.64) {$B$};
    \node[right] at (2.85, 4.15) {Case~(2)(b)(i)};

    \def\startintervals1{{0.2,0.8,1.9,1.4,3.5,3.3}}
    \def\endInterval1{{1.15,1.8,2.9,3.4,4.3,4.5}}
    \def\height1{{3.5,3.1,2.6,2.1,1.7,1.2}}
    \def\labels1{{"B", "R", "B", "N", "R", "B"}}
    \def\colors1{{"blue", "red", "blue", "black", "red", "blue"}}
    \def\name1{{"$j_{i-2}$","$j_{i-1}$", "$j_i$", "$j$", "$j_{i+1}$", "$j_{i+2}$"}}
    \foreach \y in {1, 2, 3, 4, 5, 6}
    {
        \pgfmathsetmacro\startx{\startintervals1[\y-1]+5}
        \pgfmathsetmacro\endx{\endInterval1[\y-1]+5}
        \pgfmathsetmacro\h{\height1[\y-1]}
        \pgfmathsetmacro\labeltext{\labels1[\y-1]}
        \pgfmathsetmacro\labelcolor{\colors1[\y-1]}
        \pgfmathsetmacro\name{\name1[\y-1]}

        \draw (\startx, \h) -- (\endx, \h);
        \draw (\startx, \h + 0.1) -- (\startx, \h - 0.1);
        \draw (\endx, \h + 0.1) -- (\endx, \h - 0.1);
        
        \node[right, text=\labelcolor] at (\endx - .4, \h+.25) {\labeltext};
        \node[right] at (\startx - .2, \h-.3) {\scriptsize \name};
    }
    \draw (7.9 - 0.35, 2.6 + 0.10) -- (7.9 + 0.02, 2.6 + 0.43);
    \node[right, text=red] at (7.9 - 0.4, 2.6 + 0.61 ) {$R$};
    \node[right] at (7.75, 4.15) {Case~(2)(b)(ii)};

    \def\startintervals1{{0.2,0.8,1.7,0.5,1.4,3.7,3.5}}
    \def\endInterval1{{1.15,1.6,2.7,3.2,3.6,4.5,4.8}}
    \def\height1{{3.5,3.1,2.6,2.2,1.7,1.3,0.9}}
    \def\labels1{{"B", "R", "B", "N", "N", "R", "B"}}
    \def\colors1{{"blue", "red", "blue", "black", "black", "red", "blue"}}
    \def\name1{{"$j_{i-2}$","$j_{i-1}$", "$j_i$", "$j''$", "$j$", "$j_{i+1}$", "$j_{i+2}$"}}
    \foreach \y in {1, 2, 3, 4, 5, 6, 7}
    {
        \pgfmathsetmacro\startx{\startintervals1[\y-1] +10}
        \pgfmathsetmacro\endx{\endInterval1[\y-1] +10}
        \pgfmathsetmacro\h{\height1[\y-1]}
        \pgfmathsetmacro\labeltext{\labels1[\y-1]}
        \pgfmathsetmacro\labelcolor{\colors1[\y-1]}
        \pgfmathsetmacro\name{\name1[\y-1]}

        \draw (\startx, \h) -- (\endx, \h);
        \draw (\startx, \h + 0.1) -- (\startx, \h - 0.1);
        \draw (\endx, \h + 0.1) -- (\endx, \h - 0.1);
        
        \node[right, text=\labelcolor] at (\endx - .4, \h+.25) {\labeltext};
        \node[right] at (\startx - .2, \h-.3) {\scriptsize \name};
    }
    \draw (11.15 - 0.35, 3.5 + 0.10) -- (11.15 + 0.02, 3.5 + 0.43);
    \draw (12.7 - 0.35, 2.6 + 0.10) -- (12.7 + 0.02, 2.6 + 0.43);
    \draw (13.2 - 0.35, 2.2 + 0.15) -- (13.2 + 0.05, 2.2 + 0.45);
    \node[right, text=black] at (11.15 - 0.4, 3.5 + 0.61 ) {$N$};
    \node[right, text=red] at (12.7 - 0.4, 2.6 + 0.61 ) {$R$};
    \node[right, text=blue] at (13.2 - 0.4, 2.2 + 0.64) {$B$};
    \node[right] at (12.05, 4.15) {special scenario};

\end{tikzpicture} 
  \caption{Visual illustration of all five scenarios along with the reassignment guidelines for unsupported chores. Letters with a cancel sign represent the color previously assigned to the chore, while the color mentioned above it reflects the assignment after reassignment.}
  \label{condition}
\end{figure*}
Before moving on to phase 3, we prove certain desired properties of the sequence $(\X_0, \X_1, \dots, \X_{k})$. 
\begin{claim}
\label{claim:c-supported}
  For any $i,h$, where $1 \leq i,h \leq k+1 $, if a chore ${j_h}$ $\in$ $\mrk$ is unassigned in the schedule $\X_{k-i+1}$, 
  then ${j_h}$ overlaps with two assigned chores in $\X_{k-i+1}$
  whose finish times are later than ${j_h}$. Hence, ${j_h}$ is supported in $\X_{k-i+1}$.
\end{claim}

\begin{claim}
\label{claim:u-supported}
    For any $i,h$, where $1 \leq i,h \leq k+1$, if the chore ${j_h} \in \mrk$ is not in its source bundle in schedule $\X_{k-i+1}$, then for any $p\geq h$, all the unassigned chores in $\overline{\mrk}_p$ are supported in $\X_{k-i+1}$.
\end{claim}  

\begin{claim}
\label{claim:maximal}
  For any $i$, where $1 \leq i \leq k+1$, the schedule $\X_{k-i+1}$ is feasible and maximal.
\end{claim}

\begin{proof} 
We will prove the three claims through an induction on $i$.

\textbf{Base case}: $i=k+1$. In schedule $\X_0$, each ${j_h}$ is assigned and is in its source bundle by definition. 
Hence, Claims~\ref{claim:c-supported} and \ref{claim:u-supported} are vacuously true. 
And we have already argued the feasibility and maximality of $\X_0$.\\

\textbf{Induction hypothesis}: the three statements are true for the schedule $\X_{k-i}$.\\

\textbf{Induction step}:
Recall that if the schedule $\X_{k-i}$ has no unsupported chores from $\overline{\mrk}_{i}$, then 
$\X_{k-i+1} = \X_{k-i}$. And thus, the three statements will be true for  $\X_{k-i+1}$ as well. 

Now, consider the case when $\X_{k-i}$ has unsupported chores from $\overline{\mrk}_{i}$. 
Then from, induction hypothesis for Claim~\ref{claim:u-supported}, we know that ${j_i}, {j_{i-1}}$, and  ${j_{i-2}}$
are in their source bundles in $\X_{k-i}$.
Also, all chores in $\overline{\mrk}_i$ must be unassigned in $\X_{k-i}$ because it is feasible (from Claim~\ref{claim:maximal}). 

Now, we go over various cases considered in the construction of  $\X_{k-i+1}$.
Recall that $j \in \overline{\mrk}_i$ is the chore that is unsupported in schedule $\X_{k-i}$ and has the latest finish time. 
 
\begin{enumerate}[\text{Case }(1)]
\item ${j_i}$ and ${j_{i-1}}$ overlap each other. In this case, we had the following reassignment: $j \rightarrow color({j_{i-1}})$ and ${j_{i-1}} \rightarrow$ \texttt{unassigned}.\\

For claim~\ref{claim:c-supported}, observe that ${j_{i-1}}$ overlaps with ${j_i}$ and $j$, which are both assigned in $\X_{k-i+1}$. The status of any other ${j_h} $ in $\X_{k-i+1}$ remains same as in $\X_{k-i}$. Hence, the claim follows.\\

For claim~\ref{claim:u-supported}, observe that any chore $j' \in \overline{\mrk}_{i-1}$ overlaps with ${j_i}$ and $j$, and hence satisfies support condition (3). Any chore $j'' \in \overline{\mrk}_i$ having finishing time before $j$ overlaps with ${j_i}$ and $j$ (which are in opposite bundles), and hence satisfies support condition (2). If chore $j'' \in \overline{\mrk}_i$ having finishing time after $j$, it must be supported in $\X_{k-i}$ and will be supported in $\X_{k-i+1}$. If $j''$ is supported due to conditions 2 and 3 then it will supported due to the same condition as we are not changing the bundles for $j_i$ and any $j_h$ where $h\geq i$. if $j''$ is supported due to condition 1 by overlapping with ${j_{i-2}}, {j_{i-1}}, {j_{i}}$, is now supported in $\X_{k-i+1}$ by overlapping with three assigned chores $ {j_{i-2}}, j_i, {j}$.
Any $j''' \in \overline{\mrk}_q$ for $q>i$, if $j'''$ overlaps with $j{i-1}$ in $X_{k-i}$ then in $\X_{k-i+1}$ it will be replaced by assigned chore $j$. 
Any other $j' \in \overline{\mrk}_q$, if it does not overlap with $j_{i-1}$ then there is no change with respect to $j'$. Consequently, $j'$ has the same status in $\X_{k-i+1}$ as $\X_{k-i}$. Hence, the claim follows.\\

For claim~\ref{claim:maximal}, observe that
since $j$ is unsupported in $\X_{k-i}$, 
 we know it does not overlap with any assigned chore with a later finish time, which is in the same bundle as ${j_{i-1}}$.
 Hence, the assignment of $color({j_{i-1}})$ to $j$ is feasible. 
 Leaving ${j_{i-1}}$ unassigned is also maximal, as it overlaps with both bundles: ${j_i}$, which is in the opposite bundle of ${j_{i-1}}$, and 
 $j$, which is in the bundle of ${j_{i-1}}$. 
Any chore $j'\in \overline{\mrk}_{i-1}$ or $j'' \in \overline{\mrk}_i$ other than $j$, 
overlaps with $j$ and ${j_i}$. Hence, their being unassigned is fine. 
 For any other chores, their status w.r.t. maximality are same in $\X_{k-i}$ and $\X_{k-i+1}$.\\

\item ${j_i}$ and ${j_{i-1}}$ do not overlap with each other, and further, we consider ${j_{i-1}}$ and ${j_{i-2}}$.
\begin{enumerate}[\text(a)]
\item ${j_{i-1}}$ and ${j_{i-2}}$ does not overlap. 
We proceeded with the following color reassignment: $j \rightarrow color({j_{i-1}})$ and ${j_{i-1}} \rightarrow color({j_i})$.
Thus, in schedule $\X_{k-i+1}$, $j$ is in the opposite bundle of ${j_i} $ and ${j_{i-1}}$.

Claim~\ref{claim:c-supported} is true because no new ${j_h}$ is unassigned. 

For Claim~\ref{claim:u-supported}, observe that any $j' \in \overline{\mrk}_{i-1}$ is supported because it overlaps with ${j_i}$ and $j$, which are in opposite bundles. 
Any $j'' \in \overline{\mrk}_i$ having finishing time before $j$ overlaps with ${j_i}$ and $j$ (which are in opposite bundles), and hence satisfies support condition (2). If chore $j'' \in \overline{\mrk}_i$ having finishing time after $j$, it must be supported in $\X_{k-i}$ and will be supported in $\X_{k-i+1}$. If $j''$ is supported due to conditions 2 and 3 then it will supported due to the same condition as we are not changing the bundles for $j_i$ and any $j_h$ where $h\geq i$. if $j''$ is supported due to condition 1 by overlapping with ${j_{i-2}}, {j_{i-1}}, {j_{i}}$, is still supported as all three are still assigned in $\X_{k-i+1}$. 
And any $j''' \in \overline{\mrk}_q$ for $q > i$ has the same status in $\X_{k-i+1}$ as in $\X_{k-i}$(similar explanation as in case (1)).\\

For Claim~\ref{claim:maximal}, observe that the
assignment of  ${j_{i-1}}$ in bundle of ${j_i}$ is feasible because ${j_{i-1}}$ does not overlap with ${j_i}$ and ${j_{i-2}}$, and thus, does not overlap 
with any assigned chores. 
Since $j$ is unsupported in $\X_{k-i}$, it does not overlap with any chore with a later finish time which is in the same bundle as ${j_{i-1}}$.
Hence, putting $j$ in the earlier bundle of ${j_{i-1}}$ is fine. 
  As argued in the previous paragraph, any $j' \in \overline{\mrk}_{i-1}$ or $j'' \in \overline{\mrk}_i$ (other than $j$)
  overlap with two chores from opposite bundles. And hence, their being unassigned is fine.\\

\item  ${j_{i-1}}$ and ${j_{i-2}}$ overlap with each other. 
\begin{enumerate}[(i)]
    \item Further there exists an unsupported chore $j' \in \overline{\mrk}_i$ that does not overlap with any assigned chore with a later finish time. We proceeded with the reassignment $j' \rightarrow color({j_i})$ and ${j_i} \rightarrow color({j_{i-1}})$.\\

Claim~\ref{claim:c-supported} is true because no new ${j_h}$ is unassigned.\\

For Claim~\ref{claim:u-supported}, consider any chore $j'' \in \overline{\mrk}_i$. 
If $j''$ has a finish time earlier than $j'$, then it is supported because it overlaps with ${j_i}$ and $j'$ with a later finish time and from the opposite bundle. 
If $j''$ has a finish time later than $j'$, then it is supported because it overlaps with three assigned chores with earlier finish times, ${j_{i-1}}, {j_{i}}, j'$. 
Consider any chore $j''' \in \overline{\mrk}_q$ for $q > i$. If $j'''$ was supported in $\X_{k-i}$ due to its overlap with ${j_i}$, now in $\X_{k-i+1}$,
$j'$ can take place of ${j_i}$.\\

 For Claim~\ref{claim:maximal}, observe that $j'$ is unsupported in $\X_{k-i}$, hence only overlaps with two assigned chores ${j_{i-1}}$ and ${j_i}$ from opposite bundles. And ${j_i}$ does not overlap with ${j_{i-1}}$. Thus, the reassignment keeps the schedule feasible. 
  Any chore $j'' \in \overline{\mrk}_{i-1}$ or $j''' \in \overline{\mrk}_{i}$ (other than $j'$) overlaps with ${j_{i-1}}$ and $j'$ from opposite bundles, and hence can remain unassigned.\\

\item Chores ${j_i}$, ${j_{i-1}}$, and ${j_{i-2}}$ are positioned like case~(2)(b)(i).
Additionally, there is no unsupported $j' \in \overline{\mrk}_i$ that does not overlap with any assigned chore with a later finish time. 
We proceeded with the color reassignment: ${j_{i}} \rightarrow color({j_{i-1}})$.\\

Claim~\ref{claim:c-supported} is true because no new ${j_h}$ is unassigned.\\

For Claim~\ref{claim:u-supported}, consider any chore $j'' \in \overline{\mrk}_i$. 
If $j''$ was supported in $\X_{k-i}$ by overlapping with three assigned chores with earlier finish time,
then that will also hold true in $\X_{k-i+1}$ as all assigned chores are still assigned. 
Consider $j''' \in \overline{\mrk}_i$ which is unsupported in $\X_{k-i}$ and overlaps with an assigned chore with a later finish time, say $j_h$. 
Because $j'''$ is unsupported, $j_h$ and ${j_{i}}$ must be in the same bundle in $X_{k-i}$ and $j'''$
must not overlap any other chore from a later finish time and in the opposite bundle. 
We can conclude that any unsupported chore $j''' \in \overline{\mrk}_i$ must overlap with $j_h$. 
In schedule $X_{k-i+1}$, $j_h$ and ${j_i}$ are in opposite bundles, and hence, $j'''$ is supported. 
Similarly, any chore $j''' \in \overline{\mrk}_{i+1}$ overlaps with $j_h$ and ${j_i}$, and hence, is supported.\\

For Claim~\ref{claim:maximal}, observe that in schedule $\X_{k-i}$, the chore ${j_{i}}$ neither overlaps with ${j_{i-1}}$,
nor with any other chore in the same bundle. Hence, it is feasible to put it in the bundle of ${j_{i-1}}$.
As argued in the previous paragraph, any $j'' \in \overline{\mrk}_i$ overlaps with two chores in opposite bundles and, hence, can be left unassigned.\\

 Further, after performing case (2)(b)(ii) if the special scenario arise where there is a chore $j'' \in \overline{\mrk}_i$ that overlaps with chores ${j_{i-2}}, {j_{i-1}}, {j_{i}}$,
but does not overlap with any assigned chore with a later finish time.
Then we reassigned colors as follows: $j'' \rightarrow color({j_{i-2}})$ and ${j_{i-2}} \rightarrow$ \texttt{unassigned}.\\

For Claim~\ref{claim:c-supported}, observe that ${j_{i-2}}$ is the newly unassigned chore and it
overlaps with ${j_{i-1}}$ and $j''$ from opposite bundles. Hence, ${j_{i-2}}$ is supported.\\

For Claim~\ref{claim:u-supported}, observe that any chore $j''' \in \overline{\mrk}_{i-2}$ overlaps with ${j_{i-1}}$ and $j''$, which have later finish times, and hence is supported.
Any chore $j''' \in \overline{\mrk}_{i-1}$ overlaps with ${j_{i-1}}$ and $j''$ from opposite bundles, and hence, is supported. 
Any chore $j''' \in \overline{\mrk}_{i}$ whose finish time is between ${j_i}$ and $j''$ is similarly supported by overlapping with ${j_i}$ and $j''$.
Any chore $j''' \in \overline{\mrk}_i$ whose finish time is later than $j''$ is supported because it overlaps with assigned chores ${j_{i-1}}, {j_i}, j''$
with earlier finish times. 
For any chore $j''' \in \overline{\mrk}_q$ for $q > i$, 
if it was supported in $\X_{k-i}$ by overlap with ${j_i}$, then $j''$ can take the place of ${j_i}$.\\

For Claim~\ref{claim:maximal}, observe that since ${j_i}$ has been given $color({j_{i-1}})$ and ${j_{i-2}}$ is unassigned,
it is feasible to give $color({j_{i-2}})$ to $j''$.
Similarly, we can leave ${j_{i-2}}$ unassigned because it overlaps with $j''$ and ${j_{i-1}}$, which are in opposite bundles.
For any unassigned $j''' \in \overline{\mrk}_{i-2}, \overline{\mrk}_{i-1}, \overline{\mrk}_{i}$, we have argued in the previous paragraph that they overlap with two chores in opposite bundles. 

    \end{enumerate}
\end{enumerate}
\end{enumerate}
\end{proof}

 From the three claims, it is clear that at the end of phase 2, there will be no unsupported unassigned chores.

Following the reassignment, several changes occur: some assigned chores become unassigned, some unassigned chores get assigned, and certain assigned chores achieve their target bundles. We label the chores that are in the same bundle in $\X_k$ as their target bundles and 
the chores that are unassigned in both $\X_k$ and $\X_l$ as 
 ``settled chores". All other chores are labeled as ``unsettled chores".
Before proceeding to phase 3, it is essential to establish one more claim, which will assist us in verifying the feasibility of constructed schedules in phase 3.
\begin{claim}
\label{claim:feasible}
    Let ${j_i} \in \mrk$ be an unsettled chore and let $\overline{j}$ be another chore with a later finish time, which is in the target bundle of ${j_i}$ in schedule $\X_k$.  
    If $\overline{j}$ overlaps with ${j_i}$, then $\overline{j}$ must be unsettled and must be the immediately next unsettled chore after ${j_i}$ (in order of finish times).
\end{claim}
\begin{proof}
    Assuming there are no unsupported chores in $\X_0$, this implies that no reassignment of chores happens and we have $\X_k = \X_0$. It follows that all unassigned chores are settled, while all assigned chores are unsettled. In the $\X_0$ schedule, the only assigned chore with a later finish time that overlaps with ${j_i}$ is ${j_{i+1}}$.

However, in cases involving unsupported chores, color reassignments are performed. We will verify the claim for each individual case one by one. 
\begin{enumerate}

    \item The case where ${j_i}$ and ${j_{i-1}}$ overlap with each other, we perform the following color reassignments: ${j_{i-1}}$ is left unassigned, and $j$ is assigned the color of ${j_{i-1}}$. In this scenario, the unsettled marked chores for which something changed are $j_{i-2}$, ${j_{i-1}}$and, $j_i$. Where $j_{i-2}$ overlaps with ${j_{i-1}}$ that is unassigned, other than this it may overlap some unassigned supported chores only. So $j_{i-2}$ posses no issue. Further $j_{i-1}$ overlaps with $j_i$ (of the target color), but it will be the immediately next unsettled chore because any unassigned chore in $\overline{\mrk}_{i-1}$ retains its status, and remains unassigned. Therefore, the immediately next unsettled chore after ${j_{i-1}}$ is ${j_{i}}$. Furthermore, $j$ becomes assigned but with a color that is not the target color of ${j_{i-1}}$. Importantly, $j$ does not overlap with any assigned chore other than ${j_i}$ and ${j_{i-1}}$ from earlier finishing chores. Additionally, ${j_i}$ gets an unsettled chore immediately after it. Since $j$ does not overlap with ${j_{i+1}}$, it follows that ${j_i}$ also does not overlap with ${j_{i+1}}$. Therefore, the unsettled chore immediately after ${j_i}$ is only $j$. Thus, the claim holds for this case.

    \item In the case where 
    ${j_{i}}$ and ${j_{i-1}}$ are not overlapping with each other and furthermore, ${j_{i-1}}$ does not overlap with ${j_{i-2}}$, the situation is similar to the first case. $j_{i-2}$ does not overlap with any assigned chores from later finish time. None of the chores from $\overline{\mrk}_{i-1}$ change their status; they all remain settled. Since ${j_{i-1}}$ is assigned the target color, it is also considered settled. The only unsettled marked chore is ${j_i}$ for which something changed, which overlaps with $j$ (of the target color). However, $j$ must be an immediately next unsettled chore since every chore between these two becomes supported after this assignment, and, as per the definition of unsupported $j$, ${j_i}$ must not overlap with ${j_{i+1}}$. This confirms the claim being true for this case.   
 
    \item In the last case where 
    ${j_i}$ and ${j_{i-1}}$  are not overlapping each other and ${j_{i-1}}$ overlaps with ${j_{i-2}}$. In the first case of 2(b)(i) the chores for which something changed are $j_{i-1}$ and, $j_i$. $j_{i-1}$ overlaps with $j$ having target color, so it should be immediately next unsettled chore for $j_{i-1}$. This is true because chores from $\overline{\mrk}_{i-1}$ and $\overline{\mrk}_i$ retains its status, and remains unassigned. Chore $j_i$ gets its target color and is settled. Hence, the claim holds for this case. In the second case, 2(b)(ii) the situation changed for only $j_i$ and it became settled. For any other unsettled chores situations are unchanged so the claim will be true for this case. In the last case of the special scenario, it is analogous to the first case when ${j_i}$ and ${j_{i-1}}$ overlap each other. We can observe the claim in the same way as the first case. This completes the proof that, after phase 2, every unsettled assigned chore overlaps only with the immediately next unsettled chore of the target color from the later finish time.
\end{enumerate}
    
\end{proof}

\paragraph{\textbf{Phase 3.}} In this phase, we will construct the second part of the desired sequence of maximal schedules, starting from $\X_k$.
This will be done by assigning the target bundle to the unsettled chores. We will start by identifying the unsettled chores and assign them to their target bundle one by one, from lower-indexed chores to higher-indexed ones. In the process, we will construct a sequence of maximal schedules 
$\X_i$ from  $\X_k$ to $\X_l$.
For any $0 \leq i \leq l-k-1$, we construct $\X_{k+i+1}$ from $\X_{k+i}$ as follows.
If there is no unsettled chore in schedule $\X_{k+i}$, then we are done.
Otherwise, let $c$ be the earliest finishing unsettled chore in schedule $\X_{k+i}$. Then assign $c$ to the target bundle and assign a feasible bundle to the next unsettled chore. A feasible bundle here means that if the chore can get assigned without violating any feasibility condition, then assign that bundle otherwise make it unassigned.

Since we assign at least one chore to its target bundle in each step, it follows that the process will conclude within a linear number of steps.  

\emph{Feasibility and maximality in Phase 3:} Now, we shall establish the feasibility and maximality of schedules in Phase 3. After completing Phase 2, we obtain a schedule $\X_k$ that is both feasible and maximal. In Phase 3, we identify the unsettled chore $c$ from the earliest finish time and reassign its bundle. Further, we assign a feasible bundle to their immediately next unsettled chore.

It is important to note that $c$ never belongs to $\overline{\mrk}$ chores because $c$ is the earliest unsettled chore, and it is from $\overline{\mrk}$, which means it must be assigned ( because if it were unassigned, it would be settled ). Further $c$ also overlaps with two marked chores from earlier finish time ( because $c \in \overline{\mrk} $). These chores are settled and must belong to the opposite bundle. This would violate the feasibility of $c$. 
On the other hand, if $c \in {\mrk}$, assigning it a target bundle is feasible, as per the claim ~\ref{claim:feasible}. This claim established that every unsettled marked chore overlaps only with its immediately next unsettled chore, which can share the same bundle as its target bundle. Since phase 3 involves reassigning bundles to the earliest finish time unsettled chores, and the immediate next chore that can have the same bundle as the target bundle of $c$, this ensures the feasibility of assigning the target bundle to $c$. Assigning a bundle to the immediately next unsettled chore poses no issues, as we can assign any feasible bundle that meets the criteria.

Now, we must demonstrate that all unassigned chores will satisfy the maximality criteria after the reassignments. As we show after Phase 2, every unassigned chore will be supported. That means all unassigned chores must follow one of the three criteria of being supported. Further, we will demonstrate for each case how it will satisfy maximality criteria.
\begin{enumerate}
    \item In the first scenario, let an unassigned chore $c$ be supported due to the third condition, where $c$ overlaps with two assigned chores whose finish times are later than $c$, it is apparent that both assigned chores must belong to different bundles. This distinction is necessary since they must overlap with each other. In phase 3, our color reassignment is from lower indexed chores to higher indexed chores, and thus, we do not alter the status of these particular assigned chores until all the earlier finish time chores are set to their target bundle. In this situation, if $c \in \overline{\mrk}_i$, it must overlap with at least two marked chores $j_i$ and,$j_{i-1}$. So before altering the color of two specific assigned chores from a later finish time, we make $j_i$ and, $j_{i-1}$ settled(They must be in different bundles). That fulfills the maximality requirement of $c$. Conversely, if $c \in \mrk$, it will get assigned its target bundle and will not require maximality.
    
    \item In the second scenario, when an unassigned chore $c$ is supported due to the second condition. Where $c \in \overline{\mrk}_i$ and overlaps ${j_i}$ from bundle $R$ ( respectively $B$ ) and also overlaps with one assigned chore $j'$ from a later finish time from bundle $B$ ( respectively $R$ ). As $c \in \overline{\mrk}$ it also overlap with $j_{i-1}$ and $j_i$. The maximality of $c$ is due to $j_i$ and $j'$. Now if $j_i$ is already settled, we will not reassign its bundle. Before reassigning the bundle of $j'$, both $j_i$ and $j_{i-1}$ will be settled, which ensures the requirement of maximality. If $j_i$ is unsettled then the bundle of $j_i$ and the target bundle of $j_{i-1}$ are the same. Therefore before reassignment of $j_i$, $j_{i-1}$ will get settled that means $j_{i-1}$ will replace $j_i$. Further, before $j'$, $j_i$ will get settled, So $j_i$ and $j_i-1$ will ensure the maximality of $c$.

    \item In the final scenario, when an unassigned chore $c$ is supported solely due to condition 1, it implies that conditions 2 and 3 are not satisfied. $c$ overlaps with three or more assigned chores whose finish time is earlier than the $c$. If $c$ also meets any of the other conditions of supported chores then we already discussed how $c$ satisfies maximality criteria due to other conditions. We can consider the simplest case where the last three assigned chores are from $\mrk$, and they are assigned to their source bundles in $\X_k$, which means they have alternating assigned bundles. In this scenario, it is straightforward to confirm that during each step of phase 3, $c$ will overlap with two assigned chores from different bundles. Now, let's consider another scenario in which the last three assigned chores are from $\mrk$, but their colors are modified, signifying that they do not belong to alternating bundles. One can easily verify that this scenario arises due to the reassignment in case 2(b)(ii) only, where we reassign the bundle of ${j_i} \in \mrk$ but do not assign a color to $j \in \overline{\mrk}_i$. In any other case, we assign a color to either $j \in \overline{\mrk}_i$ or $j' \in \overline{\mrk}_i$. In the scenario in which the last three assigned chores are from $\mrk$, but their colors are modified, we may encounter different color sequences of the three assigned chores. Observing the structure carefully, we see that no supported unassigned chore due to the overlap of three chores with sequences of $B, B, B$, or $R, R, R$. This is because, in such a situation, $j$ will lose its feasibility. However, we have already demonstrated that we consistently maintain feasibility in phase 2. Therefore, the only situation where an issue arises is when the color sequence of the assigned chores is $B, R, R$ and $R, B, B$. This is the exact special scenario that we consider after the reassignment of case(2)(b)(ii). In this case, we reassign the bundle of such a supported unassigned chore.\\
    Moving on, another scenario arises when at least one of the three assigned chores belongs to set $\overline{\mrk}$. Let's assume this specific assigned chore is ${p}' \in \overline{\mrk}_i$. This reassignment can arise from any case except 2(b)(ii). Furthermore, we will elucidate how maximality is upheld in each case. If the reassignment is due to case 1, if the unassigned chore $c \in \overline{\mrk}_{i+2}$ then it will overlap with assigned ${j_i}$, $p'$,$j_{i+1}$ and, $j_{i+2}$. Similarly if the unassigned chore $c \in \overline{\mrk}_{i+1}$ then it will overlap with unassigned ${j_{i-1}}$, assigned $j_i$, $p'$ and,$j_{i+1}$. One can verify that in both scenarios during phase 3, two of these four chores must be assigned to the opposite bundles, ensuring the maximality condition for the unassigned chore $c$. If the reassignment is due to case 2(a), the scenario closely resembles case 1. A similar analysis indicates that two of these overlapping chores must consistently be assigned to opposite bundles during phase 3, which ensures the maximality condition of $c$.

    If the reassignment is due to case 2(b)(i), and $c \in \overline{\mrk}_i$ having finish time later than $p'$. In this scenario $c$ overlap with $j_{i-1}$, $j_i$ and, $p'$. Where $j_i$ is in its target bundle and $p'$ will be in the opposite bundle of $j_i$, that means maximality of $c$ hold due to $j_i$ and $p'$. So in phase 3 before assigning the target bundle to $p'$ we will assign the target bundle to $j_{i-1}$ that is the opposite bundle of $j_i$. So now maximality of $c$ holds due to $j_{i-1}$ and $j_i$. Further, if $c$ belongs to any other unmarked set then the scenario is similar to case 1 or case 2(a). So a similar analysis will hold.

     Finally, in the case of the special scenario of 2(b)(ii), this case is very similar to 2(b)(i). So a similar analysis reveals that the maximality of $c$ is maintained in this case as well. With this, we have covered all sub-cases of unassigned chores that overlap with three or more chores.
    
\end{enumerate}
This concludes the verification of the schedule's feasibility and maximality across all steps.\\ 

\textbf{\emph{Adjacent:}} After completing phase 1, we obtain a feasible and maximal schedule. In each step of phase 2, we reassign colors based on four different cases and a special case. It's worth noting that each of these cases is adjacent to the previously assigned schedule. Moving on to phase 3, in each step, we reassign bundles to only two chores. If these two chores have different bundles, then the reassigned schedules will also be adjacent to the previous schedule.

Now, two adjacent chores cannot reassigned to the same bundle because if they overlap, assigning the same bundle to both is not feasible. If they do not overlap, there is no need to reassign the bundle to the second chore, as the initially assigned bundle would be feasible for it. This ensures that each phase of the steps is adjacent to the preceding one, leading to the conclusion that each $\X_i$ is adjacent to $\X_{i+1}$. Therefore, the sequence of schedules is adjacent to one another.
\end{proof}

\clearpage
\end{document}